\pgfplotsset{compat=newest}
\newtheorem{theorem}{Theorem}
\newtheorem{lemma}{Lemma}
\newtheorem{definition}{Definition}
\newtheorem{example}{Example}
\newtheorem{remark}{Remark}
\newcommand{\wt}{ \mathrm{\, wt \,} }
\begin{document}

% paper title
\title{ Soft Decision Decoding of Recursive Plotkin Constructions Based on Hidden Code Words}

% author names and affiliations
% use a multiple column layout for up to three different
% affiliations
\author{Martin Bossert,~\IEEEmembership{Fellow,~IEEE}}

\date{Mai 2024}
\maketitle

\begin{abstract}
	The Plotkin construction combines two codes to a code of doubled length.
	It can be applied recursively.
	The class of Reed--Muller (RM) codes is a particular example.
	Also, a special class of generalized concatenated codes (GCC) can be described as recursive Plotkin
	construction. 
	Exploiting a property of the code words
	constructed by the recursive Plotkin construction,
	we present novel soft-decision decoders. These are based on the
	decoding of hidden code words which are inherent contained in the constructed code words
	and can be uncovered by adding particular parts of the overall code word.  
	The main idea is to use more than one decoding variant where each variant starts with
	the decoding of a different hidden code word.  
	The final decoding decision selects the best of the decisions of the used variants.
	The more variants are used the closer the performance 
	gets to the  maximum-likelihood (ML) decoding performance. This is 
	verified by an ML-bound for the cases where the ML performance is not known.
	The decoding algorithms use only additions, comparisons, and sign operations.
	Further, due to the recursive structure, only relatively short codes 
	have to be decoded, thus, the decoding complexity is very low. 
	In addition, we introduce two novel classes of half-rate codes based on 
	recursive Plotkin constructions with RM codes. 
\end{abstract}

\section{Introduction and Notations}
Even though the history of error-correcting codes and their decoding lasts already for decades, 
there is still the need for codes which have low complexity and good performance decoding methods.
Some part of the history was dominated by iterative decoding methods which
posses good performance but also have a large complexity and often a varying decoding time.
Recently, Reed--Muller (RM) codes became of interest again since they can be decoded
with low complexity and show a relatively good performance.
RM codes can be constructed by a recursive Plotkin construction.
However, the construction is more general and not limited to RM codes.
A set of component codes is combined according to the Plotkin construction
to obtain the overall code word of a recursive Plotkin construction.
The addition of particular parts of the overall code word uncovers a code word
of some component code which we call a hidden code word. 
We present here a novel decoding strategy for recursive Plotkin code constructions which 
uses these hidden code words.
Starting the decoding with a particular hidden code word
leads to one decoding variant.
We use different variants 
and the final decoding decision is the best decoding decision of the used variants
which is decided by calculationg the correlation of the decision with the received vector.
Due to the random nature of a channel the hidden code words have
different channel situations and thus, some have a better situation than others.
Having decoded a hidden code word correctly the remaining recursive decoding steps
benefits from error cancellation due to the fact that the hidden code word is known.
The variants are independent of each other and can be calculated in parallel.
The more variants are used the closer the performance approaches the maximum-likelihood (ML) decoding performance.
Due to the recursive structure we have to decode only relatively short codes
or codes with small dimension.
This can be done by calculating the correlation which all code words and has still low complexity
since a correlation needs only additions.
In fact, all other operations used need also only sign operations, comparison and additions of real numbers 
which supports the low decoding complexity. 

We will recall the related literature.
The Plotkin construction was introduced
by Plotkin~\cite{Plotkin} in $1960$ 
and can be used to create longer codes by combination of shorter codes.
Among others, a recursive construction of Reed--Muller (RM) codes is possible.
RM codes were introduced by Muller \cite{Muller} in $1954$ 
and in the same year Reed \cite{Reed} published a decoding scheme. 
The number of publications which describe decoders for RM codes is huge and therefore we 
will restrict to particular literature.
Namely, to those which are based on a recursive structure where the involved component codes 
are codes which could be used also indendent of the recursion.
Recursive decoding with soft-decision for first order RM codes (simplex codes) 
was presented in \cite{Litsyn} in $1985$.
Then, two years later in $1987$, a correlation decoder which uses the recursive structure
of RM codes was described in \cite{Karyakin}.
Both of this early recursive decoders were restricted to relatively short lengths.
In $1995$, a decoding method for RM codes \cite{Boss95} was published
where the RM code was split into two RM codes (according to the Plotkin construction) 
and each of the two RM codes was split again into two RM codes and so on. 
The decoding was done if one of the splitted codes was a repetition or a parity-check code. 
The idea used in \cite{Boss95} was based on generalized code concatenation (GCC) \cite{BZ} from the year $1974$.
However, the decoding had a very low decoding complexity but also a performance which was much worse than ML.
In the dissertation \cite{Stolte} from $2002$, among others, also recursive decoding methods for RM codes are
studied, however, the decoding performance was also worse than ML decoding.
In $2006$, a decoding method was published \cite{Dumer} which used recursive lists
and shows very good performance. Since several lists are necessary, the complexity 
increases rapidly.
Another decoder which is based on the automorphism group of RM codes
was published in $2021$  \cite{Geiselhart}. 
In fact, the used permutations exploit the random nature of the channel
since the decoders face different channel situations, but in a different way than the novel decoding
strategy we will present here.
In \cite{Freudenberger} from $2021$, Plotkin constructions with BCH codes are studied 
which already shows that the construction is not limited to RM codes only
and that the BCH codes give better results than the RM codes.
In $2022$, a recursive decoder \cite{Kamenev} was published which
starts the decoding with the larger rate code in a Plotkin construction.
Since several decoding trials are made by flipping unreliable bits it is mainly suited for 
small minimum distances. 

We introduce a few necessary definitions and notations. 
Let  $\mathbb F_2$ be the binary field and $\mathbb F_2^n$
the $n$-dimensional binary vector space.
We restrict to binary linear codes   
$\mathcal C (n,k,d) \subseteq \mathbb F_2^n $
which are linear  $k$-dimensional subspaces where $n$ is the length, $k$ the dimension, and
$d$ the minimum distance of the code.
For a code word  $\mathbf c \in \mathcal C$ 
the norm is the Hamming weight $\mathrm{wt} (\mathbf c)$ which is the number of ones in the vector.
The metric is the
Hamming distance $\mathrm{dist} (\mathbf a, \mathbf b) = \mathrm{wt} (\mathbf a + \mathbf b)$. 
The minimum distance $d$ of a code is the smallest Hamming distance between any
two different code words which for linear codes is identical to the smallest weight of the nonzero code words.
A binary symbol is transmitted over a binary symmetric channel (BSC) and 
with probability $p$ the bit is flipped from $0$ to $1$ or from $1$ to $0$.
Thus, with probability $1-p$ the transmission of a binary symbol is error free.
Transmitting a code word of length $n$ over a BSC the expected number of errors is $n p$.
For a code with odd minimum distance $d$ 
a bounded minimum distance (BMD) decoder
can uniquely correct $\tau \leq (d-1)/2$ errors.
For soft-decision decoding we use
the Gaussian channel and  binary phase shift keying (BPSK)
which will be introduced in Sec.~\ref{sec:softdec}.

The remaining paper is organized as follows.
In Sec. \ref{sec:defplot} we recall the recursive Plotkin construction and  
introduce the hidden code words. In addition, we relate the construction to RM codes and to GCC.
Decoding functions and aspects of soft-decision decoding are treated in Sec. \ref{sec:softdec}.
ML decoding with correlation and the ML bound are defined. 
Then, in Sec. \ref{sec:decstrat}, we introduce the {\em join} operation
in order to combine blocks of the received vector related to different hidden code words.
Then we show that these combinations of blocks cancel errors. Afterwards, $15$ decoding variants are 
introduced based on the decoding functions and structured by the hidden code word of the first decoding step.
Finally, the recursive decoding is described.
In Sec. \ref{sec:decperform} we first introduce some realizations of the decoding functions
and then analyze the performance of the variants using a half-rate RM code of length $32$.
We count the necessary operations to show the complexity of the decoders.
In addition, modified code constructions which are not RM codes are introduced and their performance is simulated.
Two of these constructions will be generalized to get two new classes of half-rate codes based on RM codes
which are not RM codes itself.
The performance of recursive decoding is shown in Sec. \ref{sec:recurplot}
using a half-rate RM code of length $128$ which will be decoded by decoding code words of length $8$.
Additionally, we show the performance of representatives of the novel code classes.
Finally, it is shown that the algorithms perform well also 
for low- and  high-rate codes.
At the end, we give some conclusions and mention some open problems.

\section{The Plotkin Construction}\label{sec:defplot}
The original Plotkin construction \cite{Plotkin} from 1960 combines  two codes
which we will denote by $\mathcal C_0$ and  $\mathcal C_1$.
These codes are combined to a code  $\mathcal C$ of double length by choosing two code words 
$\mathbf u_0 \in \mathcal C_0$  and $\mathbf u_1 \in \mathcal C_1$  
and appending the code word
 $\mathbf u_0+\mathbf u_1 $ to the code word
$\mathbf u_0 $ which results in the code word
$\mathbf c = | \mathbf u_0 | \mathbf u_0+\mathbf u_1 |$ of the new code $\mathcal C$.
The following known Theorem 
gives the parameters of the resulting code $\mathcal C$.
There exist various proofs, however, here we have choosen one which 
gives insight into the presented decoding strategy.

\begin{theorem}[Plotkin Construction]\label{uuplusvrm}%
Let $\mathcal C_0(n,k_0,d_0)$ and 
$\mathcal C_1(n,k_1,d_1)$ be two binary linear component codes.
Then, the code $\mathcal C = \{ \mathbf c = |\mathbf u_0|\mathbf u_0+\mathbf u_1|:\ \mathbf u_0\in \mathcal C_0,\ \mathbf u_1 \in \mathcal C_1 \}$ has length $2n$, dimension $k= k_0 + k_1$, and minimum distance $d=\min\{ 2d_0, d_1\}$.
\end{theorem}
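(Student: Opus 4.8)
The plan is to verify the three claimed parameters—length, dimension, and minimum distance—in turn, since the first two are essentially bookkeeping and the interesting content lies in the distance claim.

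First I would handle the length: each code word $\mathbf c = |\mathbf u_0|\mathbf u_0+\mathbf u_1|$ is the concatenation of two vectors of length $n$, so $\mathbf c \in \mathbb F_2^{2n}$ and $\mathcal C$ has length $2n$. Next, for the dimension, I would argue that the map $(\mathbf u_0,\mathbf u_1)\mapsto |\mathbf u_0|\mathbf u_0+\mathbf u_1|$ from $\mathcal C_0\times\mathcal C_1$ into $\mathbb F_2^{2n}$ is linear (over $\mathbb F_2$) and injective: if $|\mathbf u_0|\mathbf u_0+\mathbf u_1| = \mathbf 0$, then the left half forces $\mathbf u_0 = \mathbf 0$ and then the right half forces $\mathbf u_1 = \mathbf 0$. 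Hence $\mathcal C$ is a linear code with $|\mathcal C| = |\mathcal C_0|\cdot|\mathcal C_1| = 2^{k_0}\cdot 2^{k_1}$, so $k = k_0+k_1$. (If one prefers a basis-level argument, take a generator matrix $G_0$ of $\mathcal C_0$ and $G_1$ of $\mathcal C_1$; then $\mathcal C$ is generated by $\bigl[\begin{smallmatrix}G_0 & G_0\\ 0 & G_1\end{smallmatrix}\bigr]$, whose rows are visibly independent.)

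The main obstacle is the minimum distance. Since $\mathcal C$ is linear, it suffices to compute the minimum weight of a nonzero code word $\mathbf c = |\mathbf u_0|\mathbf u_0+\mathbf u_1|$. I would split into two cases according to $\mathbf u_1$. If $\mathbf u_1 = \mathbf 0$, then $\mathbf c = |\mathbf u_0|\mathbf u_0|$ with $\mathbf u_0\neq \mathbf 0$, so $\wt(\mathbf c) = 2\wt(\mathbf u_0)\ge 2d_0$, and equality is attainable by choosing $\mathbf u_0$ of minimum weight in $\mathcal C_0$. If $\mathbf u_1\neq\mathbf 0$, I would use the elementary inequality $\wt(\mathbf u_0) + \wt(\mathbf u_0+\mathbf u_1)\ge \wt(\mathbf u_1)$, which follows from $\wt(\mathbf a+\mathbf b)\le\wt(\mathbf a)+\wt(\mathbf b)$ applied to $\mathbf a = \mathbf u_0$, $\mathbf b = \mathbf u_0+\mathbf u_1$ (so that $\mathbf a+\mathbf b = \mathbf u_1$); hence $\wt(\mathbf c) = \wt(\mathbf u_0)+\wt(\mathbf u_0+\mathbf u_1)\ge \wt(\mathbf u_1)\ge d_1$, and equality is attainable by taking $\mathbf u_0 = \mathbf 0$ and $\mathbf u_1$ of minimum weight in $\mathcal C_1$. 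Combining the two cases, the nonzero code words have weight at least $\min\{2d_0, d_1\}$, and this bound is met, so $d = \min\{2d_0,d_1\}$.

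I expect the subtlety to be purely in presenting the $\mathbf u_1\neq\mathbf 0$ case cleanly: one must resist the temptation to bound $\wt(\mathbf u_0+\mathbf u_1)$ from below by $\wt(\mathbf u_1)-\wt(\mathbf u_0)$ and then add (which can give a negative, useless bound when $\wt(\mathbf u_0)$ is large); the trick is to keep both terms $\wt(\mathbf u_0)$ and $\wt(\mathbf u_0+\mathbf u_1)$ and invoke the triangle inequality in the form $\wt(\mathbf u_0)+\wt(\mathbf u_0+\mathbf u_1)\ge\wt(\mathbf u_1)$. Finally, I would remark that the code word $|\mathbf u_0|\mathbf u_0+\mathbf u_1|$ reveals $\mathbf u_0$ in its left block and, after adding the two blocks, reveals $\mathbf u_1$; this is the ``hidden code word'' viewpoint that the paper announces it will exploit, so it is worth stating explicitly here even though it is not needed for the parameter computation.
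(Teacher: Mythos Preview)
Your proof is correct and is the standard combinatorial argument for the Plotkin parameters: length and dimension by direct inspection (injectivity of the encoding map), and minimum distance by the case split $\mathbf u_1=\mathbf 0$ versus $\mathbf u_1\neq\mathbf 0$ together with the triangle inequality $\wt(\mathbf u_0)+\wt(\mathbf u_0+\mathbf u_1)\ge\wt(\mathbf u_1)$.

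The paper, however, deliberately takes a different route for the minimum distance, one chosen to foreshadow the decoding strategy developed later. It argues via a bounded-minimum-distance decoder over a BSC: writing the error as $\mathbf e=|\mathbf e_0|\mathbf e_1|$, adding the two received halves yields $\mathbf u_1+\mathbf e_0+\mathbf e_1$, which a BMD decoder for $\mathcal C_1$ corrects whenever $\wt(\mathbf e)\le(d_1-1)/2$; then, with $\mathbf u_1$ known, one obtains two corrupted copies $\mathbf u_0+\mathbf e_0$ and $\mathbf u_0+\mathbf e_1$ of the same word, and a comparison argument shows that decoding the copy with fewer claimed corrections succeeds whenever $\wt(\mathbf e)\le d_0-1$, forcing $d\ge 2d_0$. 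Your approach is cleaner and self-contained (and explicitly exhibits the codewords attaining the bound, which the paper leaves implicit), whereas the paper's approach buys exactly what the authors say it does: it introduces the two-stage decode-$\mathbf u_1$-then-decode-$\mathbf u_0$ mechanism and the error-cancellation idea ($\wt(\mathbf e_0+\mathbf e_1)\le\wt(\mathbf e_0)+\wt(\mathbf e_1)$) that underlie all the decoding variants in the rest of the paper. Your closing remark about revealing $\mathbf u_1$ by adding the two blocks is precisely the seed of the paper's argument, so you are not far from their viewpoint---you simply did not use it as the proof mechanism.
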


\begin{proof}
The length $2n$ and the dimension  $k= k_0 + k_1$ are obvious.
In order to prove the minimum distance we will describe a possible decoder of $\mathcal C$.
Assume $\mathbf c \in \mathcal C$ was transmitted over a BSC and $\mathbf r = \mathbf c + \mathbf e $
was received. Knowing that  $\mathbf c = |\mathbf u_0|\mathbf u_0+\mathbf u_1|$
we can split the error in two halfs $\mathbf e = |\mathbf e_0|\mathbf e_1|$, thus,  
	$\mathbf r = |\mathbf u_0 + \mathbf e_0|\mathbf u_0+\mathbf u_1+\mathbf e_1|$. 
	Adding the left to the right half we get  
\[\mathbf u_0 + \mathbf e_0 + \mathbf u_0+\mathbf u_1+\mathbf e_1
= \mathbf u_1 + \mathbf e_0+\mathbf e_1.\]
Since  $\wt(\mathbf e) = \wt(\mathbf e_0) + \wt(\mathbf e_1)
\geq  \wt(\mathbf e_0 + \mathbf e_1)$ a BMD decoder can decode 
$\mathbf u_1$ correctly if the number of errors $\tau = \wt (\mathbf e) \leq \frac{d_1-1}{2}$ and thus, we know $\mathbf u_1$ and the minimum distance $d$ is $d_1$ for this step of decoding.

Now, we can add the decoded $\mathbf u_1$ to the right half of $\mathbf r$ and get
$\mathbf u_0+\mathbf e_1$. Therefore, we have the two vectors 
\[\mathbf u_0 + \mathbf e_0  \quad \mbox{and} \quad \mathbf u_0+\mathbf e_1\]
which is the same code word $\mathbf u_0$ disturbed in the left 
vector by $\mathbf e_0$
and in the right by $\mathbf e_1$. Now, we assume that
$\tau =\wt (\mathbf e)= d_0 - 1$ errors have occurred
and $d_0$ is even. 
Let $j$ be the number of errors in the left part. For the case
$j < d_0-1-j$ the left part contains $j=0,1, \ldots, <\frac{d_0-1}{2}$
errors. Thus, the decoder for the left part can find the correct $\mathbf u_0$,
since the decoder for the right part needs to correct with at least $ \geq j+1$ errors 
to find a valid code word with weight  $ \geq d_0$. 
This follows from the known fact that the error on the channel added to the estimated error by the decoder
	is a code word of the used code, thus, the sum of the weigths of both must be at least $d_0$.
	So if we decide for the decoder 
which corrects the smaller number of errors we have $\mathbf u_0$.
In the second case
when $j > d_0-1-j$ the right decoder finds the correct $\mathbf u_0$
since the decoder for the left part needs at least to correct with $\geq j+1$ errors.
The same arguments hold if $\tau = \wt (\mathbf e) < d_0 - 1$. 
A code which can correct $d_0-1$ errors uniquely has minimum distance $ 2 d_0$.
\end{proof}

\subsection{Recursive Plotkin Construction}
It is known that the Plotkin construction can be applied recursively \cite{Boss-eng}.
We recall the recursion in order to introduce the double Plotkin construction.
Given are four component codes $\mathcal C_0$,$\mathcal C_1$,$\mathcal C_2$, and $\mathcal C_3$
and the two Plotkin constructions  $|\mathbf u_0|\mathbf u_0+\mathbf u_1|$ and  $ |\mathbf u_2|\mathbf u_2+\mathbf u_3|$.
If we  apply the Plotkin construction for these two codes we get 
a code of length  $4n$

\begin{equation}\label{doublepl}
 |\mathbf u_0|\mathbf u_0+\mathbf u_1 |\mathbf u_0+\mathbf u_2|\mathbf u_0 + \mathbf u_1 +\mathbf u_2+\mathbf u_3|. 
\end{equation}
which we refer to as double Plotkin construction.
This can be continued to get a triple Plotkin construction using a second double Plotkin constructed code
and so on. 
%$$ |\mathbf u_4|\mathbf u_4+\mathbf u_5 |\mathbf u_4+\mathbf u_6|\mathbf u_4 + \mathbf u_5 +\mathbf u_6+\mathbf u_7| $$ which gives a code of length  $8n$ \begin{equation}\label{triplepl} |\mathbf u_0|\mathbf u_0+\mathbf u_1 |\mathbf u_0+\mathbf u_2|\mathbf u_0 + \mathbf u_1 +\mathbf u_2+\mathbf u_3| \mathbf u_0+\mathbf u_4| \mathbf u_0+\mathbf u_1 +\mathbf u_4+\mathbf u_5 | \mathbf u_0+\mathbf u_2+\mathbf u_4+\mathbf u_6| \sum\limits_{i=0}^7 \mathbf u_i |.  \end{equation} Note that any triple Plotkin construction can be viewed as a double Plotkin or a Plotkin construction.  Because if we choose $\mathbf v_0=|\mathbf u_0|\mathbf u_0+\mathbf u_1|$, $\mathbf v_1=|\mathbf u_2|\mathbf u_2+\mathbf u_3|$, $\mathbf v_2=|\mathbf u_4|\mathbf u_4+\mathbf u_5|$, and $\mathbf v_3=|\mathbf u_6|\mathbf u_6+\mathbf u_7|$ the triple Plotkin constructuions of Eq. \ref{triplepl} is identical to the double Plotkin construction $$ |\mathbf v_0|\mathbf v_0+\mathbf v_1 |\mathbf v_0+\mathbf v_2|\mathbf v_0 + \mathbf v_1 +\mathbf v_2+\mathbf v_3|.  $$ The recursion could be continued to get a fourfold Plotkin construction and so on.
Any double Plotkin construction can be viewed as a Plotkin construction. 
However, a Plotkin construction can only be viewed as a double Plotkin construction
if the component codes are Plotkin constructions themselves.
\begin{example}[Recursive Plotkin Construction]\label{ex:recursive}
	We use the binary component codes  $\mathcal C_0 (8,7,2)$,  $\mathcal C_1 (8,4,4)$,
	$\mathcal C_2 (8,4,4)$, and $\mathcal C_3 (8,1,8)$.
In a first step we use the Plotkin construction  for the two code words 
$\mathcal C_0$ and $\mathcal C_1$ and for the codes
$\mathcal C_2$ and $\mathcal C_3$ which gives the two code words
$$
\mathbf v_0 = |\mathbf u_0|\mathbf u_0+\mathbf u_1| \ \mathrm{and} \  \mathbf v_1 = |\mathbf u_2|\mathbf u_2+\mathbf u_3| .
$$
Now we use the Plotkin construction for these two code words which results in the code word
$$
 |\mathbf v_0|\mathbf v_0+\mathbf v_1 | = |\mathbf u_0|\mathbf u_0+\mathbf u_1 |\mathbf u_0+\mathbf u_2|\mathbf u_0 + \mathbf u_1 +\mathbf u_2+\mathbf u_3| 
$$
which is a code word of the code $\mathcal C (32,16,8)$, since according to Theorem \ref{uuplusvrm}
the code $|\mathbf u_0|\mathbf u_0+\mathbf u_1 |$ 
has the parameters $(16,11,4)$ and the code 
$|\mathbf u_2|\mathbf u_2+\mathbf u_3|$ 
has the parameters $(16,5,8)$. A Plotkin construction of these two codes gives $\mathcal C (32,16,8)$. 
\end{example}
In a recursive Plotkin construction, when the used codes have certain subcode properties,
 there exist different hidden code words of the used codes which are not only
 $\mathbf u_1$, $\mathbf u_2$, and $\mathbf u_3$.
These hidden code words can be uncovered when combining the four blocks in different ways which is
shown in the following lemma.
\begin{lemma}[Hidden Code Words]\label{subcodedouble}
Given a double Plotkin construction with the component codes  $\mathcal C_0$, $\mathcal C_2
\subseteq \mathcal C_1$, and $\mathcal C_3 \subset \mathcal C_2$ and a code word
$$\mathbf c =|\mathbf u_0|\mathbf u_0+\mathbf u_1|\mathbf u_0+\mathbf u_2|\mathbf u_0+\mathbf u_1+\mathbf u_2+\mathbf u_3|
=|\mathbf a_0|\mathbf a_1|\mathbf a_2|\mathbf a_3|$$
	then the uncovered code word 
	$\mathbf a_0+\mathbf a_1 + \mathbf a_2+\mathbf a_3$ is from code $\mathcal C_3$.
The two uncovered code words
	$\mathbf a_0+\mathbf a_2$ and $\mathbf a_1+\mathbf a_3$ 
		are from the code $ \mathcal C_2$ and the four uncovered code words
	$\mathbf a_0+\mathbf a_1, \  \mathbf a_0+\mathbf a_3, \ \mathbf a_1+\mathbf a_2, \  \mathbf a_2+\mathbf a_3$ 
	are from the code
	$ \mathcal C_1$. 
In case of $\mathcal C_1 = \mathcal C_2$ all six code words are from this code. If additionally $\mathcal C_1 \subset \mathcal C_0$
all blocks $\mathbf a_i$ are from code $\mathcal C_0$. 	
\end{lemma}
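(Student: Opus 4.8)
The plan is to compute each uncovered code word directly from the explicit block representation $\mathbf c = |\mathbf a_0|\mathbf a_1|\mathbf a_2|\mathbf a_3|$ with
\[
\mathbf a_0 = \mathbf u_0,\quad \mathbf a_1 = \mathbf u_0+\mathbf u_1,\quad \mathbf a_2 = \mathbf u_0+\mathbf u_2,\quad \mathbf a_3 = \mathbf u_0+\mathbf u_1+\mathbf u_2+\mathbf u_3,
\]
and then use the assumed subcode inclusions $\mathcal C_3 \subset \mathcal C_2 \subseteq \mathcal C_1$ (and $\mathcal C_2 \subseteq \mathcal C_1$) to identify which component code each sum belongs to. First I would note that summing all four blocks gives $\mathbf a_0+\mathbf a_1+\mathbf a_2+\mathbf a_3 = 4\mathbf u_0 + 2\mathbf u_1 + 2\mathbf u_2 + \mathbf u_3 = \mathbf u_3$ over $\mathbb F_2$, which lies in $\mathcal C_3$. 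Next, $\mathbf a_0+\mathbf a_2 = \mathbf u_2$ and $\mathbf a_1+\mathbf a_3 = \mathbf u_2+\mathbf u_3$; the first is trivially in $\mathcal C_2$, and the second is in $\mathcal C_2$ because $\mathbf u_3 \in \mathcal C_3 \subset \mathcal C_2$ and $\mathcal C_2$ is linear. Similarly $\mathbf a_0+\mathbf a_1 = \mathbf u_1 \in \mathcal C_1$, $\mathbf a_2+\mathbf a_3 = \mathbf u_1+\mathbf u_3 \in \mathcal C_1$ since $\mathbf u_3 \in \mathcal C_3 \subseteq \mathcal C_1$, $\mathbf a_1+\mathbf a_2 = \mathbf u_1+\mathbf u_2 \in \mathcal C_1$ since $\mathbf u_2 \in \mathcal C_2 \subseteq \mathcal C_1$, and $\mathbf a_0+\mathbf a_3 = \mathbf u_1+\mathbf u_2+\mathbf u_3 \in \mathcal C_1$ by the same inclusions together with linearity.

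For the special cases, if $\mathcal C_1 = \mathcal C_2$ then the two words listed as being in $\mathcal C_2$ are also in $\mathcal C_1$, so all six uncovered pairwise sums lie in the common code $\mathcal C_1=\mathcal C_2$; this is immediate from the previous paragraph. If in addition $\mathcal C_1 \subset \mathcal C_0$, then $\mathbf u_1, \mathbf u_2, \mathbf u_3$ all lie in $\mathcal C_0$ (using $\mathbf u_2,\mathbf u_3 \in \mathcal C_2 \subseteq \mathcal C_1 \subset \mathcal C_0$), hence each block $\mathbf a_i$, being $\mathbf u_0$ plus an $\mathbb F_2$-combination of $\mathbf u_1,\mathbf u_2,\mathbf u_3$, lies in $\mathcal C_0$ by linearity.

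I do not expect any serious obstacle here; the proof is essentially a bookkeeping exercise in $\mathbb F_2$-linear algebra, with the only point requiring care being to track exactly which inclusion ($\mathcal C_3 \subset \mathcal C_2$, $\mathcal C_2 \subseteq \mathcal C_1$, or $\mathcal C_1 \subset \mathcal C_0$) is invoked to absorb each stray $\mathbf u_i$ term into the claimed code, and to invoke linearity (closure under addition) at each step. The mild subtlety worth stating explicitly is that $\mathbf a_0+\mathbf a_3$ requires both $\mathbf u_2 \in \mathcal C_1$ and $\mathbf u_3 \in \mathcal C_1$ simultaneously, which is why the hypothesis chain must reach all the way down to $\mathcal C_3$.
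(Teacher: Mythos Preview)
Your proposal is correct and follows essentially the same approach as the paper: both proofs compute each pairwise (and the four-way) sum of the blocks $\mathbf a_i$ explicitly in terms of $\mathbf u_0,\mathbf u_1,\mathbf u_2,\mathbf u_3$ and then invoke the inclusions $\mathcal C_3 \subset \mathcal C_2 \subseteq \mathcal C_1$ (and $\mathcal C_1 \subset \mathcal C_0$ for the last part) together with linearity to place each sum in the claimed code. Your write-up is slightly more explicit about which inclusion is used where, but the argument is identical in substance.
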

\begin{proof}
The code word $\mathbf a_0+\mathbf a_2=\mathbf u_2$ is from $\mathcal C_2$.
The code word $\mathbf a_1+\mathbf a_3=\mathbf u_2+\mathbf u_3$ is from $\mathcal C_2$,
since $\mathcal C_3 \subset \mathcal C_2$.
The code word $\mathbf a_0+\mathbf a_1=\mathbf u_1$ 
is from the code $\mathcal C_1$.
The code word  $\mathbf a_0+\mathbf a_3=\mathbf u_1+\mathbf u_2 + \mathbf u_3$
is  also from $\mathcal C_1$, since 
	$\mathcal C_3 \subset \mathcal C_2 \subseteq \mathcal C_1$.
The code word $\mathbf a_1+\mathbf a_2=\mathbf u_1+\mathbf u_2$ is from the same code, since
$\mathcal C_2 \subseteq \mathcal C_1$.
Finally, $\mathbf a_2+\mathbf a_3=\mathbf u_1+\mathbf u_3$ 
is from $\mathcal C_1$ because $\mathcal C_3 \subset \mathcal C_1$.	
Since all code words are either from $\mathcal C_1$ or $\mathcal C_2$
they are all from the same code if  $\mathcal C_1 = \mathcal C_2$.
When $\mathcal C_1 \subset \mathcal C_0$
then $\mathbf u_0 \in  \mathcal C_0$,  
	$\mathbf u_0  + \mathbf u_1 \in  \mathcal C_0$, 
	$\mathbf u_0  + \mathbf u_2 \in  \mathcal C_0$ because  $\mathcal C_2
\subseteq \mathcal C_1$, and 
	$\mathbf u_0  + \mathbf u_1 + \mathbf u_2 + \mathbf u_3 \in  \mathcal C_0$. 
According to the construction of the code words of a double Plotkin construction
	$\mathbf a_0+\mathbf a_1 + \mathbf a_2+\mathbf a_3 = \mathbf u_3$ is from code $\mathcal C_3$.
\end{proof}

The following example illustrates this lemma.

\begin{example}[Hidden Code Words]\label{ex:rm25}
Again we consider the four binary codes  $\mathcal C_0 (8,7,2)$,  $\mathcal C_1 (8,4,4)$,
	$\mathcal C_2 (8,4,4)$, and $\mathcal C_3 (8,1,8)$ and the double Plotkin construction
$$
 |\mathbf u_0|\mathbf u_0+\mathbf u_1 |\mathbf u_0+\mathbf u_2|\mathbf u_0 + \mathbf u_1 +\mathbf u_2+\mathbf u_3|.
$$
Adding the first to the third block and the second to the fourth one corresponds to the Plotkin construction and  
we get $|\mathbf u_2|\mathbf u_2+\mathbf u_3|$ which is a code word from $(16,5,8)$. 
However, we may also add the first to the second and the third to the fourth block
and get $|\mathbf u_1|\mathbf u_1+\mathbf u_3|$  
which is also from the code $(16,5,8)$. 
Since the codes $\mathcal C_1$ and $\mathcal C_2$ are identical 
it is further possible to   add the second to the third and the first to the third and we get
$|\mathbf u_1+\mathbf u_2|\mathbf u_1+\mathbf u_2+\mathbf u_3|$  
which is again from code $(16,5,8)$.
Another interpretation are the six code words from code  $(8,4,4)$  which
are obtained by combining any two of the four blocks of the double Plotkin construction.
These uncovered code words are $\mathbf u_1$, $\mathbf u_2$,  $\mathbf u_1+\mathbf u_2$, $\mathbf u_1+\mathbf u_3$, 
$\mathbf u_2+\mathbf u_3$, and $\mathbf u_1+\mathbf u_2+\mathbf u_3$. 
Since $\mathcal C_3 \subset \mathcal C_2 =  \mathcal C_1 \subset \mathcal C_0$
each block is a code word from a parity-check code.
\end{example}
The combination of different blocks is the basis for the presented novel decoding strategy.
Due to the statistical nature of a channel
the number of errors as well as their position is most likely different in any of the four blocks.  
Especially in the case when each position has a reliability value  these values
will be different in each of the four blocks which will be exploited by soft-decision decoding later.

Before describing the decoding
we will relate the recursive Plotkin construction to binary Reed-Muller (RM) codes 
and to generalized code concatenation (GCC).
The GCC point of view will give insight into the decoding.
\subsection{Reed--Muller Codes}
The binary RM codes can be described in many ways and we will 
use the Plotkin construction recursively to get the codes according to Definition \ref{rmcodesrm} \cite[p. 106]{Boss-eng}.
\begin{definition}[Binary RM-Code]\label{rmcodesrm}
	The RM code ${\mathcal R}(r,m)$ 
with order $r$ has length $n=2^m$,  dimension 
$k = 1 + \binom{m}{1} + \binom{m}{2} + \dots + \binom{m}{r}$
and  minimum distance $d=2^{m-r}$. That is 
 ${\mathcal R}(r,m)$ is a $(2^m, k, 2^{m-r})$ code.
\end{definition}

We can use RM codes in the Plotkin construction. If we use
$\mathcal C_0 ={\mathcal R}(r+1,m)$ 
and $\mathcal C_1 = {\mathcal R}(r,m)$ 
we get $\mathcal C={\mathcal R}(r+1,m+1)$, thus
\begin{equation*}
{\mathcal R}(r+1,m+1) = \left\{|\mathbf u_0|\mathbf u_0 + \mathbf u_1|:\ \mathbf u_0 \in {\mathcal R}(r+1,m),\ \mathbf u_1 \in {\mathcal R}(r,m)
\right\}.
\end{equation*}
Since $d_0= 2^{m-r-1}$  and  $d_1= 2^{m-r}$ we get $d = 2^{m-r}$. 
For the length we get $2 \cdot 2^m = 2^{m+1}$.
For the dimension it holds that
\begin{equation*}
\binom{m+1}{r+1} = \binom{m}{r+1} + \binom{m}{r}
\end{equation*}
and thus $k=k_0+k_1$ and the parameters 
are those of the RM code $\mathcal R(r+1,m+1)$.
\begin{figure}[htb]
  \center
\includegraphics[width=0.95\textwidth]{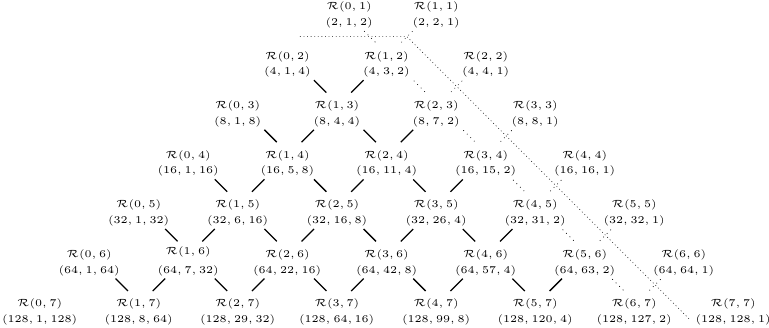}
  \caption{Recursive Plotkin Construction of RM codes}\label{fig:rm-tree}
\end{figure}
The recursive calculation of RM codes up to length $128$ 
is illustrated in Figure~\ref{fig:rm-tree} \cite[p. 106]{Boss-eng}.
The codes $\mathcal R(0,m)$ are repetition codes, $\mathcal R(1,m)$ are simplex codes, $\mathcal R(m-2,m)$
are extended Hamming codes, $\mathcal R(m-1,m)$ are parity-check codes, and $\mathcal R(m,m)$ are all binary vectors
of length $2^m$.
\begin{remark} The recursive structure of RM codes
is the basis for the soft- and hard-decision decoding algorithm
given in \cite{Boss95}. A list-decoding extension can be found in \cite{Boss-eng}.
\end{remark}
\subsection{Generalized Code Concatenation}
The concept of generalized code concatenation (GCC) uses an inner code which is partitioned into subcodes 
which may have increased minimum distance.
Outer codes protect the labeling of this partitioning.
Numerous theoretical results exist (\cite[pp. 287--389]{Boss-eng}) and it would be 
beyond the scope of this paper to explain the theory in detail.
However, the connection to the Plotkin constructions gives some insight and the basic idea
can be explained by the following example which describes a Plotkin and a double Plotkin 
construction as GCC.

\begin{example}[GCC and Plotkin Construction] \\% remove break for final version
	We choose as inner code $\mathcal B(2,2,1)=\{00,01,10,11 \}$ and partition it (as shown in Fig. \ref{fig:partsingl})
into two subsets, namely $\{00,11 \}$ and  $\{01,10 \}$ where the first set is labeled by a zero and the second by a one.
These two subsets can again be partitioned into two vectors labeled again by zero and one.
This yields the mapping of the labeling to the vectors: $00 \leftrightarrow 00$,  $01 \leftrightarrow 11$, 
 $10 \leftrightarrow 01$, and  $11 \leftrightarrow 10$ which 
corresponds to $u_0|u_0+u_1$.
Now we use the two outer codes $\mathcal C_1(4,1,4)$ and $\mathcal C_0(4,3,2)$.
The bits $u_{1,i}, i = 0, 1,2, 3$ of a code word from  $\mathcal C_1(4,1,4)$ 
 are the first labels and the bits $u_{0,i}, i = 0, 1, 2, 3$ of  a code word from  $\mathcal C_0(4,3,2)$ 
are the second labels.
Assume the information bits are $1011$. First, we encode the two outer codes and get the code words   $1111$ and $0110$
where we used the first information bit for the repetition code and the remaining three for the parity-check code.
The labels are now $10 \leftrightarrow 01$, $11 \leftrightarrow 10$, $11 \leftrightarrow 10$, and $10 \leftrightarrow 01$.
The code word of the GCC is $01 \  10 \  10 \ 01$.
We map the left of the two  bits  to $c_0, c_1, c_2, c_3 = 0110$ and  the right to $c_{4},c_5,c_6, c_7 = 1001$.
We see that this is the same code word which we would get by Plotkin construction
 $\mathbf c = |\mathbf u_0 |\mathbf u_0 +\mathbf u_1|$.

In Fig. \ref{fig:partdouble} the inner code is $\mathcal B(4,4,1)$ which is partitioned into
two subsets containing the vectors of odd and even Hamming weight.
Each of these two subsets is again partitioned into two subsets and so on.
If the partitioning is done according to
$u_0|u_0+u_1|u_0+u_2|u_0+u_1+u_2+u_3$ the last labeling is protected with Hamming distance four. 
Since we have four labels of the partitioning we need four outer codes.
If we choose as outer codes $\mathcal C_0=(8,7,2)$, $\mathcal C_1=(8,4,4)$,
	$\mathcal C_2=(8,4,4)$ and $\mathcal C_3=(8,1,8)$ 
we get the same code as in Ex. \ref{ex:rm25}.

\end{example}

\begin{figure}[htb]
\centerline{\begin{tikzpicture}[scale=0.6]% skaliert nur Zeichnung (nicht Text)

% 1. Ebene
\draw (1.5,1.3) node {$u_1$};
\draw (0,0) -- (4,2);
\draw (8,0) -- (4,2);

% 2. Ebene
\draw (-1.5,-0.8) node {$u_0$};
\draw (0,0) -- (-2,-2);
\draw (0,0) -- (2,-2);
\draw (8,0) -- (6,-2);
\draw (8,0) -- (10,-2);

% Punkte
\draw[fill] (4,2) circle (2pt);
\draw[fill] (0,0) circle (2pt);
\draw[fill] (8,0) circle (2pt);
\draw[fill] (-2,-2) circle (2pt);
\draw[fill] (2,-2) circle (2pt);
\draw[fill] (6,-2) circle (2pt);
\draw[fill] (10,-2) circle (2pt);

% Beschriftung unten
\draw (-2,-3) node {$00$};
\draw (2,-3) node {$11$};
\draw (6,-3) node {$01$};
\draw (10,-3) node {$10$};

\end{tikzpicture}}
 \caption{Partitioning of $(2,2,1)$}\label{fig:partsingl}
\end{figure}
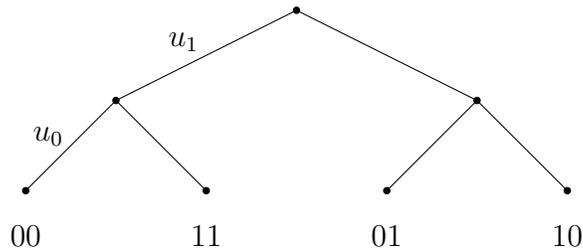

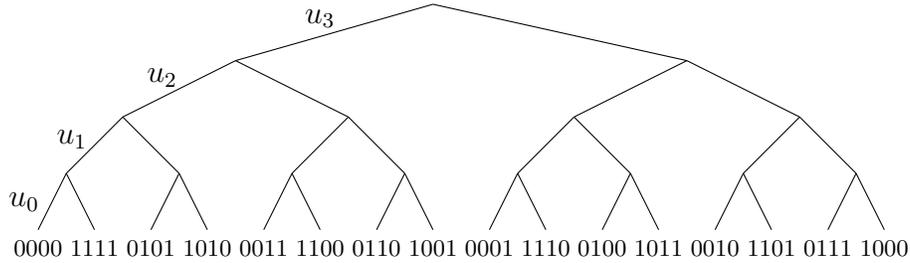
\begin{figure}[htb]
\centerline{\begin{tikzpicture}[scale=.75]% skaliert nur Zeichnung (nicht Text)

% 1. Ebene
\draw (6,-0.25) node {$u_3$};
\draw (8,0) -- (4.5,-1);
\draw (8,0) -- (12.5,-1);

% 2. Ebene, links
\draw (3.2,-1.3) node {$u_2$};
\draw (4.5,-1) -- (2.5,-2);
\draw (4.5,-1) -- (6.5,-2);

% 3. Ebene, links
\draw (1.6,-2.4) node {$u_1$};
\draw (2.5,-2) -- (1.5,-3);
\draw (2.5,-2) -- (3.5,-3);
\draw (6.5,-2) -- (5.5,-3);
\draw (6.5,-2) -- (7.5,-3);

% 4. Ebene, links
\draw (0.75,-3.5) node {$u_0$};
\draw (1.5,-3) -- (1,-4);
\draw (1.5,-3) -- (2,-4);
\draw (3.5,-3) -- (3,-4);
\draw (3.5,-3) -- (4,-4);
\draw (5.5,-3) -- (5,-4);
\draw (5.5,-3) -- (6,-4);
\draw (7.5,-3) -- (7,-4);
\draw (7.5,-3) -- (8,-4);

% 2. Ebene, rechts
\draw (12.5,-1) -- (10.5,-2);
\draw (12.5,-1) -- (14.5,-2);

% 3. Ebene, rechts
\draw (10.5,-2) -- (9.5,-3);
\draw (10.5,-2) -- (11.5,-3);
\draw (14.5,-2) -- (13.5,-3);
\draw (14.5,-2) -- (15.5,-3);

% 4. Ebene, rechts
\draw (9.5,-3) -- (9,-4);
\draw (9.5,-3) -- (10,-4);
\draw (11.5,-3) -- (11,-4);
\draw (11.5,-3) -- (12,-4);
\draw (13.5,-3) -- (13,-4);
\draw (13.5,-3) -- (14,-4);
\draw (15.5,-3) -- (15,-4);
\draw (15.5,-3) -- (16,-4);

% falls Punkte gebraucht werden:
%\draw[fill] (4,2) circle (2pt);

% Beschriftung unten
	{\footnotesize
\draw (1,-4.33) node {$0000$};
\draw (2,-4.33) node {$1111$};
\draw (3,-4.33) node {$0101$};
\draw (4,-4.33) node {$1010$};
\draw (5,-4.33) node {$0011$};
\draw (6,-4.33) node {$1100$};
\draw (7,-4.33) node {$0110$};
\draw (8,-4.33) node {$1001$};
\draw (9,-4.33) node {$0001$};
\draw (10,-4.33) node {$1110$};
\draw (11,-4.33) node {$0100$};
\draw (12,-4.33) node {$1011$};
\draw (13,-4.33) node {$0010$};
\draw (14,-4.33) node {$1101$};
\draw (15,-4.33) node {$0111$};
\draw (16,-4.33) node {$1000$};
}
\end{tikzpicture}}
	\caption{Partitioning of $(4,4,1)$}\label{fig:partdouble}
\end{figure}

With this example we have shown that  
 GCC  with  a   particular partition of the inner code
 $\mathcal B (2,2,1)$ is equivalent to a Plotkin construction
For the inner code $\mathcal B (4,4,1)$ the GCC is equivalent to a double Plotkin construction. 
This could be continued for inner codes
 $\mathcal B (2^m,2^m,1)$ which are equivalent to recursive Plotkin constructions.

\section{Soft Decision Decoding}\label{sec:softdec}
We use binary phase shift keying (BPSK) and  
	the usual mapping of the binary code symbols  $u_i=0 \leftrightarrow x_i = 1$ 
	and $u_i=1 \leftrightarrow x_i = -1$, and for vectors $\mathbf u \leftrightarrow \mathbf x$.
	In the additive white Gaussian noise (AWGN) channel we 
	receive $y_i = x_i + z_i$ where $z_i$ denotes the Gaussian noise.
	As usual, the signal-to-noise ratio for information bits in dB is called $E_b/N_0$. 
	Then, the signal-to-noise ratio  for a code bit in dB is $E_s/N_0 = E_b/N_0 - 10 \cdot \log_{10}(1/R)$
	where $R$ is the code rate of the used code.
	The mean is zero and the variance of the noise is  $\sigma^2=1/(2 \cdot 10^{E_s/(10 N_0)})$.
We define the hard decision received vector  $\mathbf r$ by the mapping $y_i \geq 0 \leftrightarrow r_i = 0$ 
	and $y_i<0 \leftrightarrow r_i = 1$.
The addition of two binary code words is a component-wise multiplication of 
the BPSK-modulated code words  $\mathbf u_0 + \mathbf u_1 \leftrightarrow \mathbf x_0 \mathbf x_1$,
where $$\mathbf x_0 \mathbf x_1 = ( x_{0,0} x_{1,0}, x_{0,1} x_{1,1}, \ldots ,  x_{0,n-1} x_{1,n-1}), \ x_{i,j} = \pm 1.$$

\subsection{Decoding Functions}
We need decoders for all codes used in a recursive Plotkin construction.
Since we might use  a different decoding algorithm for each code, possibly one with the best known performance or  
one with the lowest complexity, we introduce here decoding functions. With these, we
can describe the different decoding variants without details of the used algorithms. 

Given a code $\mathcal C$ with the BPSK modulated code words $ \mathbf x \in \mathcal C$
and a received vector  $\mathbf y = \mathbf x + \mathbf z$, where  $ \mathbf z$ is the noise,
we denote the decoding function by 
\begin{equation}\Delta (\mathbf y)=\hat{\mathbf{x}}.
\end{equation}
If $ \hat{\mathbf{x}} = \mathbf x$, the decoding is correct and otherwise, we have a decoding error.
For an ML decoder the probability $P(\hat{\mathbf{x}} | \mathbf y)$ is the maximum over all 
probabilities $P(\mathbf{x} | \mathbf y)$ for all code words $ \mathbf x \in \mathcal C$.
We will also use list decoders which are denoted by
\begin{equation}\Lambda (\mathbf y)=\{ \hat{\mathbf{x}}_1, \hat{\mathbf{x}}_2, \ldots, \hat{\mathbf{x}}_L \},
\end{equation}
where the output is a list of $L$ decoding decisions, possibly the ones with the largest probabilities.
For list size $L=1$ we have $\Delta (\mathbf y)=\Lambda (\mathbf y)$.

In addition, we also use an indicator function which indicates whether a vector is a valid code word of a code $\mathcal C$
\begin{equation}\Gamma (\mathbf v)= \left\{ \begin{array}{l} 
0, \mathbf v \not\in \mathcal C \\ 
1, \mathbf v \in \mathcal C .
\end{array} \right. 
\end{equation}

\subsection{Correlation and ML Decoding}
Given the modulated code words $\mathbf x$ and a received vector  $\mathbf y$
then the ML decision is equivalent to decide for the code word with the smallest Euclidean distance. 
However, the squared Euclidean distance can also be used which is defined by
\begin{equation}\label{eq.speuclid}
d_E^2=\sum\limits_{i=0}^{n-1} (x_i-y_i)^2 =\sum\limits_{i=0}^{n-1} x_i^2- 2\sum\limits_{i=0}^{n-1} x_i y_i + \sum\limits_{i=0}^{n-1} y_i^2 .
\end{equation}
From Eq. \eqref{eq.speuclid} it follows that
the minimization of the squared Euclidean distance is equivalent to the maximization   
of the correlation of the received vector $\mathbf y$ with all
code words $\mathbf x$. The correlation is defined by
\begin{equation}\label{def:corr}
\Phi (\mathbf x, \mathbf y) = \sum\limits_{i=0}^{n-1} x_i  y_{i}
\end{equation}
and has the property $\Phi (\mathbf x, \mathbf y)= - \Phi (-\mathbf x, \mathbf y)$
which can be exploited in cases where the all-one (modulated the all-minus-one) vector is a valid code word.
Any code $\mathcal C$ which contains the all one code word can be partitioned into two
subsets $\mathcal C^-$ and $\mathcal C^+$ such that for any  $\mathbf c \in \mathcal C^- $ 
the inverted code word  is in the other subset $\mathbf c + \mathbf 1 \in \mathcal C^+ $. 
For the modulated code words $\mathbf x$ the inverted code word is  $-\mathbf x$.

\begin{lemma}\label{corrallone}
If a code contains the all-one codeword then 
maximizing  $|\Phi (\mathbf x, \mathbf y)|$ over $\mathcal C^-$ is ML decoding. 
The decoding decision is $\mathbf x_m$ if $\Phi (\mathbf x_m, \mathbf y)>0$
and $-\mathbf x_m$ if $\Phi (\mathbf x_m, \mathbf y)<0$.
\end{lemma}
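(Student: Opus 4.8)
The plan is to split the code $\mathcal C$ into the two cosets $\mathcal C^-$ and $\mathcal C^+$ and observe that, because $\mathbf 1 \in \mathcal C$, every codeword $\mathbf c \in \mathcal C$ either lies in $\mathcal C^-$ or has its complement $\mathbf c + \mathbf 1$ in $\mathcal C^-$; in modulated form this says that every $\mathbf x \in \mathcal C$ equals $\pm \mathbf x'$ for a unique $\mathbf x'$ coming from $\mathcal C^-$. Hence the full codeword set is $\{\,\pm \mathbf x' : \mathbf x' \in \mathcal C^-\,\}$, and maximizing the correlation $\Phi(\mathbf x,\mathbf y)$ over all of $\mathcal C$ is the same as maximizing $\max\{\Phi(\mathbf x',\mathbf y),\,\Phi(-\mathbf x',\mathbf y)\}$ over $\mathbf x' \in \mathcal C^-$.

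The key algebraic step is the stated identity $\Phi(-\mathbf x,\mathbf y) = -\Phi(\mathbf x,\mathbf y)$, which is immediate from $\Phi(\mathbf x,\mathbf y)=\sum_i x_i y_i$ and the fact that negating $\mathbf x$ negates every term. Using it, $\max\{\Phi(\mathbf x',\mathbf y),\,-\Phi(\mathbf x',\mathbf y)\} = |\Phi(\mathbf x',\mathbf y)|$, so the maximum of $\Phi$ over $\mathcal C$ equals $\max_{\mathbf x' \in \mathcal C^-} |\Phi(\mathbf x',\mathbf y)|$. Then I would invoke the correlation characterization of ML decoding established just before the lemma (maximizing $\Phi(\mathbf x,\mathbf y)$ over $\mathcal C$ is ML decoding, via Eq.~\eqref{eq.speuclid}) to conclude that computing $\arg\max_{\mathbf x' \in \mathcal C^-} |\Phi(\mathbf x',\mathbf y)|$ yields the ML codeword — once the sign is resolved.

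To resolve the sign: if the maximizer $\mathbf x_m \in \mathcal C^-$ has $\Phi(\mathbf x_m,\mathbf y) > 0$, then among the pair $\{\mathbf x_m, -\mathbf x_m\}$ the correlation is larger for $\mathbf x_m$, so the ML decision is $\mathbf x_m$; if $\Phi(\mathbf x_m,\mathbf y) < 0$, the correlation is larger for $-\mathbf x_m$, so the ML decision is $-\mathbf x_m$. (The boundary case $\Phi(\mathbf x_m,\mathbf y)=0$ is a tie and either choice is ML.) This matches exactly the decision rule in the statement.

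I do not expect a genuine obstacle here — the result is essentially a bookkeeping consequence of the antisymmetry of $\Phi$ together with the already-proved equivalence between correlation maximization and ML decoding. The only point requiring a little care is making the coset/complement correspondence precise (that $\mathcal C^-$ really contains exactly one representative of each complementary pair, so no codeword is double-counted and none is missed), which follows directly from $\mathbf 1 \in \mathcal C$ and the definition of the partition $\mathcal C = \mathcal C^- \cup \mathcal C^+$ introduced above the lemma.
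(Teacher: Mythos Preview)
Your proposal is correct and follows essentially the same approach as the paper: both use the partition $\mathcal C = \mathcal C^- \cup \mathcal C^+$ together with the antisymmetry $\Phi(-\mathbf x,\mathbf y) = -\Phi(\mathbf x,\mathbf y)$ to reduce the full correlation maximization to maximizing $|\Phi|$ over $\mathcal C^-$, and then read off the sign. Your write-up is in fact more explicit than the paper's (which is quite terse and leaves the sign-resolution step implicit), but the underlying argument is the same.
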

\begin{proof}
Let $\Phi (\mathbf x_m, \mathbf y)$ be the maximum correlation for 
 $\mathbf x_m \in \mathcal C$.
If  $\mathbf x_m \in \mathcal C^-$ then $|\Phi (\mathbf x_m, \mathbf y)|= \Phi (\mathbf x_m, \mathbf y)$.  
If  $\mathbf x_m \in \mathcal C^+$ then $|\Phi (\mathbf x_m, \mathbf y)|= -\Phi (\mathbf x_m, \mathbf y)$
due to the property of the correlation $\Phi (\mathbf x, \mathbf y)= - \Phi (-\mathbf x, \mathbf y)$.
\end{proof}
\subsection{ML Decoding Bound}
We will use an old result \cite[Th. 9.28, p. 384]{Boss-eng} to simulate a lower bound for ML decoding.
Assume that $\mathbf x_t$ was transmitted and is known when simulating a decoding algorithm.
For a received $\mathbf y$  the decoding algorithm decides for $\hat{\mathbf x}$.
In case $\Phi (\mathbf x_t, \mathbf y) < \Phi (\hat{\mathbf x}, \mathbf y)$ the decoding is wrong
and an ML decoder would also not decode correctly.
There could possibly exist
other code words than $\hat{\mathbf x}$ with larger correlation, however there exists at least one,
namely $\hat{\mathbf x}$,
and the ML decoder will decode wrongly.
In case $\Phi (\mathbf x_t, \mathbf y) > \Phi (\hat{\mathbf x}, \mathbf y)$
we assume that an ML decoder would have decoded correctly which might not be the case since there could possibly exist
other code words than the transmitted one with larger correlation. 
Because of this fact we only obtain a bound based on these considerations. 
Nevertheless,
if the number of incorrectly decoded transmissions of the decoding algorithm
coincides with the value of the bound the decoding algorithm has the same performance as ML decoding.

\section{Decoding Strategies for Plotkin Constructions}\label{sec:decstrat}
The addition of two binary code words corresponds to a component-wise multiplication of 
the BPSK modulated code words.
For the Plotkin construction we get
$$ |\mathbf u_0|\mathbf u_0 + \mathbf u_1| \leftrightarrow |\mathbf x_0|\mathbf x_0 \mathbf x_1|.$$
The addition of the first and the second block becomes a multiplication
$$ \mathbf u_0 + \mathbf u_0  + \mathbf u_1 =  \mathbf u_1 \leftrightarrow \mathbf x_0 \mathbf x_0 \mathbf x_1 = \mathbf x_1,$$
since  $\mathbf x_0 \mathbf x_0   = (1,1, \ldots,1)$.
Transmitting the modulated code word over a Gaussian channel we receive
$|\mathbf y_0 = \mathbf x_0 + \mathbf z_0| \mathbf y_1 = \mathbf x_0 \mathbf x_1 + \mathbf z_1|$
which has the hard decision interpretation
$|\mathbf r_0 = \mathbf u_0 + \mathbf e_0| \mathbf r_1 = \mathbf u_0 + \mathbf u_1 + \mathbf e_1|$.
According to \cite{Boss95}
we define the join operation $\Join$ by
\begin{equation}y_{i} \Join y_{j} = \mathrm{sign}(y_{i}y_{j})  \min \{|y_{i}|, |y_{j}|\}.
\end{equation}
\begin{remark} 
	The join operation 
	is an approximation of the exact calculation of the probability 
	(see \cite{Boss-eng, Geiselhart, Freudenberger})
	and is sometimes called minsum approximation. 
	For the presented decoding algorithms
this approximation is sufficient.
It can be interpreted as the combination of two reliability values to a new one 
by taking the bit value as the sign of the product and the reliability value of the more unreliable one. 
\end{remark}
In general, for $\ell + 1$ values we get
$$y_{i_0} \Join y_{i_1} \Join \ldots \Join  y_{i_\ell}  = 
\mathrm{sign}( y_{i_0}y_{i_1} \ldots y_{i_\ell})\min \{|y_{i_0}|,|y_{i_1}|, \ldots, |y_{i_\ell}|\}.$$
Note that the join operation is commutative and associative.
Since the modulated code symbols are   $x_\ell= \pm 1$ we have the 
property $(y_i \Join y_j x_\ell) = (y_i  x_\ell \Join y_j) =  x_\ell (y_i \Join y_j)$. 
For vectors  the join operation is done for each coordinate seperately.

The join operation can be used to uncover hidden code words.
For example, if we join the first and the second block of the Plotkin construction, called join-two, we get
 \begin{equation}\label{eq:jpin-two}
 \mathbf y_0 \Join \mathbf y_1 =
 \mathrm{sign}((\mathbf x_0 + \mathbf z_0) \cdot (\mathbf x_0 \mathbf x_1+ \mathbf z_1))\mu 
= \mathrm{sign}(\mathbf x_0 \mathbf x_0 \mathbf x_1+ \mathbf z')\mu = \mathrm{sign}(\mathbf x_1+ \mathbf z')\mu 
 \end{equation} 
where $\mu$ denotes the minimum and $\mathbf z'$ is a combination of the noise values $\mathbf z_0$ and $\mathbf z_1$.
The join-two operation is equivalent to the addition of the hard decision
received blocks
$\mathbf r_0 = \mathbf u_0 + \mathbf e_0 $ and 
$\mathbf r_1 = \mathbf u_0 + \mathbf u_1 + \mathbf e_1$ 
which gives $\mathbf r_0  + \mathbf r_1 = \mathbf u_1 + \mathbf e_0 + \mathbf e_1$.
The result of this join-two operation is that we get a noisy version of the hidden code word 
$\mathbf x_1$. 
An error occurs if the added noise is flipping the sign of the modulated bit $\pm 1$. Thus, 
in case of join-two of two received values
the sign of the product 
is correct if either both symbols are error free or if both symbols are erroneous.
This is equivalent to the hard decision case where an error at the same position
in $\mathbf e_0$ and $\mathbf e_1$ cancels when we add the two errors $\mathbf e_0 + \mathbf e_1$.
The sign of the join operation is wrong if one of the two  values is in error.
If more than two values are joined the sign is correct
if zero or an even number of values are in error
and wrong if an odd number of values are in error. 

Assume that we have four codes $\mathcal C_0$, $\mathcal C_1$, $\mathcal C_2$, and
$\mathcal C_3$ and a double Plotkin construction.
Then, a code word is 
$ \mathbf x = | \mathbf x_0 | \mathbf x_0 \mathbf x_1 | \mathbf x_0 \mathbf x_2 |\mathbf x_0 \mathbf x_1 \mathbf x_2\mathbf x_3 |$
which we transmit over a Gaussian channel with noise
$\mathbf z = |\mathbf z_0|  \mathbf z_1 |\mathbf z_2 |\mathbf z_3|$. We receive $\mathbf y = \mathbf x + \mathbf z = |\mathbf y_0|  \mathbf y_1 |\mathbf y_2 |\mathbf y_3|$.
The basic idea of the presented decoding based on hidden code words
is to use several decoding variants where each variant consists of four decoding steps.
The first decoding step decodes a noisy version of a hidden code word which is uncovered
by a suitable join operation.
Any decoding variant we introduce in Sec. \ref{decvariants} will start with
the decoding of a hidden code word which is from one of the codes 
$\mathcal C_0$, $\mathcal C_1$, $\mathcal C_2$, or $\mathcal C_3$. 
In order to show how hidden code words can be uncovered 
we consider the join-two operation according to Eq. \eqref{eq:jpin-two}. 
For example, $\mathbf y_2 \Join \mathbf y_3 = \mathbf x_1 \mathbf x_3 + \mathbf z'$,
where  $\mathbf z'$ is again a combination of noise values.
It uncovers a noisy version of the hidden code word $\mathbf x_1 \mathbf x_3$ which is 
according to Lemma \ref{subcodedouble}  a code word of $\mathcal C_1$.
There exist six join-two combinations of the four blocks which all uncover different hidden code words shown in 
Table \ref{table:join} where also the hard decision interpretation is included. 
We can decode each of the six combinations by a suitable decoder. 
The join-four $\mathbf y_0 \Join \mathbf y_1 \Join \mathbf y_2 \Join \mathbf y_3 = \mathbf x_3 + \mathbf z'$ uncovers a noisy version of a code word of $\mathcal C_3$.
\begin{table}
	\caption{Soft and Hard Join-Two of the Four Blocks \label{table:join}}
\begin{center}
\begin{tabular}{l|l}
	$\mathbf y_0 \Join \mathbf y_1 = \mathbf x_1 + \mathbf z'$ &  $\mathbf r_0  + \mathbf r_1 = \mathbf u_1 + \mathbf e_0 + \mathbf e_1$ \\[1ex]
\hline

	$\mathbf y_0 \Join \mathbf y_2 = \mathbf x_2 + \mathbf z'$ &  $\mathbf r_0  + \mathbf r_2 = \mathbf u_2 + \mathbf e_0 + \mathbf e_2$ \\[1ex]
\hline

	$\mathbf y_0 \Join \mathbf y_3 = \mathbf x_1\mathbf x_2 \mathbf x_3  + \mathbf z'$ &  $\mathbf r_0  + \mathbf r_3 = \mathbf u_1 +\mathbf u_2 + \mathbf u_3 +  \mathbf e_0 + \mathbf e_3$ \\[1ex]
\hline
	$\mathbf y_1 \Join \mathbf y_2 = \mathbf x_1 \mathbf x_2 + \mathbf z'$ &  $\mathbf r_1  + \mathbf r_2 = \mathbf u_1 + \mathbf u_2 + \mathbf e_1 + \mathbf e_2$ \\[1ex]
\hline
	$\mathbf y_1 \Join \mathbf y_3 = \mathbf x_2\mathbf x_3  + \mathbf z'$ &  $\mathbf r_1  + \mathbf r_3 = \mathbf u_2 +\mathbf u_3 +  \mathbf e_1 + \mathbf e_3$ \\[1ex]
\hline
	$\mathbf y_2 \Join \mathbf y_3 = \mathbf x_1\mathbf x_3  + \mathbf z'$ &  $\mathbf r_2  + \mathbf r_3 = \mathbf u_1 + \mathbf u_3 + \mathbf e_2 + \mathbf e_3$ \\[1ex]
\end{tabular}
\end{center}
\end{table}

If we assume that the decoding in the first step was correct we know one $\mathbf{x}_i$
or a combination $\mathbf{x}_i\mathbf{x}_j$
which can be used for the remaining three decoding steps.
In fact we can uncover some code words twice or four times by different
combinations of the blocks $\mathbf{y}_j$.
We can add these different noisy versions of the same hidden code word
since the modulated bits are identical in any position.
One example is the add-two operation $\mathbf y_0 + \mathbf y_1 {\mathbf{x}}_1$
when $\mathbf{x}_1$ is known.
Since $\mathbf y_0 = \mathbf x_0 + \mathbf z_0 $
and $\mathbf y_1\mathbf x_1  = \mathbf x_0 + \mathbf z_1$
we have twice the modulated code word $\mathbf{x}_0$ disturbed by different noise values.
So both are noisy versions of the same code word $\mathbf{x}_0$
and later we show that the addition of both will cancel errors. 
If only $\mathbf{x}_0$ is not known
we can get four noisy versions of $\mathbf{x}_0$ which can be added
by the add-four operation  
 $\mathbf y_0 + \mathbf y_1\mathbf x_1  + \mathbf y_2 \mathbf x_2 + \mathbf y_3\mathbf x_1\mathbf x_2\mathbf x_3$.
 In case $\mathbf{x}_1$, $\mathbf{x}_2$, and $\mathbf{x}_3$
 are correct,  many errors are canceled as we will see in the following.
If $\mathbf{x}_3$ is known 
we have another possibility: 
we can get  two noisy versions of  $\mathbf{x}_1$, namely,
 $\mathbf y_0 \Join \mathbf y_1$ and 
 $\mathbf y_2 \Join \mathbf y_3 \mathbf{x}_{3}$.
We can add them 
 $(\mathbf y_0 \Join \mathbf y_1) + (\mathbf y_2 \Join \mathbf y_3 \mathbf{x}_{3})$
 which we call the join-add operation and also this operation will cancel errors.
Finally, if  $\mathbf{x}_2$ and  $\mathbf{x}_3$ are known
we can get two noisy versions of $\mathbf{x}_0$ which can be added to obtain a vector.  
Also two noisy versions of $\mathbf{x}_0\mathbf{x}_1$
can be added to a second vector. If we join both vectors 
$(\mathbf y_0 + \mathbf y_2  \mathbf x_2) \Join (\mathbf y_1 + \mathbf y_3  \mathbf x_2\mathbf x_3) $
we get a noisy version of $\mathbf{x}_1$.
The examples for the add and join operations are listed in Table \ref{table:add-join}.
In the decoding variants we will use also further versions of add-join and join-add operations
combining other blocks.

\begin{table}
	\caption{Join and Add Operations of the Four Blocks \label{table:add-join}}
\begin{center}
\begin{tabular}{l|l}
	join-four &  $\mathbf y_0 \Join \mathbf y_1   \Join \mathbf y_2 \Join \mathbf y_3 $\\
\hline
	add-two & $\mathbf y_0 + \mathbf y_2  \mathbf x_2 $\\
\hline
	add-four &  $\mathbf y_0 + \mathbf y_1\mathbf x_1  + \mathbf y_2 \mathbf x_2 + \mathbf y_3\mathbf x_1\mathbf x_2\mathbf x_3$\\
\hline
	join-add  & $(\mathbf y_0 \Join \mathbf y_1) + (\mathbf y_2 \Join \mathbf y_3 \mathbf{x}_{3})$\\
\hline
	add-join  & $(\mathbf y_0 + \mathbf y_2  \mathbf x_2) \Join (\mathbf y_1 + \mathbf y_3  \mathbf x_2\mathbf x_3) $

\end{tabular}
\end{center}
\end{table}

\subsection{Error Cancellation}\label{sec:cancel}
Before we describe the decoding variants, in the next section, we show  
how errors are cancelled applying different add and join combinations.
We will analyze two cases when errors are cancelled.
The first case is the join operation of blocks
and the second is the combination of blocks when some code words $\mathbf x_\ell$ are known
which is the case if they have been already correctly decoded in previous decoding steps.

We start with the join operation. Let us consider the join-four case
\begin{equation*}
\mathbf w = \mathbf y_0 \Join \mathbf y_1 \Join \mathbf y_2 \Join \mathbf y_3.
\end{equation*}
Assume $\tau$ errors are randomly distributed in the four blocks and each block has $n$ positions
and let $\tau_j$ be the number of errors in $\mathbf w$.
An error means that the sign of the position is not equal to the sign of the transmitted symbol.
The join operation of two positions in error cancels the error.
For any position $\ell$ of the $n$ positions of each block the join-four operation is
$y_{0,\ell } \Join y_{1,\ell} \Join  y_{2,\ell}  \Join  y_{3, \ell}$ and
has five possibilities: 
i) all four values $y_{i,\ell }$ are error free,
ii) there is one error in only one of the four values, 
iii) there are two errors in two of the four values,
iv) three errors in three values, and v) all four values are erroneous.
The first error-free case does not influence the number of errors $\tau_j$ in $\mathbf w$.
For the second case the error in one block is also an error in $\mathbf w$.
In the third case the number of errors  in $\mathbf w$ is reduced by $2$ and $\tau_j = \tau -2$
because the error is cancelled. 
The fourth case reduces the number of errors in $\mathbf w$ also by  $2$ since the join of three errors is again
an error but two of them cancel.
Finally, in the fifth case  the number of errors in $\mathbf w$  is reduced by  $4$. 
Therefore, in the  $n$ positions of the  $\mathbf w$ obtained by join-four
are $0 \leq \tau_j = \tau - 2 \cdot i, \ i = 0, 1, \ldots$ errors.
However,  there exists only one possibility for $\tau_j = \tau$, namely, 
 that each of the  $\tau$ errors, no matter in which of the four blocks it lays, is at a different
position (of the $n$).
The probability for this case can be calculated.
\footnote{The calculation is according to the so called Birthday-Paradoxon 
where the name comes from the fact that only 23 persons are needed that the probablity that two 
of them have birthday at the same day (of the year with 365 days) is larger than 1/2.}
The first of the $\tau$ errors has $n$ positions to choose, the second has only $n-1$,
the third $n-2$, and the last has only  $n - (\tau - 1)$ choices. All together there are $n^\tau$
possible choices. Therefore, the probability that all $\tau$ errors are at different positions
(i.e., that there is no error reduction) is
\begin{equation}\label{birthparadox}
	\mathrm{P} (\tau_j = \tau) = \frac{n(n-1)(n-2) \cdots (n-\tau+1)}{n^\tau}.
\end{equation}
The probability in Eq. \eqref{birthparadox} can be approximated using $(n-i)/n = (1- i/n)$ and
$(1-\mu) \approx e^{-\mu}$
\begin{equation}\label{birthapprox}
	\mathrm{P} (\tau_j = \tau) = \prod\limits_{i=0}^{\tau-1} (1 -\frac{i}{n}) \approx 
	\prod\limits_{i=0}^{\tau-1} e^{-\frac{i}{n}} = 
	e^{-\sum_{i=0}^{\tau -1} \frac{i}{n}}
	=e^{-\frac{\tau (\tau - 1)}{2 n}}.
\end{equation}
So the more errors are in the four blocks the smaller becomes the probability that there is no error reduction. 
It is known that the probability is about one half for $\tau \approx \sqrt{n}$.
We can estimate the probability for no error reduction  
for the average number of errors.
Assuming an error probability of $p$, the expected number of errors is $\tau_e = 4 n p$
which gives $\mathrm{P} (\tau_j = \tau_e) \approx e^{- 8 n p^2}$. 

\begin{remark}\label{rem:GCC}
	The error reduction by join-four in the GCC construction can be explained as follows.
	In the first step of GCC decoding we need to decide if
	the received word belongs to the left or to the right 
	part of the partitioning of the inner $(4,4,1)$ code of the graph in Fig. \ref{fig:partdouble}.
In other words, we have to decide whether the weight of the received vector is even or odd.
The received vector can contain zero, one, two, three, or four errors.
For the cases of one or three errors we make a wrong decision, 
	since these errors change the weight of a vector from even to odd and vice versa.
	However, in
the cases of zero, two, and four errors our decision is correct since the even or odd weights stay the same.
\end{remark}

A very important fact is that any error reduction in join-four 
 is also an error reduction in at least one of the six
join-two from Table \ref{table:join}. 
More specific, when two errors are at the same position in two of the four blocks there is
exactly one of the six join-two from Table \ref{table:join} where the number of errors is reduced by two.
An error reduction when three errors are at the same position in three of the four blocks 
lead to an error reduction by two in three of the six join-two cases. Finally,  
if all four blocks have an error at the same position all six join-two have an error reduction by two.
This has proved the following Lemma.
\begin{lemma}[Error Cancellation]\label{genialparadox}
	Any error reduction in join-four leads to an error reduction
	in at least one of the six join-two of Table  \ref{table:join}.
\end{lemma}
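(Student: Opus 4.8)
The plan is to reduce the statement to a purely combinatorial claim at a single coordinate and then aggregate over coordinates. Fix one of the $n$ positions $\ell$ and consider the four received values $y_{0,\ell}, y_{1,\ell}, y_{2,\ell}, y_{3,\ell}$, calling a value \emph{in error} if its sign differs from the transmitted symbol in that block. As already observed above Eq.~\eqref{birthparadox}, the join-four operation achieves a cancellation at position $\ell$ exactly in cases iii), iv), v), i.e.\ precisely when the number $s$ of values in error at $\ell$ satisfies $s\in\{2,3,4\}$; for $s\in\{0,1\}$ nothing cancels. Likewise, the hard-decision column of Table~\ref{table:join} shows that the join-two of blocks $i$ and $j$ carries error pattern $\mathbf e_i+\mathbf e_j$, so in the sign picture the result at $\ell$ is correct iff $y_{i,\ell}$ and $y_{j,\ell}$ are both error free or both in error; hence $\mathbf y_i\Join\mathbf y_j$ produces a cancellation at $\ell$ if and only if both $y_{i,\ell}$ and $y_{j,\ell}$ are in error.

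First I would record that the six pairs listed in Table~\ref{table:join} are exactly the $\binom{4}{2}=6$ two-element subsets of $\{0,1,2,3\}$. Then I would do the case analysis on $s$. If $s=2$, with the two erroneous values in blocks $i,j$, then $\{i,j\}$ is one of the six pairs and its join-two has both endpoints in error, so it cancels (and it is the only such pair). If $s=3$, with blocks $i,j,k$ in error, then each of $\{i,j\},\{i,k\},\{j,k\}$ has both endpoints in error, so three of the six join-two operations cancel. If $s=4$, every one of the six pairs has both endpoints in error, so all six cancel. Thus in every admissible case ($s\ge 2$) at least one of the six join-two operations has a cancellation at position $\ell$ — this is the coordinate-wise form of the lemma.

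To pass from coordinates to blocks I would use that a join-two operation never increases the number of errors (joining two values yields an erroneous result iff exactly one input is erroneous, and an error-free result otherwise), so any position at which some join-two cancels strictly lowers the error count of that join-two. Applying the coordinate argument to each position where join-four reduces then yields the block-level statement.

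The argument has essentially no hard part; the only points needing care are the bookkeeping that $\mathbf y_i\Join\mathbf y_j$ removes exactly the positions where $\mathbf e_i$ and $\mathbf e_j$ both carry an error (read off Table~\ref{table:join}) and that the six pairs there exhaust all two-element subsets of the four blocks. The one mild subtlety worth a sentence is that ``error reduction'' must be phrased at the level of a single coordinate before aggregating, so that a reduction in one join-two is not confused with the behaviour of a different one; since the join is monotone (non-increasing) in the error count, this aggregation is immediate.
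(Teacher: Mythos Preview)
Your argument is correct and matches the paper's own proof almost exactly: the paper also fixes a position, distinguishes the cases of two, three, or four blocks in error there, and observes that these produce a cancellation in exactly one, three, and all six of the join-two combinations, respectively. Your additional remark on aggregating over coordinates (using that a join-two never increases the error count) is a welcome bit of extra care, but otherwise the approaches coincide.
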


Now we consider the case when
having decoded  code words $\mathbf x_\ell$ or combinations of them. 
According to Lemma \ref{subcodedouble} we have several possibilities to combine the different received blocks 
$\mathbf y_i$ in order to decode the other code words $\mathbf x_i$. 
Some of them will cancel errors 
which is indicated by an improved channel situation. 
We start assuming $\mathbf x_1$ is known.
Note that this represents the case that we  have decoded 
 $\mathbf{x}_1$ by joining two blocks (join-two)
\begin{equation}\label{jointwo}
	\hat{\mathbf{x}}_{1} = \Delta_1(\mathbf y_0 \Join \mathbf y_1) = \Delta_1(\mathbf x_1 + \mathbf z').
\end{equation}
If the decoding was correct we know $\mathbf x_1$ and
we can decode  $\hat{\mathbf{x}}_0 = \Delta_0 (\mathbf y_0 + \mathbf y_1 {\mathbf{x}}_1 )$
where the addition (add-two) cancels errors  as shown in Lemma \ref{gain3db}. 
Note that the add-two was already used for decoding in \cite{Boss95} but the gain was not explicitely mentioned. 

\begin{lemma}[$3$ dB Gain]\label{gain3db}
If $\mathbf x_1$
is known the decoding of $\mathbf y_0 + \mathbf y_1 {\mathbf{x}}_1$ for the code $\mathcal C_0$ 
results in a channel with factor $2$ respectively $3$ dB better signal-to-noise ratio than the original channel.
\end{lemma}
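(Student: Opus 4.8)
The plan is to make the BPSK signal model completely explicit and then compute the variance of the resulting noise term. First I would write, coordinate-wise, $y_{0,\ell} = x_{0,\ell} + z_{0,\ell}$ and $y_{1,\ell} = x_{0,\ell} x_{1,\ell} + z_{1,\ell}$, where $z_{0,\ell}$ and $z_{1,\ell}$ are independent Gaussian with mean $0$ and variance $\sigma^2$. Since $\mathbf x_1$ is known, using the property $(y_i x_\ell) = x_\ell y_i$ and $x_{1,\ell}^2 = 1$, the add-two gives
\begin{equation*}
y_{0,\ell} + y_{1,\ell} x_{1,\ell} = x_{0,\ell} + z_{0,\ell} + x_{0,\ell} + z_{1,\ell} = 2 x_{0,\ell} + (z_{0,\ell} + z_{1,\ell}).
\end{equation*}
So the ``signal'' part is $2\mathbf x_0$, the code word of $\mathcal C_0$ we want to decode, and the noise is $\mathbf z' = \mathbf z_0 + \mathbf z_1$.

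The key step is the variance bookkeeping. The combined noise $z'_\ell = z_{0,\ell} + z_{1,\ell}$ is zero-mean Gaussian with variance $\sigma^2 + \sigma^2 = 2\sigma^2$. Dividing the received combination by $2$ (a scaling that does not change any correlation-based or ML decision) we obtain $x_{0,\ell} + \tfrac12 z'_\ell$, where $\tfrac12 z'_\ell$ has variance $2\sigma^2/4 = \sigma^2/2$. Thus the effective noise variance is halved relative to the original channel, which on the AWGN channel with $\sigma^2 = 1/(2\cdot 10^{E_s/(10N_0)})$ corresponds to adding $10\log_{10} 2 \approx 3$ dB to $E_s/N_0$; equivalently, the signal-to-noise ratio improves by a factor $2$. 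I would also remark that the errors $\mathbf e_\ell$ appearing in the hard-decision picture of Theorem~\ref{uuplusvrm} cancel whenever $z_{0,\ell}$ and $z_{1,\ell}$ push the same coordinate the same way, which is the soft-decision counterpart of the $\mathbf e_0 + \mathbf e_1$ cancellation seen there; this makes the ``cancels errors'' phrasing precise.

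The only subtlety — and the place where one must be slightly careful rather than the place where the argument is hard — is to check that the factor-$2$ rescaling is legitimate, i.e. that decoding $2\mathbf x_0 + \mathbf z'$ for $\mathcal C_0$ is the same problem as decoding on an AWGN channel with variance $\sigma^2/2$: this holds because both the correlation metric $\Phi(\cdot,\cdot)$ and the Euclidean-distance (ML) metric are invariant under a common positive scaling of the received vector, so the per-bit SNR, not the absolute amplitude, is what matters. With that observed, the statement follows immediately from the variance computation above, and no further estimates are needed.
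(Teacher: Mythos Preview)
Your proof is correct and follows essentially the same approach as the paper: write the add-two coordinate-wise as $2x_{0,\ell} + (z_{0,\ell}+z_{1,\ell})$, observe that the summed noise has variance $2\sigma^2$ while the signal energy quadruples, and conclude the factor-$2$ (i.e.\ $3$~dB) SNR gain. Your additional remarks on the legitimacy of the rescaling and the link to hard-decision error cancellation are sound but go a bit beyond what the paper's own brief proof spells out.
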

\begin{proof}
For any position $i$ of $\mathbf y_0$ we have $y_i=x_i + z_i$ and 
for the same position in $\mathbf y_1 \mathbf x_1$ we have $y'_i=x_i + z'_i$ where the 
$x_i=1$ or $x_i=-1$. Therefore, the addition of both gives  $x_i + z_i+x_i + z'_i= \pm 2 + z_i + z'_i$.
The variance of the addition of two Gaussian random variables with variance $\sigma^2$ is $2 \sigma^2$. 
However, the signal energy is $2^2=4$ and thus the signal-to-noise ratio is increased by a factor of $2$
	which corresponds to a $3$ dB better signal-to-noise ratio.
\end{proof}

Knowing  $\mathbf x_1$
we have another possibility, namely to decode $\mathbf{x}_3$ by join-two
\begin{equation}\label{c3jtwo}
 \hat{\mathbf x}_3=\Delta_3(\mathbf y_2  \Join \mathbf y_3 {\mathbf x}_{1}) = \Delta_3(\mathbf x_3 + \mathbf z').
\end{equation}
For decoding of $\mathbf{x}_3$ we can also use join-four
\begin{equation}\label{joinall}
\hat{\mathbf{x}}_{3} = \Delta_3(\mathbf y_0 \Join \mathbf y_1 \Join \mathbf y_2 \Join \mathbf y_3 ) = \Delta_3(\mathbf x_3 + \mathbf z').
\end{equation}
In the worst case the number of errors is the sum of the errors in two blocks for Eq. \eqref{c3jtwo} 
and in four blocks for Eq. \eqref{joinall}
which means that the number of errors could be doubled or four times more than in one block. 
Therefore, it is obvious that with very high probability the decoding according to Eq. \eqref{c3jtwo} is much better than the one according to Eq. \eqref{joinall}, however, $\mathbf x_1$ must be known.

Assume now that $\mathbf{x}_3$ is known. 
Then we can decode $\mathbf{x}_1$ by join-add
\begin{equation}\label{joinadd}
\hat{\mathbf{x}}_{1} =	\Delta_1((\mathbf y_0 \Join \mathbf y_1) + (\mathbf y_2 \Join \mathbf y_3 \mathbf{x}_{3}))
	= \Delta_1((\mathbf x_1 + \mathbf z') +(\mathbf x_1 + \mathbf z''))
\end{equation}
and the addition cancels errors. However, we can not quantify the gain according to Lemma \ref{gain3db} 
in this case since
$\mathbf z'$  and  $\mathbf z''$ are combinations of Gaussian distributions and we do not know the variance of the sum.
But the error reduction is plausible by the following consideration.
For position $\ell$ we get $ y_{0,\ell} \Join y_{1,\ell} +  y_{2,\ell} \Join y_{3,\ell} x_{3,\ell} $.
If there is one error in four symbols $ y_{0,\ell}, y_{1,\ell},  y_{2,\ell}, y_{3,\ell}$, say at  $ y_{0,\ell} $
then the error disappears if the minimum of $|y_{2,\ell}| $ and $|y_{3,\ell}|$
is larger than the minimum of $|y_{0,\ell}| $ and $|y_{1,\ell}|$.

Next, we assume  that $\mathbf{x}_3$ and $\mathbf{x}_2$ are known. 
Then, $\mathbf{x}_1$ can be decoded by add-join
\begin{equation}\label{addjoin}
\hat{\mathbf{x}}_{1} = \Delta_1((\mathbf y_0 + \mathbf y_2\mathbf{x}_2 ) \Join (\mathbf y_1 + \mathbf y_3 \mathbf{x}_{2} \mathbf{x}_{3})).
\end{equation}
According to  Lemma \ref{gain3db} there is a gain of $3$ dB for both additions which cancels errors 
and afterwards the join-two    
follows.

\begin{remark} The add-join operation in the case when two code words are already decoded 
	could be replaced by 
$(\mathbf y_0  \Join \mathbf y_1) + (\mathbf y_2 \hat{\mathbf{x}}_{2} \Join  \mathbf y_1) +
(\mathbf y_2 \hat{\mathbf{x}}_{2} \Join \mathbf y_3 \hat{\mathbf{x}}_{2}\hat{\mathbf{x}}_{3}) +
(\mathbf y_0  \Join \mathbf y_3 \hat{\mathbf{x}}_{2}\hat{\mathbf{x}}_{3})$.
This is the addition of four join-two terms where each term is $\mathbf x_1 + \mathbf z'$.
However, simulations have shown that the decoding performance gets slightly worse. 
	The difference is that in add-join 
	we have a gain by the addition (Lemma \ref{gain3db}) and then a loss by the 
	join-two and in the other case we have first a loss
in each of the four join-two and then a gain by the addition of the four terms, however,
	the reliability values have been changed by the join-two.
\end{remark}

Finally, assume  that $\mathbf{x}_3$, $\mathbf{x}_2$, and $\mathbf{x}_1$ are known. 
Then, we decode $\mathbf{x}_0$ by add-four
\begin{equation}\label{addall}
\hat{\mathbf{x}}_{0} = \Delta_0(\mathbf y_0 + \mathbf y_1\mathbf{x}_1 + \mathbf y_2 \mathbf{x}_1   + \mathbf y_3 \mathbf{x}_1  \mathbf{x}_{2} \mathbf{x}_{3}).
\end{equation}
We have the addition of four Gaussian random variables and it follows from Lemma \ref{gain3db} that we have a 
gain of $6$ dB.
There exists an interesting connection between the correlation (Eq. \eqref{def:corr}) and the add-four 
which is given in the following Lemma.
\begin{lemma}[Add-Four and Correlation]\label{simplexdoplo}
Given a received vector $\mathbf{y}$ of length $4 n$, the decoding decisions of all four decoders 
with 
$\hat{\mathbf x} = \hat{\mathbf x}_0| \hat{\mathbf x}_0\hat{\mathbf x}_1
|\hat{\mathbf x}_0\hat{\mathbf x}_2 |\hat{\mathbf x}_0  \hat{\mathbf x}_1\hat{\mathbf x}_2\hat{\mathbf x}_3$, 
and the add-four $\mathbf w  = \mathbf y_0 + \hat{\mathbf x}_1\mathbf y_1 + \hat{\mathbf x}_2 \mathbf y_2 
	+ \hat{\mathbf x}_1\hat{\mathbf x}_2\hat{\mathbf x}_3 \mathbf y_3$,
	the correlation $\Phi (\hat{\mathbf x}, \mathbf y)$ is 
 \begin{equation*} \Phi (\hat{\mathbf x}, \mathbf y) = \sum\limits_{i=0}^{n - 1} \hat{x}_{0,i} w_i\end{equation*}
\end{lemma}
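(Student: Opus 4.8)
The plan is to start from the definition of the correlation in Eq.~\eqref{def:corr} applied to the length-$4n$ vectors $\hat{\mathbf x}$ and $\mathbf y$, and to split the single sum into four sums of length $n$, one per block, using the block decomposition $\mathbf y = |\mathbf y_0|\mathbf y_1|\mathbf y_2|\mathbf y_3|$ and the structure
$\hat{\mathbf x} = |\hat{\mathbf x}_0|\hat{\mathbf x}_0\hat{\mathbf x}_1|\hat{\mathbf x}_0\hat{\mathbf x}_2|\hat{\mathbf x}_0\hat{\mathbf x}_1\hat{\mathbf x}_2\hat{\mathbf x}_3|$ coming from the double Plotkin construction \eqref{doublepl}. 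This gives
$\Phi(\hat{\mathbf x},\mathbf y)=\sum_{i}\hat x_{0,i}y_{0,i}+\sum_{i}\hat x_{0,i}\hat x_{1,i}y_{1,i}+\sum_{i}\hat x_{0,i}\hat x_{2,i}y_{2,i}+\sum_{i}\hat x_{0,i}\hat x_{1,i}\hat x_{2,i}\hat x_{3,i}y_{3,i}$, where each index $i$ runs from $0$ to $n-1$.

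Next I would use the fact, recalled in Sec.~\ref{sec:softdec}, that binary addition of code words corresponds to the coordinatewise product of their BPSK images, so that block $j$ of $\hat{\mathbf x}$ is literally the coordinatewise product of the modulated words it is built from, and in every one of the four blocks the factor $\hat x_0$ (i.e.\ $\hat x_{0,i}$ in coordinate $i$) is present. Consequently the $i$-th summand of the second, third and fourth block can be written as $\hat x_{0,i}\cdot(\hat x_{1,i}y_{1,i})$, $\hat x_{0,i}\cdot(\hat x_{2,i}y_{2,i})$ and $\hat x_{0,i}\cdot(\hat x_{1,i}\hat x_{2,i}\hat x_{3,i}y_{3,i})$ respectively, while the first block's summand is already $\hat x_{0,i}\cdot y_{0,i}$.

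Finally I would collect the four contributions belonging to a fixed coordinate $i$ and pull out the common factor $\hat x_{0,i}$ by distributivity, obtaining
$\Phi(\hat{\mathbf x},\mathbf y)=\sum_{i=0}^{n-1}\hat x_{0,i}\bigl(y_{0,i}+\hat x_{1,i}y_{1,i}+\hat x_{2,i}y_{2,i}+\hat x_{1,i}\hat x_{2,i}\hat x_{3,i}y_{3,i}\bigr)$, and recognising the bracket as exactly the $i$-th component $w_i$ of the add-four vector $\mathbf w=\mathbf y_0+\hat{\mathbf x}_1\mathbf y_1+\hat{\mathbf x}_2\mathbf y_2+\hat{\mathbf x}_1\hat{\mathbf x}_2\hat{\mathbf x}_3\mathbf y_3$ finishes the proof. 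There is no real obstacle here: the statement is a direct computation, and the only thing requiring a little care is the bookkeeping of which products of the $\hat{\mathbf x}_\ell$ sit in each block of \eqref{doublepl}; note that the identity uses only distributivity and the block structure, not the property $\hat x_{\ell,i}^2=1$, although the latter ($x_{\ell,i}=\pm1$) is what makes the operations $\hat{\mathbf x}_\ell\mathbf y_\ell$ the natural "add" operations in the modulated domain.
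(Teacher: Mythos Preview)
Your proposal is correct and follows essentially the same route as the paper's own proof: expand the correlation over the four blocks, factor out the common $\hat x_{0,i}$, and identify the remaining bracket with $w_i$. The only cosmetic difference is that the paper indexes the full-length vector as $y_i, y_{n+i}, y_{2n+i}, y_{3n+i}$ whereas you use the block notation $y_{0,i},\ldots,y_{3,i}$.
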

\begin{proof}
	The correlation can be written as
	\[\begin{array}{rcl}
		\Phi (\hat{\mathbf x}, \mathbf y) &=& \sum\limits_{i=0}^{n - 1} \hat{x}_{0,i} y_i
	+  \hat{x}_{0,i} \hat{x}_{1,i} y_{n + i}+  \hat{x}_{0,i} \hat{x}_{2,i} y_{2n + i}
+  \hat{x}_{0,i} \hat{x}_{1,i} \hat{x}_{2,i}\hat{x}_{3,i}  y_{3n + i}\\
		&=&  \sum\limits_{i=0}^{n - 1} \hat{x}_{0,i} (y_i
	+  \hat{x}_{1,i} y_{n + i}+  \hat{x}_{2,i} y_{2n + i}
	+   \hat{x}_{1,i} \hat{x}_{2,i} \hat{x}_{3,i}y_{3n + i})\\
		&=&  \sum\limits_{i=0}^{n - 1} \hat{x}_{0,i} w_i
	\end{array}
		\]
because the add-four $\mathbf w= \mathbf y_0 + \hat{\mathbf x}_1\mathbf y_1 + \hat{\mathbf x}_2 \mathbf y_2 
	+ \hat{\mathbf x}_1\hat{\mathbf x}_2\hat{\mathbf x}_3 \mathbf y_3$
	has the terms
$w_i =y_i
	+  \hat{x}_{1,i} y_{n + i}+  \hat{x}_{2,i} y_{2n + i}
	+   \hat{x}_{1,i} \hat{x}_{2,i} \hat{x}_{3,i}y_{3n + i}$.
\end{proof}

Whenever we need to calculate the correlation of a final decoding decision we can use 
the result of Lemma \ref{simplexdoplo}.
The effect of all the gains and losses of the described add and join
combinations are simulated in Ex. \ref{ex:gains}.
\begin{example}[Error Cancellation]\label{ex:gains}
We will simulate the expected values for the error cancellations  
of the introduced operations.
We assume BPSK and a  Gaussian channel at $2$ dB signal-to-noise ratio with a code of rate one half.
The channel error probability is for this case $0.1041$. 
	It is assumed that all used $\mathbf{x}_i$ are known 
	which makes the results  independent of the used codes.
	The following table shows the simulated probabilities. 
	\vskip5pt
	\begin{center}
		\begin{tabular}{c|c|c|c|c|c|c}
			Channel & join-two & join-four & join-add & add-join & add-two & add-four\\
			$2$~dB  & Eq.~\eqref{jointwo} & Eq.~\eqref{joinall} & Eq.~\eqref{joinadd}& Eq.~\eqref{addjoin}&  Lemma \ref{gain3db} & Eq.~\eqref{addall}\\
			\hline
			$0.1041$ & $0.1872$ & $0.3036$ & $0.1006$ & $0.0725$ & $0.039$ & $0.0056$
		\end{tabular}
	\end{center}
	\vskip5pt
According to Eq.~\eqref{birthparadox} or Eq.~\eqref{birthapprox} the error reduction of the join-four should be significant. 
In the worst case for join-four the error probability would be four times the channel error probability which is $0.4164$. 
But it is  $0.3036$ which is about $25 \%$ less errors (each cancellation reduces by two errors).
	The join-two, Eq.~\eqref{jointwo}, which in the worst case 
	would have the doubled error probability of the channel $0.2082$,  has the value $0.1872$ 
	due to cancellation which is about $10 \%$ less errors.
Note that these are expected values and in the six cases of Table \ref{table:join} there will be a variance.

	Even there is no analytical derivation for join-add, Eq.~\eqref{joinadd}, 
	we have an error probability of  $0.1006$ which means the addition
	must reduce errors since each of the join-two has an error probability of  $0.1872$.
	Therefore, the addition reduces by $0.0866$.
	This is comparable to  Lemma \ref{gain3db} which reduces the error probability from
	the channel $0.1041$ to $0.039$ by $0.0651$.
	That errors may be cancelled is also observed for add-join which is less than two times add-two.
	According to Remark \ref{rem:GCC} we could calculate the join-four probability by
	the expected number of errors in the four bits of the inner code.
	The probability for $\tau$ errors is $P(\tau) = \binom{4}{\tau} p^\tau (1-p)^{n-\tau}$.
	Now the decision of the weight of the received four bits is wrong
	if  one or three errors have ocurred. We get $P(1) + P(3) = 0.3034$ which has to be compared to the 
	simulated join-four which is  $0.3036$. Thus, both interpretations lead to the same result.

	In Fig.\ref{fig:jastat} the gains and losses are graphically depicted.
	Note that the gain of add-four is extremely large which indicates that the code $\mathcal{C}_0$
	has to decode very little errors. 
\end{example}

\begin{figure}[htb]
	\begin{center}
	\includegraphics[width=0.65\textwidth]{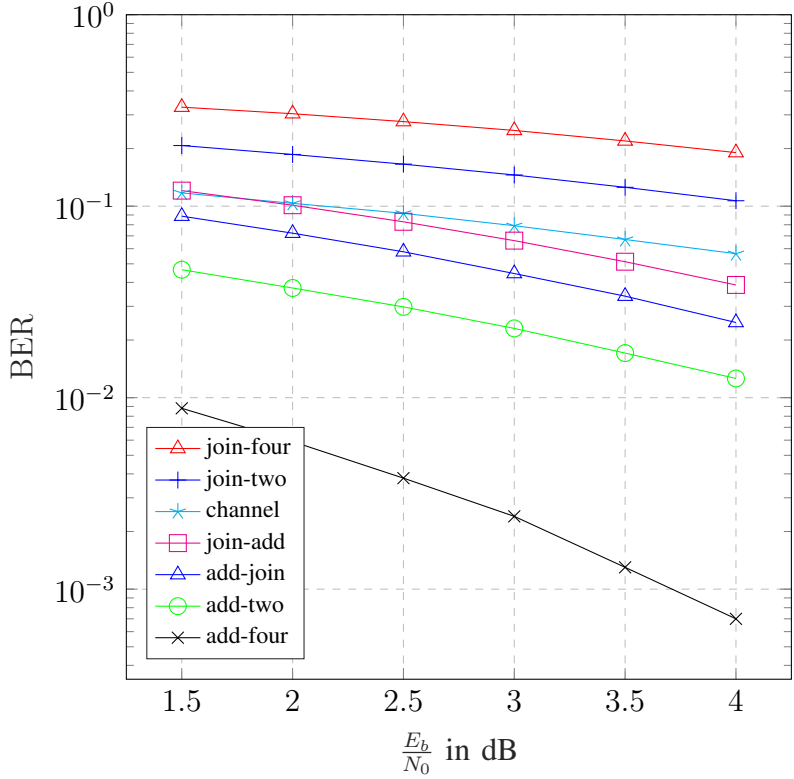}
	\end{center}
	\caption{BER for Different (join, add) Combinations}\label{fig:jastat}
\end{figure}
Note that the simulated BER of Fig. \ref{fig:jastat} are expected values and due to the variance 
there are cases with smaller and larger error reduction. Namely,
$(\mathbf y_0 + \mathbf y_2\mathbf{x}_2 ) \Join (\mathbf y_1 + \mathbf y_3 \mathbf{x}_{2} \mathbf{x}_{3})$ and 
$(\mathbf y_0 + \mathbf y_1\mathbf{x}_1 ) \Join (\mathbf y_2 + \mathbf y_3 \mathbf{x}_{1} \mathbf{x}_{3})$
will have a different error reduction with high probability.
Further, combining different blocks is to combine different channel variabilities. 
Consider the two combinations $y_{0,\ell} \Join y_{1,\ell}$ and $y_{0,\ell} \Join y_{2,\ell}$
where $y_{0,\ell}$ is erroneous with large absolute value.
Both others are correct.
If the absolute value of $y_{1,\ell}$ is much smaller than the one of $y_{2,\ell}$
in the first join-two the reliability of an error position is smaller than in the second join-two
and this might be an advantage in soft-decision decoding.
These facts will be exploited when using different variants for the decoding.

\subsection{Decoding Variants}\label{decvariants}
A decoding variant is a particular decoder realization which calculates a code word, given a received vector  $\mathbf{y}$.
Assume that we have four codes 
and a code word 
$\mathbf x_0 | \mathbf x_0 \mathbf x_1 |\mathbf x_0 \mathbf x_2 | \mathbf x_0 \mathbf x_1 \mathbf x_2\mathbf x_3|$
of a double Plotkin construction.
The classical decoder would start decoding $\mathcal{C}_3$ and then the other codes.
But it is also possible to decode $\mathcal{C}_2$ first and then to continue with the others.
Similarly, we can start with $\mathcal{C}_1$.
Even starting the decoding with  $\mathcal{C}_0$ is possible.
Any variant will combine  different received blocks 
$\mathbf y_i$ to uncover hidden code words (see Lemma \ref{subcodedouble}) and decode them. 
Thus, any variant calculates an estimated code word
	$\hat{\mathbf{x}} =\hat{\mathbf{x}}_{0}|\hat{\mathbf{x}}_{0}\hat{\mathbf{x}}_{1}|\hat{\mathbf{x}}_{0}\hat{\mathbf{x}}_{2}|\hat{\mathbf{x}}_{0}\hat{\mathbf{x}}_{1}\hat{\mathbf{x}}_{2}\hat{\mathbf{x}}_{3} $.
With this estimated code word the correlation 	$\Phi(\hat{\mathbf{x}}, \mathbf{y})$ can be  calculated and
 used as a measure in order to compare the results of the different variants.
 The decision of the variant which has the largest correlation is choosen as the final decoding decision.
We have shown in the last section that when a decoding decision was made then  the decoding of
the other codes benefits from error cancellation.
The idea is that some variant will decode correctly.  

In the following, we will describe  the possible decoding variants.
For most variants $\mathbf{x}_0$ is decoded in the last step by		
		$$\hat{\mathbf{x}}_{0}=\Delta_0(\mathbf y_0 + \mathbf y_1 \hat{\mathbf{x}}_{1} +\mathbf y_2 \hat{\mathbf{x}}_{2} + \mathbf y_3 \hat{\mathbf{x}}_{1} \hat{\mathbf{x}}_{2}  \hat{\mathbf{x}}_{3})$$
and we will omit this equation in the descriptions. 
However, in these cases we can use Lemma \ref{simplexdoplo} in order to calculate the correlation
exploiting that the join-four is known already.

%calculated and in case the correlation is larger than $\Phi_m$ the actual largest correlation the decision $\mathbf x_d = \hat{\mathbf{x}}$ 	and the maximal correlation $\Phi_m=\Phi(\hat{\mathbf{x}},\mathbf y)$ are changed.  This will be denoted by $\Xi(\hat{\mathbf{x}},\mathbf x_d):$ if $\Phi(\hat{\mathbf{x}},\mathbf y) > \Phi_m: \Phi_m=\Phi(\hat{\mathbf{x}},\mathbf y), \mathbf x_d =\hat{\mathbf{x}}$.  The initializarion is $\Phi_m=0$.

\subsubsection{Variants Starting Decoding with $\mathcal C_3$}\label{sec:startc3} 
We will denote the three possible variants by V$(4, \mathbf y_0 \Join \mathbf y_1)$,
V$(4, \mathbf y_0 \Join \mathbf y_2)$, and V$(4, \mathbf y_1 \Join \mathbf y_2)$ in order to indicate
that all three variants start to decode $\mathcal C_3$ by the join-four
\begin{equation*}
	\hat{\mathbf{x}}_{3,i}=\Lambda_3(\mathbf y_0 \Join \mathbf y_1 \Join \mathbf y_2 \Join \mathbf y_3 )
\end{equation*}
which possibly provides a list. The variants differ in the second and third decoding step.
Note that for this decoding variants no subcode structure is necessary.

\begin{itemize}

	\item 	V$(4, \mathbf y_0 \Join \mathbf y_2)$: 
		
		$\hat{\mathbf{x}}_{2}=\Delta_2((\mathbf y_0 \Join \mathbf y_2) + (\mathbf y_1 \Join \mathbf y_3 \hat{\mathbf{x}}_{3}))$, \ 
				$\hat{\mathbf{x}}_{1}=\Delta_1((\mathbf y_0 + \mathbf y_2 \hat{\mathbf{x}}_{2}) \Join (\mathbf y_1 + \mathbf y_3 \hat{\mathbf{x}}_{2}\hat{\mathbf{x}}_{3}))$, \ $\hat{\mathbf{x}}_{0}$. % \ $\Xi(\hat{\mathbf{x}},\mathbf x_d)$.

	\item 	V$(4, \mathbf y_0 \Join \mathbf y_1)$: 
		
		$\hat{\mathbf{x}}_{1}=\Delta_1((\mathbf y_0 \Join \mathbf y_1) + (\mathbf y_2 \Join \mathbf y_3 \hat{\mathbf{x}}_{3}))$, \ 
		$\hat{\mathbf{x}}_{2}=\Delta_2((\mathbf y_0 + \mathbf y_1 \hat{\mathbf{x}}_{1}) \Join (\mathbf y_2 + \mathbf y_3 \hat{\mathbf{x}}_{1}\hat{\mathbf{x}}_{3}))$, \ $\hat{\mathbf{x}}_{0}$. %\  $\Xi(\hat{\mathbf{x}},\mathbf x_d)$.

	\item 	V$(4, \mathbf y_1 \Join \mathbf y_2)$: 
		
		$(\hat{\mathbf{x}}_{1}\hat{\mathbf{x}}_{2})=\Delta_1((\mathbf y_1 \Join \mathbf y_2) + (\mathbf y_0 \Join \mathbf y_3 \hat{\mathbf{x}}_{3}))$, \ 
				$\hat{\mathbf{x}}_{1} =\Delta_1((\mathbf y_0 + \mathbf y_3 (\hat{\mathbf{x}}_{1}\hat{\mathbf{x}}_{2})\hat{\mathbf{x}}_{3}) \Join (\mathbf y_1 + \mathbf y_2 (\hat{\mathbf{x}}_{1}\hat{\mathbf{x}}_{2}))$, \ 
	$\hat{\mathbf{x}}_{2}=(\hat{\mathbf{x}}_{1}\hat{\mathbf{x}}_{2})\hat{\mathbf{x}}_{1} $, \ $\hat{\mathbf{x}}_{0}$.
\end{itemize}	
The variant 
V$(4, \mathbf y_0 \Join \mathbf y_2)$ corresponds to the
 recursive decoding in \cite{Boss95}. Further, 
 $\mathrm V(4, \mathbf y_1 \Join \mathbf y_3)=\mathrm V(4, \mathbf y_0 \Join \mathbf y_2)$, 
 $\mathrm V(4, \mathbf y_2 \Join \mathbf y_3)=\mathrm V(4, \mathbf y_0 \Join \mathbf y_1)$, and 
 $\mathrm V(4, \mathbf y_0 \Join \mathbf y_3)=\mathrm V(4, \mathbf y_1 \Join \mathbf y_2)$. 
This is because the second decoding step
of  V$(4, \mathbf y_1 \Join \mathbf y_3)$ and V$(4, \mathbf y_0 \Join \mathbf y_2)$ is the same which also holds for the other cases. 

\subsubsection{Variants Starting Decoding with $\mathcal C_1$, $\mathcal C_2$}\label{sec:starteizw} 
We will denote the six possible variants by V$(\mathbf y_i \Join \mathbf y_j)$ 
where $i$ and $j$ are the two blocks we join for the first decoding step  which could be list decoding.
We assume that $\mathcal C_3 \subset \mathcal C_2 \subseteq \mathcal C_1$. In all six variants
the second decoding step decodes $ \hat{\mathbf x}_3$ and the last decoding step $ \hat{\mathbf x}_0$. 
Note that the code $\mathcal C_0$ can possibly be independent from the other codes.

\begin{itemize}
	\item 	V$(\mathbf y_0 \Join \mathbf y_1)$:  \quad
				$\hat{\mathbf x}_{1,i}=\Lambda_1(\mathbf y_0 \Join \mathbf y_1)$,

		$ \hat{\mathbf x}_3=\Delta_3(\mathbf y_2  \Join \mathbf y_3 \hat{\mathbf x}_{1})$, \ 
		$ \hat{\mathbf x}_2=\Delta_2((\mathbf y_0  + \mathbf y_1 \hat{\mathbf x}_{1}) \Join (\mathbf y_2 +\mathbf y_3 \hat{\mathbf x}_{1} \hat{\mathbf x}_3))$, \ $\hat{\mathbf{x}}_{0}$. %\ $\Xi(\hat{\mathbf{x}},\mathbf x_d)$.

	\item 	V$(\mathbf y_0 \Join \mathbf y_2)$: \quad 
		$\hat{\mathbf x}_{2,i}=\Lambda_2(\mathbf y_0 \Join \mathbf y_2)$, 

		$ \hat{\mathbf x}_3=\Delta_3(\mathbf y_1  \Join \mathbf y_3 \hat{\mathbf x}_{2})$, \ 
		$ \hat{\mathbf x}_1=\Delta_1((\mathbf y_0  + \mathbf y_2 \hat{\mathbf x}_{2}) \Join (\mathbf y_1 +\mathbf y_3 \hat{\mathbf x}_{2} \hat{\mathbf x}_3))$, \  $\hat{\mathbf{x}}_{0}$. %\ $\Xi(\hat{\mathbf{x}},\mathbf x_d)$.

	\item 	V$(\mathbf y_0 \Join \mathbf y_3)$:  \quad
		$(\hat{\mathbf x}_{1,i}\hat{\mathbf x}_{2,i}\hat{\mathbf x}_{3,i})=\Lambda_1(\mathbf y_0 \Join \mathbf y_3)$,

		$ \hat{\mathbf x}_3=\Delta_3(\mathbf y_1  \Join \mathbf y_2 (\hat{\mathbf x}_{1}\hat{\mathbf x}_{2}\hat{\mathbf x}_{3}))$, $(\hat{\mathbf x}_{1}\hat{\mathbf x}_{2}) = (\hat{\mathbf x}_{1}\hat{\mathbf x}_{2}\hat{\mathbf x}_{3})\hat{\mathbf x}_{3}$,
		$ \hat{\mathbf x}_1=\Delta_1((\mathbf y_0  + \mathbf y_3 (\hat{\mathbf x}_{1}\hat{\mathbf x}_{2}\hat{\mathbf x}_{3})) \Join (\mathbf y_1 +\mathbf y_2 (\hat{\mathbf x}_{1} \hat{\mathbf x}_{2})))$, \ $\hat{\mathbf{x}}_{2}=(\hat{\mathbf{x}}_{1}\hat{\mathbf{x}}_{2})\hat{\mathbf{x}}_{1} $, \  $\hat{\mathbf{x}}_{0}$.

	\item 	V$(\mathbf y_1 \Join \mathbf y_2)$:  \quad
		$(\hat{\mathbf x}_{1,i}\hat{\mathbf x}_{2,i})=\Lambda_1(\mathbf y_1 \Join \mathbf y_2)$, 

		$ \hat{\mathbf x}_3=\Delta_3(\mathbf y_0  \Join \mathbf y_3 (\hat{\mathbf x}_{1}\hat{\mathbf x}_{2}))$, \ 
		$ \hat{\mathbf x}_1=\Delta_1((\mathbf y_0  + \mathbf y_3 (\hat{\mathbf x}_{1}\hat{\mathbf x}_{2})\hat{\mathbf x}_{3}) \Join (\mathbf y_1 +\mathbf y_2 (\hat{\mathbf x}_{1} \hat{\mathbf x}_{2})))$, \  $\hat{\mathbf{x}}_{2}=(\hat{\mathbf{x}}_{1}\hat{\mathbf{x}}_{2})\hat{\mathbf{x}}_{1} $, $\hat{\mathbf{x}}_{0}$.
	\item 	V$(\mathbf y_1 \Join \mathbf y_3)$: \quad 
		$(\hat{\mathbf x}_{2,i}\hat{\mathbf x}_{3,i})=\Lambda_2(\mathbf y_1 \Join \mathbf y_3)$,

		$ \hat{\mathbf x}_3=\Delta_3(\mathbf y_0  \Join \mathbf y_2 (\hat{\mathbf x}_{2}\hat{\mathbf x}_{3}))$,  
		$\hat{\mathbf{x}}_{2}=(\hat{\mathbf x}_{2}\hat{\mathbf x}_{3})\hat{\mathbf{x}}_{3} $,  
		$ \hat{\mathbf x}_1=\Delta_1((\mathbf y_0  + \mathbf y_2 \hat{\mathbf x}_{2}) \Join (\mathbf y_1 +\mathbf y_3 (\hat{\mathbf x}_{2} \hat{\mathbf x}_{3})))$,   $\hat{\mathbf{x}}_{0}$.

	\item 	V$(\mathbf y_2 \Join \mathbf y_3)$:  \quad
		$(\hat{\mathbf x}_{1,i}\hat{\mathbf x}_{3,i})=\Lambda_1(\mathbf y_2 \Join \mathbf y_3)$, 

		$ \hat{\mathbf x}_3=\Delta_3(\mathbf y_0  \Join \mathbf y_1 (\hat{\mathbf x}_{1}\hat{\mathbf x}_{3}))$,  
$\hat{\mathbf{x}}_{1}=(\hat{\mathbf x}_{1}\hat{\mathbf x}_{3})\hat{\mathbf{x}}_{3} $, 
		$ \hat{\mathbf x}_2=\Delta_1((\mathbf y_0  + \mathbf y_1 \hat{\mathbf x}_{1}) \Join (\mathbf y_2 +\mathbf y_3 (\hat{\mathbf x}_{1} \hat{\mathbf x}_{3})))$,   $\hat{\mathbf{x}}_{0}$.
	
\end{itemize}
Another interpretation is possible when combining two variants. We get three
codes by combining 
V$(\mathbf y_0 \Join \mathbf y_1)$ and  V$(\mathbf y_2 \Join \mathbf y_3)$, 
V$(\mathbf y_0 \Join \mathbf y_2)$ and  V$(\mathbf y_1 \Join \mathbf y_3)$, and  
V$(\mathbf y_1 \Join \mathbf y_2)$ and  V$(\mathbf y_0 \Join \mathbf y_3)$. 
The first gives a codeword  $\mathbf x_1|\mathbf x_1 \mathbf x_3$, the second 
 $\mathbf x_2|\mathbf x_2 \mathbf x_3$, and the last
  $\mathbf x_1\mathbf x_2|\mathbf x_1 \mathbf x_2\mathbf x_3$ which are Plotkin constructions.
  Let us consider the case V$(\mathbf y_0 \Join \mathbf y_1)$ and  V$(\mathbf y_2 \Join \mathbf y_3)$
  with a list decoder.
  This can be written as
$$
\Lambda_1(\mathbf y_0 \Join \mathbf y_1)| \Lambda_1(\mathbf y_2 \Join \mathbf y_3) = \{ \hat{\mathbf x}_{1,i} \} | \{(\hat{\mathbf x}_{1,j}\hat{\mathbf x}_{3,j})\}.
$$
Clearly, the addition of a codeword from the list 
$\{ \hat{\mathbf x}_{1,i} \}$ and from the list
$\{ (\hat{\mathbf x}_{1,j}\hat{\mathbf x}_{3,j})\}$
must be a code word from $\mathcal C_3$. For modulated code words the multiplication
can be checked by the indicator function
$\Gamma_3(\hat{\mathbf x}_{1,i} (\hat{\mathbf x}_{1,j}\hat{\mathbf x}_{3,j}))$.
Note that the addition of two codewords of $\mathcal C_1$
is a code word of $\mathcal C_1$ and not necessarily a code word of $\mathcal C_3$.
But since it is a Plotkin constructed code word the addition must be a code word of $\mathcal C_3$
otherwise it is not a valid combination.
This decoding strategy will also be used in the next section.

\begin{remark}\label{zerocwtrick}
The calculation for the third decoding step 
can be modified which is described for variant V$(\mathbf y_0 \Join \mathbf y_1)$. 
We know for the third step  $\hat{\mathbf x}_{1}$ and $\hat{\mathbf x}_{3}$ from the two previous steps
and decode $\hat{\mathbf x}_{2}$ by
		$ \hat{\mathbf x}_2=\Delta_2((\mathbf y_0  + \mathbf y_1 \hat{\mathbf x}_{1}) \Join (\mathbf y_2 +\mathbf y_3 \hat{\mathbf x}_{1} \hat{\mathbf x}_3))$.
		We could also calculate
	$ \mathbf y_0  \Join \mathbf y_1 \hat{\mathbf x}_{1}$ and
	$ \mathbf y_2  \Join \mathbf y_3 \hat{\mathbf x}_{1} \hat{\mathbf x}_3$ instead
	which both must be the all zero-code word.
Therefore, all positions with values less than zero are erroneous.
However, we know already the erroneous positions of both terms 
	from the decoding decisions $\hat{\mathbf x}_{1}$ and $\hat{\mathbf x}_{3}$ of the first two steps.
	The addition 	$ (\mathbf y_0  \Join \mathbf y_1 \hat{\mathbf x}_{1})
	+ (\mathbf y_2  \Join \mathbf y_3 \hat{\mathbf x}_{1} \hat{\mathbf x}_3)$ must also be the all-zero
code word and several errors are cancelled. If the errors which are not cancelled
would be the same as in
		$ (\mathbf y_0  + \mathbf y_1 \hat{\mathbf x}_{1}) \Join (\mathbf y_2 +\mathbf y_3 \hat{\mathbf x}_{1} \hat{\mathbf x}_3)$ we would have a very simple decoder for $\hat{\mathbf x}_{2}$.
		But simulations have shown that only in very few cases the uncancelled error positions are identical for the two
		calculations.
Nevertheless, this interpretation could possibly support the decoding 
of $\hat{\mathbf x}_{2}$. The improvement for such a support in decoding was not studied and remains an open problem.
\end{remark}

\subsubsection{Variants Starting Decoding with $\mathcal C_0$}\label{sec:startnull}
We assume that $\mathcal C_3 \subset \mathcal C_2 \subseteq \mathcal C_1 \subset \mathcal C_0$.
The variant V$(\mathbf y_0, \mathbf y_1)$ uses list decoders $\Lambda_0$ for $\mathcal C_0$
for the received blocks $\mathbf y_0$  and $\mathbf y_1$.
Let 
$$
\Lambda_0 (\mathbf y_0) = \{ \hat{\mathbf x}_{0,1}, \hat{\mathbf x}_{0,2}, \ldots, \hat{\mathbf x}_{0,L}\}
$$
and
$$
\Lambda_0 (\mathbf y_1) = \{ (\hat{\mathbf x}_{0,1}\hat{\mathbf x}_{1,1}), (\hat{\mathbf x}_{0,2}\hat{\mathbf x}_{1,2}), \ldots, (\hat{\mathbf x}_{0,L}\hat{\mathbf x}_{1,L})\}
$$
where $(\hat{\mathbf x}_{0,j}\hat{\mathbf x}_{1,j})$ is a code word from $\mathcal C_0$.
Now, we can check with the indicator function
if the addition of the code words of the two lists is a code word of $\mathcal C_1$
\begin{equation*}\Gamma_1 (\hat{\mathbf x}_{0,i}(\hat{\mathbf x}_{0,j}\hat{\mathbf x}_{1,j}) )= \left\{ \begin{array}{l} 
0, \\ 
1 .
\end{array} \right. 
\end{equation*}
Only solutions with indicator equal $1$ are valid solutions.
For any valid solution we have a pair $\hat{\mathbf{x}}_{0}$ and $\hat{\mathbf{x}}_{1}$ with which we can continue.
This leads to the following six variants.

\begin{itemize}
	\item 	V$(\mathbf y_0, \mathbf y_1)$: \quad $\Lambda_0(\mathbf y_0)$,  $\Lambda_0(\mathbf y_1)$,
		$\Gamma_1 (\hat{\mathbf x}_{0,i}(\hat{\mathbf x}_{0,j}\hat{\mathbf x}_{1,j}) )$

			$ \hat{\mathbf x}_3=\Delta_3(\mathbf y_2  \Join \mathbf y_3 \hat{\mathbf x}_{1})$, \
		$ \hat{\mathbf x}_2=\Delta_2((\mathbf y_0  + \mathbf y_1 \hat{\mathbf x}_{1}) \Join (\mathbf y_2 +\mathbf y_3 \hat{\mathbf x}_{1} \hat{\mathbf x}_3))$.
	
	\item 	V$(\mathbf y_0, \mathbf y_2)$: \quad $\Lambda_0(\mathbf y_0)$,  $\Lambda_0(\mathbf y_2)$,
		$\Gamma_2 (\hat{\mathbf x}_{0,i}(\hat{\mathbf x}_{0,j}\hat{\mathbf x}_{2,j}) )$

		$ \hat{\mathbf x}_3=\Delta_3(\mathbf y_1  \Join \mathbf y_3 \hat{\mathbf x}_{2})$, \ 
		$ \hat{\mathbf x}_1=\Delta_1((\mathbf y_0  + \mathbf y_2 \hat{\mathbf x}_{2}) \Join (\mathbf y_1 +\mathbf y_3 \hat{\mathbf x}_{2} \hat{\mathbf x}_3))$.

\item 	V$(\mathbf y_0, \mathbf y_3)$: \quad $\Lambda_0(\mathbf y_0)$,  $\Lambda_0(\mathbf y_3)$,
		$\Gamma_1 (\hat{\mathbf x}_{0,i}(\hat{\mathbf x}_{0,j}\hat{\mathbf x}_{1,j}\hat{\mathbf x}_{2,j}\hat{\mathbf x}_{3,j}) )$  

	$ \hat{\mathbf x}_3=\Delta_3(\mathbf y_1  \Join \mathbf y_2 (\hat{\mathbf x}_{1}\hat{\mathbf x}_{2}\hat{\mathbf x}_{3}))$, $(\hat{\mathbf x}_{1}\hat{\mathbf x}_{2}) = (\hat{\mathbf x}_{1}\hat{\mathbf x}_{2}\hat{\mathbf x}_{3})\hat{\mathbf x}_{3}$,
		$ \hat{\mathbf x}_1=\Delta_1((\mathbf y_0  + \mathbf y_3 (\hat{\mathbf x}_{1}\hat{\mathbf x}_{2}\hat{\mathbf x}_{3})) \Join (\mathbf y_1 +\mathbf y_2 (\hat{\mathbf x}_{1} \hat{\mathbf x}_{2})))$, \ $\hat{\mathbf{x}}_{2}=(\hat{\mathbf{x}}_{1}\hat{\mathbf{x}}_{2})\hat{\mathbf{x}}_{1} $.

	\item 	V$(\mathbf y_1, \mathbf y_2)$: \quad $\Lambda_0(\mathbf y_1)$,  $\Lambda_0(\mathbf y_2)$,
		$\Gamma_1 ((\hat{\mathbf x}_{0,i} \hat{\mathbf x}_{1,i})(\hat{\mathbf x}_{0,j}\hat{\mathbf x}_{2,j}) )$
		
		$ \hat{\mathbf x}_3=\Delta_3(\mathbf y_0  \Join \mathbf y_3 (\hat{\mathbf x}_{1}\hat{\mathbf x}_{2}))$, \ 
		$ \hat{\mathbf x}_1=\Delta_1((\mathbf y_0  + \mathbf y_3 (\hat{\mathbf x}_{1}\hat{\mathbf x}_{2})\hat{\mathbf x}_{3}) \Join (\mathbf y_1 +\mathbf y_2 (\hat{\mathbf x}_{1} \hat{\mathbf x}_{2})))$, \  $\hat{\mathbf{x}}_{2}=(\hat{\mathbf{x}}_{1}\hat{\mathbf{x}}_{2})\hat{\mathbf{x}}_{1} $, \   $\hat{\mathbf{x}}_{0}=(\hat{\mathbf{x}}_{0}\hat{\mathbf{x}}_{1})\hat{\mathbf{x}}_{1} $.

	\item 	V$(\mathbf y_1, \mathbf y_3)$: \quad  $\Lambda_0(\mathbf y_1)$,  $\Lambda_0(\mathbf y_3)$,
		$\Gamma_2 ((\hat{\mathbf x}_{0,i}\hat{\mathbf x}_{1,i})(\hat{\mathbf x}_{0,j}\hat{\mathbf x}_{1,j}\hat{\mathbf x}_{2,j}\hat{\mathbf x}_{3,j}) )$ 

		$ \hat{\mathbf x}_3=\Delta_3(\mathbf y_0  \Join \mathbf y_2 (\hat{\mathbf x}_{2}\hat{\mathbf x}_{3}))$,  
		$\hat{\mathbf{x}}_{2}=(\hat{\mathbf x}_{2}\hat{\mathbf x}_{3})\hat{\mathbf{x}}_{3} $,  
		$ \hat{\mathbf x}_1=\Delta_1((\mathbf y_0  + \mathbf y_2 \hat{\mathbf x}_{2}) \Join (\mathbf y_1 +\mathbf y_3 (\hat{\mathbf x}_{2} \hat{\mathbf x}_{3})))$, \   $\hat{\mathbf{x}}_{0}=(\hat{\mathbf{x}}_{0}\hat{\mathbf{x}}_{1})\hat{\mathbf{x}}_{1} $  .

	\item 	V$(\mathbf y_2, \mathbf y_3)$: \quad $\Lambda_0(\mathbf y_2)$,  $\Lambda_0(\mathbf y_3)$,
		$\Gamma_1 ((\hat{\mathbf x}_{0,i}\hat{\mathbf x}_{2,i})(\hat{\mathbf x}_{0,j}\hat{\mathbf x}_{1,j}\hat{\mathbf x}_{2,j}\hat{\mathbf x}_{3,j}) )$

		$ \hat{\mathbf x}_3=\Delta_3(\mathbf y_0  \Join \mathbf y_1 (\hat{\mathbf x}_{1}\hat{\mathbf x}_{3})$, \ 
$\hat{\mathbf{x}}_{1}=(\hat{\mathbf x}_{1}\hat{\mathbf x}_{3})\hat{\mathbf{x}}_{3} $, \ 
		$ \hat{\mathbf x}_2=\Delta_1((\mathbf y_0  + \mathbf y_1 \hat{\mathbf x}_{1}) \Join (\mathbf y_2 +\mathbf y_3 (\hat{\mathbf x}_{1} \hat{\mathbf x}_{3})))$, \   $\hat{\mathbf{x}}_{0}=(\hat{\mathbf{x}}_{0}\hat{\mathbf{x}}_{2})\hat{\mathbf{x}}_{2} $.

\end{itemize}

The decoding strategy  starting with $\mathcal C_0$ can 
be modified to get a list decoder or a decoder as follows. 
We will denote the decoding variant by
V$((\mathbf y_0, \mathbf y_1), (\mathbf y_2, \mathbf y_3))$. 
In the first step we decode each block with a list decoder $\Lambda_0$ 
for $\mathcal C_0$.
With the list decoders
 $\Lambda_0(\mathbf y_0)$,  $\Lambda_0(\mathbf y_1)$,  $\Lambda_0(\mathbf y_2)$, and
 $\Lambda_0(\mathbf y_3)$ we get the four lists
$\{\hat{\mathbf x}_{0,j}\}$,
$\{(\hat{\mathbf x}_{0,j}\hat{\mathbf x}_{1,j})\}$,
$\{(\hat{\mathbf x}_{0,j}\hat{\mathbf x}_{2,j})\}$, and
$\{(\hat{\mathbf x}_{0,j}\hat{\mathbf x}_{1,j} \hat{\mathbf x}_{2,j}\hat{\mathbf x}_{3,j})\}$.
In the second step we use the indicator function $\Gamma_1$ 
for the sum of code words from the first and second list
and for the sum of the third and fourth list which is
$\hat{\mathbf x}_{0,i}(\hat{\mathbf x}_{0,j}\hat{\mathbf x}_{1,j})$
and 
$(\hat{\mathbf x}_{0,i}\hat{\mathbf x}_{2,i})(\hat{\mathbf x}_{0,j}\hat{\mathbf x}_{1,j}\hat{\mathbf x}_{2,j}\hat{\mathbf x}_{3,j})$.
For all cases  where the indicator function is one we have a code word
of  a Plotkin construction, namely
$\mathbf x_0 | \mathbf x_0 \mathbf x_1$ 
from the first two lists and
$\mathbf x_0 \mathbf x_2 | \mathbf x_0 \mathbf x_2 \mathbf x_1 \mathbf x_3 $ 
from the second two lists.
Both are code words of a Plotkin construction with the 
codes $\mathcal C_0$ and $\mathcal C_1$.
Thus, with these valid code words we have now two new lists, namely 
$\{\hat{\mathbf x}_{0,i}|\hat{\mathbf x}_{0,i}\hat{\mathbf x}_{1,i}\}$
and $\{ \hat{\mathbf x}_{0,j}\hat{\mathbf x}_{2,j}|\hat{\mathbf x}_{0,j}\hat{\mathbf x}_{1,j}\hat{\mathbf x}_{2,j}\hat{\mathbf x}_{3,j} \}$.
The last step is to use the indicator function $\Gamma_3$  
for the sum of the four blocks which must be a code word of the code $\mathcal C_3$
and is for the modulated code words
$\hat{\mathbf x}_{0,i} \hat{\mathbf x}_{0,i}\hat{\mathbf x}_{1,i}
 \hat{\mathbf x}_{0,j}\hat{\mathbf x}_{2,j} \hat{\mathbf x}_{0,j}\hat{\mathbf x}_{1,j}\hat{\mathbf x}_{2,j}\hat{\mathbf x}_{3,j}$.
When the indicator function is one we have a valid code word of the double Plotkin construction.
Thus, we have a list of possible decoding decisions.
Therefore, the described decoder is a list decoder for a double Plotkin construction
\begin{equation}\label{dplistdec}
	\Lambda_{dP} = \mathrm{V}((\mathbf y_0, \mathbf y_1), (\mathbf y_2, \mathbf y_3)).
\end{equation}
When selecting the solution of the list with the largest correlation we have
a decoder.

The disadvantage of this decoder 
V$((\mathbf y_0, \mathbf y_1), (\mathbf y_2, \mathbf y_3))$  is that
all lists $\Lambda_0(\mathbf y_i)$
must contain the correct solution, otherwise
the decoding can not be correct.
By using several variants V$(\mathbf y_i, \mathbf y_j)$
only the two lists of one variant must contain the correct solutions
which is a much weaker condition.

\subsection{Recursive Decoding}\label{sec:recdec}
For RM codes it is possible to recursively apply the double Plotkin construction
which can be seen also from the illustration in Fig. \ref{fig:rm-tree}.
Namely, a  $\mathcal R(r,m), 2 \leq r \leq m-2$ code is a double Plotkin construction of the codes   
$\mathcal C_0 = \mathcal R(r,m-2)$, 
$\mathcal C_1 = \mathcal C_2 =  \mathcal R(r-1,m-2)$, 
and $\mathcal C_3 = \mathcal R(r-2,m-2)$.
In order to decode for example $\mathcal R(r-1,m-2)$
we can again use the double Plotkin construction with
$\mathcal C'_0 = \mathcal R(r-1,m-4)$, 
$\mathcal C'_1 = \mathcal C'_2 =  \mathcal R(r-2,m-4)$, 
and $\mathcal C'_3 = \mathcal R(r-3,m-4)$. This can be continued
until the codes have short enough length to be decoded by correlation. 

According to Sec. \ref{sec:starteizw}, the variants
V$(\mathbf y_0 \Join \mathbf y_1)|$V$(\mathbf y_2 \Join \mathbf y_3)$,
V$(\mathbf y_0 \Join \mathbf y_2)|$V$(\mathbf y_1 \Join \mathbf y_3)$, and
V$(\mathbf y_1 \Join \mathbf y_2)|$V$(\mathbf y_0 \Join \mathbf y_3)$
are $\mathcal R(r-1,m-1)$
and each of them can be interpretated again as a double Plotkin construction with
$\mathcal C'_0 = \mathcal R(r-1,m-3)$, 
$\mathcal C'_1 = \mathcal C'_2 =  \mathcal R(r-2,m-3)$, 
and $\mathcal C'_3 = \mathcal R(r-3,m-3)$.

\begin{example}[Recursive Decoding]\label{ex:r37}
	The code $\mathcal R(3,7) = (128,64,16)$ is a double Plotkin construction with the four codes   
	$\mathcal C_0 = \mathcal R(3,5) = (32,26,4)$, 
$\mathcal C_1 = \mathcal C_2 =  \mathcal R(2,5) = (32,16,8)$, 
	and $\mathcal C_3 = \mathcal R(1,5) = (32,6,16)$ (see Fig. \ref{fig:rm-tree}).
Then, the $\mathcal R(2,5) = (32,16,8)$ is a double Plotkin construction with the codes
	$\mathcal C'_0 = \mathcal R(2,3)=(8,7,2)$, 
	$\mathcal C'_1 = \mathcal C'_2 =  \mathcal R(1,3)=(8,4,4)$, 
	and $\mathcal C'_3 = \mathcal R(0,3) = (8,1,8)$.
Further, the $\mathcal R(3,5) = (32,26,4)$ is a double Plotkin construction with the codes
	$\mathcal C'_0 = \mathcal R(3,3)=(8,8,1)$, 
	$\mathcal C'_1 = \mathcal C'_2 =  \mathcal R (2,3)=(8,7,2)$, 
	and $\mathcal C'_3 = \mathcal R(1,3) = (8,4,4)$.
The code $\mathcal R(1,5) = (32,6,16)$ is a simplex code and thus a Plotkin construction
with	$\mathcal C_0 = \mathcal R(1,4)=(16,5,8)$ and 
	$\mathcal C_1 = \mathcal R(0,4)=(16,1,16)$. 
	Therefore, for decoding of the $\mathcal R(3,7) = (128,64,16)$
	we need to decode multiple times the codes
	$(8,8,1)$, $(8,7,2)$, $(8,4,4)$, $(8,1,8)$, $(16,5,8)$, and $(16,1,16)$.

Another possibility is to consider a Plotkin construction where one code is the $(64,22,16)$ code which
can be interpreted as a double Plotkin construction with	$\mathcal C_0 = \mathcal R(3,4)=(16,11,4)$, 
	$\mathcal C_1 = \mathcal C_2 =  \mathcal R(2,4)=(16,5,8)$, 
	and $\mathcal C_3 = \mathcal R(1,4) = (16,1,16)$.
\end{example}
For Plotkin constructions with other code classes the recursive decoding is only possible if the 
used codes have the necessary subcode properties.

\section{Performance of Double Plotkin Constructions}\label{sec:decperform}

We will first introduce decoding algorithms which are the realizations of the decoding
functions. Short codes will be decoded by the correlation with all code words.
To decode simplex codes we introduce two ML decoders.
Also, repetition and parity-check codes can be ML decoded.
These algorithms will be used to decode the $\mathcal R(2,5)=(32,16,8)$ RM code.
We study the performance of different variants and their combinations
without and with list decoding of the hidden code words in the first decoding step.
Some examples of the complexity for the decoding of some variants are given.
Finally, we show that also constructions which are not RM codes 
as well as constructions based on BCH codes can be decoded with variants which decode hidden code words.
We introduce two classes of half-rate codes based on RM codes.
\subsection{Decoding Algorithms}
We will describe possible realizations of the decoding functions.
Since we use the recursive decoding described in Sec. \ref{sec:recdec}
we need to decode only short codes which we can do by calculating the correlation
with all code words according to Lemma \ref{corrallone}.
In addition, we have to decode simplex, repetition, and parity-check codes.
We will describe possible algorithms in the following.

Let us recall an old result \cite[Th. 9.27, p. 381]{Boss-eng} for ML decoding of 
first-order RM codes $\mathcal R(1,m)$ (simplex codes).
For these codes the code $\mathcal C_1$ is a repetition code $(2^{m-1},1,2^{m-1})$
and  $\mathcal C_0$ is also a simplex code $(2^{m-1},m,2^{m-2})$.
The following Lemma is a modification of the old result.
\begin{lemma}[ML Decoding of Simplex Codes]\label{mldecr1m}
If the code $\mathcal C_1$ in a Plotkin construction is a repetition code and 
 the code $\mathcal C_0$ contains the all-one code word, then
ML decoding of the code $\mathcal C = | \mathbf u_0 | \mathbf u_0 + \mathbf u_1 | \leftrightarrow 
	| \mathbf x_0 | \mathbf x_0  \mathbf x_1 | $
can be done by  $|\Phi (\mathbf x_0, \mathbf y_0)| + |\Phi (\mathbf x_0, \mathbf y_1)|$
over all code words of $\mathcal C^-_0$ which (according to Lemma \ref{corrallone}) 
contains only either a code word or the inverted code word but not both.
\end{lemma}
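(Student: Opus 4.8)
The plan is to invoke the equivalence between ML decoding and maximization of the correlation $\Phi(\mathbf{x},\mathbf{y})$ over all modulated code words of $\mathcal C$, and then to exploit the very simple structure of the repetition code $\mathcal C_1$. First I would write an arbitrary modulated code word of $\mathcal C$ as $\mathbf x = |\mathbf x_0|\mathbf x_0 \mathbf x_1|$ with $\mathbf x_0 \in \mathcal C_0$ and $\mathbf x_1$ a modulated code word of the repetition code, so that $\mathbf x_1$ is either the all-one vector $\mathbf 1$ or its inverse $-\mathbf 1$. Since the correlation is a sum over coordinates, it splits over the two blocks as $\Phi(\mathbf x,\mathbf y) = \Phi(\mathbf x_0,\mathbf y_0) + \Phi(\mathbf x_0 \mathbf x_1,\mathbf y_1)$, and using the property $\Phi(\mathbf x,\mathbf y) = -\Phi(-\mathbf x,\mathbf y)$ the second term equals $+\Phi(\mathbf x_0,\mathbf y_1)$ if $\mathbf x_1 = \mathbf 1$ and $-\Phi(\mathbf x_0,\mathbf y_1)$ if $\mathbf x_1 = -\mathbf 1$.

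Next, I would carry out the maximization in two stages. For a fixed $\mathbf x_0$, the best choice of $\mathbf x_1$ among its two possibilities yields $\Phi(\mathbf x_0,\mathbf y_0) + |\Phi(\mathbf x_0,\mathbf y_1)|$, with the optimal $\mathbf x_1$ read off from the sign of $\Phi(\mathbf x_0,\mathbf y_1)$. It then remains to maximize $\Phi(\mathbf x_0,\mathbf y_0) + |\Phi(\mathbf x_0,\mathbf y_1)|$ over $\mathbf x_0 \in \mathcal C_0$. Because $\mathcal C_0$ contains the all-one code word, it is the disjoint union of the pairs $\{\mathbf x_0,-\mathbf x_0\}$ with $\mathbf x_0 \in \mathcal C_0^-$; replacing $\mathbf x_0$ by $-\mathbf x_0$ flips the sign of $\Phi(\mathbf x_0,\mathbf y_0)$ but leaves $|\Phi(\mathbf x_0,\mathbf y_1)|$ unchanged, so the larger of the two objective values in each pair is $|\Phi(\mathbf x_0,\mathbf y_0)| + |\Phi(\mathbf x_0,\mathbf y_1)|$. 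Hence the overall maximum over $\mathcal C$ equals $\max_{\mathbf x_0 \in \mathcal C_0^-}\bigl(|\Phi(\mathbf x_0,\mathbf y_0)| + |\Phi(\mathbf x_0,\mathbf y_1)|\bigr)$, which is the asserted decoding rule.

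Finally I would describe how the decoded word is reconstructed from the maximizing $\mathbf x_0 \in \mathcal C_0^-$: the sign of $\Phi(\mathbf x_0,\mathbf y_0)$ decides whether the left block is $\mathbf x_0$ or $-\mathbf x_0$ --- this is exactly Lemma \ref{corrallone} applied to $\mathcal C_0$ with received block $\mathbf y_0$ --- and then the sign of $\Phi(\mathbf x_0,\mathbf y_1)$ relative to that chosen left block fixes $\mathbf x_1$, hence the whole code word $|\mathbf x_0|\mathbf x_0\mathbf x_1|$. I do not expect a genuine obstacle here, since the lemma is short; the only points requiring care are the sign bookkeeping when $\mathbf x_0$ is exchanged with $-\mathbf x_0$, and stating explicitly that the modulated repetition code consists of precisely the two vectors $\pm\mathbf 1$ so that the inner maximization is over a two-element set.
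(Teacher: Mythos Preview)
Your proposal is correct and follows essentially the same approach as the paper: split the correlation over the two blocks, use that the modulated repetition code word is $\pm\mathbf 1$ to absorb it into an absolute value, and then exploit the pairing $\{\mathbf x_0,-\mathbf x_0\}$ of $\mathcal C_0$ to reduce to $\mathcal C_0^-$ with a second absolute value. Your two-stage maximization and sign bookkeeping for the reconstruction are in fact a bit more explicit than the paper's version, but the content is identical.
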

\begin{proof}
Let the maximum of the correlations of all code words of $\mathcal C$ be $\phi_m = \Phi (\mathbf x, \mathbf y)$
and the corresponding code word
is $\mathbf x_m | \mathbf x_m \mathbf x_1 \in \mathcal C$
where  $\mathbf x_1$ is either all-one or the all-minus-one vector.
Thus, either $\mathbf x_m | \mathbf x_m$ or  $\mathbf x_m | -\mathbf x_m$.
For the code word  $\mathbf x_m \in \mathcal C_0$ we can write
$\phi_m=\Phi (\mathbf x_m, \mathbf y_0) + \Phi (\mathbf x_m, \mathbf x_1\mathbf y_1)$.
Now, for  $\mathbf x_m \in \mathcal C^-_0$ 
the maximum of the correlation 
$\Phi (\mathbf x_0 , \mathbf y_0)$ for  $\mathbf x_0 \in \mathcal C^-_0$ is $\phi_m$
for $\mathbf x_0 = \mathbf x_m $.
In case of  $\mathbf x_m \not\in \mathcal C^-_0$ according to the definition 
$\mathbf -x_m \in \mathcal C^-_0$
and the correlation is  
$-\phi_m = \Phi (\mathbf -x_m , \mathbf y_0)$. Taking the absolute value  
	will give this maximum.
The same argument holds for $\Phi (\mathbf x_m \mathbf x_1, \mathbf y_1)$
 since $\mathbf x_1$
can only be the all-one or the all-minus-one vector.
If the sum $|\Phi (\mathbf x_0 , \mathbf y_0)| + |\Phi (\mathbf x_0, \mathbf y_1)|$
is maximized at 
 $\mathbf x_d$ then the decision is $\hat{\mathbf x}_0 = \mathbf x_d$ if $\Phi (\mathbf x_d, \mathbf y_0) > 0$
	and $\hat{\mathbf x}_0 = -\mathbf x_d$ otherwise.
The decision is $\hat{\mathbf x}_0 \hat{\mathbf x}_1 = \hat{\mathbf x}_0$ if the sign of $\Phi (\mathbf x_d, \mathbf y_0) $
is the same as the sign of $\Phi (\mathbf x_d, \mathbf y_1)$
and $\hat{\mathbf x}_0 \hat{\mathbf x}_1 = -\hat{\mathbf x}_0$ otherwise. 
\end{proof}

The interpretation of a simplex code of length $4n$ as 
double Plotkin construction yields another ML decoding algorithm.
Only two different codes are used, a simplex code $\mathcal C_0 $ and a repetition
code $\mathcal C_1 = \mathcal C_2$ ($\mathcal C_3 $ is not used or consists of the all-one code word only)
where the length of each code is $n$. 
A code word is then
 $\mathbf x_0 | \mathbf x_0 \mathbf x_1 | \mathbf x_0 \mathbf x_2 | \mathbf x_0 \mathbf x_1 \mathbf x_2 $.
The add-four is 
$\mathbf w = \mathbf y_0 + \mathbf x_1\mathbf y_1+ \mathbf x_2\mathbf y_2+ \mathbf x_1\mathbf x_2\mathbf y_3$.
The two repetition codes $\mathbf x_1$ and $\mathbf x_2$ can be either all-one or all-minus-one
 which gives four possible cases  
$\mathbf w = \mathbf y_0 + \mathbf y_1+ \mathbf y_2+ \mathbf y_3$,
$\mathbf w= \mathbf y_0 - \mathbf y_1+ \mathbf y_2- \mathbf y_3$,
$\mathbf w = \mathbf y_0 + \mathbf y_1 - \mathbf y_2 - \mathbf y_3$, and
$\mathbf w = \mathbf y_0 - \mathbf y_1 - \mathbf y_2 + \mathbf y_3$.
We get $6$ dB gain and  $\mathbf w$
is a noisy code word of a simplex code of length $n$.
We can decode each and get four estimates $\hat{\mathbf x}_{0}$.
In order to compare the four different solutions we have to calculate
 $\hat{\mathbf x} = \hat{\mathbf x}_0 | \hat{\mathbf x}_0 \hat{\mathbf x}_1| \hat{\mathbf x}_0 \hat{\mathbf x}_2 |
  \hat{\mathbf x}_0 \hat{\mathbf x}_1 \hat{\mathbf x}_2 $ and the correlation
$\Phi (\hat{\mathbf x}, \mathbf y)$.
However, according to Lemma \ref{simplexdoplo}, this can be done by
summation of the terms $x_{0,i}w_{i}$.
If the decoding of $\hat{\mathbf x}_{0}$ is ML then the correlation
is maximal in each of the four cases. If we choose the maximum of the four cases
we get the largest possible correlation and have an ML decoder which has proved the following lemma.

\begin{lemma}[ML Decoding of Simplex Codes II]\label{mldoplr1m}
A double Plotkin construction of a simplex code  of length $4 n$
	consists of a simplex code $\mathcal C_0 $ and a repetition
code $\mathcal C_1 = \mathcal C_2$
where the length of each code is $n$. 
Using a list of length four with all possible combinations of the two repetition codes
gives the add-four  
$\mathbf w = \mathbf y_0  \pm \mathbf y_1 \pm \mathbf y_2 \pm \mathbf y_3$.
Decoding by $\mathbf w $ by an ML decoder for the simplex code $\mathcal C_0$
	is ML decoding.
\end{lemma}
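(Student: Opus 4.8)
The plan is to reduce ML decoding of the length-$4n$ simplex code $\mathcal C$ to four independent ML decodings of the short simplex code $\mathcal C_0$. First I would record the shape of the code words. Since $\mathcal C$ is the double Plotkin construction with $\mathcal C_0$ a simplex code, $\mathcal C_1=\mathcal C_2$ a repetition code, and $\mathcal C_3$ trivial, every modulated code word of $\mathcal C$ has the form
\[
\mathbf x = \mathbf x_0 \,|\, \mathbf x_0 \mathbf x_1 \,|\, \mathbf x_0 \mathbf x_2 \,|\, \mathbf x_0 \mathbf x_1 \mathbf x_2,
\qquad \mathbf x_0 \in \mathcal C_0,
\]
where $\mathbf x_1$ and $\mathbf x_2$ are the modulated words of a repetition code, i.e.\ each equals the all-one or the all-minus-one vector, and $\mathbf x_3=\mathbf 1$. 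Reading $\mathbf x_0$ off the first block and $\mathbf x_1,\mathbf x_2$ off the second and third blocks shows this parametrisation is a bijection, so $\mathcal C$ splits into exactly four cosets indexed by the pair $(\mathbf x_1,\mathbf x_2)$, and inside each coset the code words are in one-to-one correspondence with $\mathcal C_0$.

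Second, I would use that ML decoding is the maximisation of the correlation $\Phi(\mathbf x,\mathbf y)$ over $\mathbf x\in\mathcal C$, and split this maximisation along the coset decomposition:
\[
\max_{\mathbf x \in \mathcal C} \Phi(\mathbf x, \mathbf y)
= \max_{(\mathbf x_1, \mathbf x_2)} \ \max_{\mathbf x_0 \in \mathcal C_0} \Phi\bigl(\mathbf x_0 \,|\, \mathbf x_0 \mathbf x_1 \,|\, \mathbf x_0 \mathbf x_2 \,|\, \mathbf x_0 \mathbf x_1 \mathbf x_2,\ \mathbf y\bigr).
\]
Applying Lemma~\ref{simplexdoplo} with $\hat{\mathbf x}_3=\mathbf 1$, the inner correlation equals $\sum_i x_{0,i} w_i = \Phi(\mathbf x_0,\mathbf w)$, where $\mathbf w = \mathbf y_0 + \mathbf x_1 \mathbf y_1 + \mathbf x_2 \mathbf y_2 + \mathbf x_1 \mathbf x_2 \mathbf y_3$. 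As $\mathbf x_1,\mathbf x_2$ each range over $\{\mathbf 1,-\mathbf 1\}$ and the sign in the last term is forced to be the product of the first two, $\mathbf w$ runs through exactly the four vectors $\mathbf y_0 \pm \mathbf y_1 \pm \mathbf y_2 \pm \mathbf y_3$ listed before the lemma (only four of the eight sign patterns occur).

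Third, for each fixed pair $(\mathbf x_1,\mathbf x_2)$ the vector $\mathbf w$ is a fixed received vector, and $\max_{\mathbf x_0\in\mathcal C_0}\Phi(\mathbf x_0,\mathbf w)$ is by definition the value produced by an ML decoder for $\mathcal C_0$ applied to $\mathbf w$ (equivalently, by Lemma~\ref{corrallone}, by maximising $|\Phi(\mathbf x_0,\mathbf w)|$ over $\mathcal C_0^-$, or by the fast rule of Lemma~\ref{mldecr1m}). Hence the procedure in the statement — form the four vectors $\mathbf w$, ML-decode $\mathcal C_0$ on each to obtain $\hat{\mathbf x}_0$, reconstruct the four candidates $\hat{\mathbf x}=\hat{\mathbf x}_0|\hat{\mathbf x}_0\hat{\mathbf x}_1|\hat{\mathbf x}_0\hat{\mathbf x}_2|\hat{\mathbf x}_0\hat{\mathbf x}_1\hat{\mathbf x}_2$, and keep the one with the largest $\Phi(\hat{\mathbf x},\mathbf y)$ — computes exactly the right-hand side of the displayed maximisation, i.e.\ it outputs a code word of $\mathcal C$ of maximal correlation. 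That is ML decoding of $\mathcal C$.

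I expect the only delicate point to be bookkeeping rather than anything deep: one must verify that the final comparison among the four candidates is done with the correlation of the full length-$4n$ code word and that this coincides with $\Phi(\hat{\mathbf x}_0,\mathbf w)$ — which is precisely the content of Lemma~\ref{simplexdoplo} with $\hat{\mathbf x}_3=\mathbf 1$ — and that the four cosets genuinely exhaust $\mathcal C$, which is the bijectivity of the Plotkin representation noted above. The $6$~dB gain of the add-four, mentioned in the statement, follows from Lemma~\ref{gain3db} applied twice; it is not needed for correctness and only explains why decoding $\mathcal C_0$ on $\mathbf w$ faces a much better channel than the original one.
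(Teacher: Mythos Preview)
Your argument is correct and follows essentially the same route as the paper: both reduce the full correlation $\Phi(\mathbf x,\mathbf y)$ via Lemma~\ref{simplexdoplo} to $\Phi(\mathbf x_0,\mathbf w)$ for each of the four sign patterns $(\mathbf x_1,\mathbf x_2)$, observe that ML decoding of $\mathcal C_0$ on $\mathbf w$ achieves the inner maximum, and then take the outer maximum over the four candidates. Your presentation is slightly more explicit about the coset decomposition and the bijectivity of the Plotkin parametrisation, but the substance is the same.
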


\begin{example}[ML Decoding of Simplex Code]\label{ex:mlrm13}
The code  $\mathcal C_0(4,3,2)$ consists of the four code words of $\mathcal C^-_0$, namely 
$0000$, $0011$, $0101$, and $0110$
and the inverted code words  $\mathcal C^+_0$ are
$1111$, $1100$, $1010$, and $1001$.
We use the repetition code $\mathcal C_1(4,1,4)$  
	and with the Plotkin construction we get the code $\mathcal R(1,3)=\mathcal C(8,4,4)$.  
According to Lemma \ref{mldecr1m} we can ML decode this code by the four correlations
\[
|y_0+y_1+y_2+y_3|+|y_4+y_5+y_6+y_7|= |\phi_{0,0}| +|\phi_{0,1}| 
\]
\[
|y_0+y_1-y_2-y_3|+|y_4+y_5-y_6-y_7|= |\phi_{1,0}| +|\phi_{1,1}| 
\]
\[
|y_0-y_1+y_2-y_3|+|y_4-y_5+y_6-y_7|= |\phi_{2,0}| +|\phi_{2,1}| 
\]
\[
|y_0-y_1-y_2+y_3|+|y_4-y_5-y_6+y_7|= |\phi_{3,0}| +|\phi_{3,1}| 
\]
We choose the maximal value of these four, say it has index $i$.  
If the sign of $\phi_{i,0}>0$ then it is the code word and otherwise the inverted code word.
	If the sign of  $\phi_{i,0}$ is the
same as for  $\phi_{i,1}$ then it is the all-zero codeword of the repetition code otherwise the all-one code word.
Note that one correlation consists of $7$ additions, thus, all together $28$ additions;  finding the maximum 
needs $4$ comparisons.

For the ML decoding according to Lemma \ref{mldoplr1m}
we use the codes $\mathcal C_0(2,2,1)$ and  $\mathcal C_1(2,1,2) = \mathcal C_2(2,1,2)$
for a double Plotkin construction. 
For all four cases of possible repetition codes we calculate the add-four
\[
w_0 = y_0+y_2+y_4+y_6,\quad w_1 = y_1+y_3+y_5+y_7  
\]
\[
w_0 = y_0-y_2+y_4-y_6, \quad w_1 = y_1-y_3+y_5-y_7  
\]
\[
w_0 = y_0+y_2-y_4-y_6, \quad w_1 = y_1+y_3-y_5-y_7  
\]
\[
w_0 = y_0-y_2-y_4+y_6, \quad w_1 = y_1-y_3-y_5+y_7.
\]
Now we decide the code word $\mathbf x_0$
	as $(1,1)$ if $w_0 > 0$ and $w_1 > 0$,
	as $(1,-1)$ if $w_0 > 0$ and $w_1 < 0$,
	as $(-1,1)$ if $w_0 < 0$ and $w_1 > 0$, and
	as $(-1,-1)$ if $w_0 < 0$ and $w_1 < 0$.
	We choose the decision with maximum  $|w_0| + |w_1| $
	which has the largest correlation of all possible code words.
	We need $3$ additions per $w_i$ which gives the total mumber $24$ for all eight coefficients.
	Then,  $4$ additions and $4$ comparisons. 
\end{example}
Finally, we recall the well known ML decoding of repetition and parity-check codes.
\begin{lemma}[ML Decoding of Repetition and Parity-Check Codes]\label{mlpcrep}
	A repetition code $\mathcal R(0,m) = (2^m,1,2^m)$ can be ML decoded given the received vector $\mathbf y$ by
calculating $\Phi  = \sum\limits_{i=0}^{n-1}  y_{i}$. For $\Phi > 0$ the decision is the all-zero code word
and otherwise the all-one code word.
A parity-check code $\mathcal R(m-1,m) = (2^m,2^m-1,2)$ can be ML decoded by flipping the position with the smallest reliability if the weight is odd.
\end{lemma}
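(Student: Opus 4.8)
The plan is to derive both statements from the fact, established before Eq.~\eqref{def:corr}, that ML decoding on the AWGN channel is equivalent to maximizing the correlation $\Phi(\mathbf{x},\mathbf{y})=\sum_i x_i y_i$, and then to use the very simple structure of the two codes.

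First I would handle the repetition code. It has exactly two code words, the all-zero and the all-one word, which BPSK-modulate to $(1,\dots,1)$ and $(-1,\dots,-1)$; their correlations with $\mathbf{y}$ are $\sum_i y_i$ and $-\sum_i y_i$. An ML decoder picks the larger of the two, so the all-zero word wins precisely when $\Phi=\sum_i y_i>0$, which is the claim (a tie $\Phi=0$ may be broken arbitrarily). This is also the special case $\mathcal{C}^-=\{\mathbf{0}\}$ of Lemma~\ref{corrallone}, so it could alternatively be quoted directly from there.

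Next I would treat the parity-check code. Using Eq.~\eqref{eq.speuclid} I would observe that coordinate-wise $x_i y_i\le|y_i|$, with equality iff $x_i=\mathrm{sign}(y_i)$, i.e.\ iff $x_i$ agrees with the hard-decision vector $\mathbf{r}$; hence the unconstrained maximum of $\Phi$ is $\sum_i|y_i|$, attained by $\mathbf{r}$. For a candidate code word disagreeing with $\mathbf{r}$ exactly on a position set $S$ one then gets $\Phi(\mathbf{x},\mathbf{y})=\sum_i|y_i|-2\sum_{j\in S}|y_j|$, so ML decoding amounts to minimizing $\sum_{j\in S}|y_j|$ over all $S$ that turn $\mathbf{r}$ into a valid (even-weight) word. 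Flipping one bit toggles the weight parity, so after flipping $|S|$ bits the weight is even iff $|S|\equiv\mathrm{wt}(\mathbf{r})\pmod 2$. If $\mathrm{wt}(\mathbf{r})$ is even, $S=\emptyset$ is admissible and optimal, so $\mathbf{r}$ is the ML word; if $\mathrm{wt}(\mathbf{r})$ is odd, $|S|$ must be odd, hence $|S|\ge1$, and since every $|y_j|\ge0$ we have $\sum_{j\in S}|y_j|\ge\min_j|y_j|$, with equality for the singleton $S$ consisting of the position of smallest reliability. Flipping that single position is therefore ML.

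The correlation bookkeeping is routine; the one step that has to be phrased carefully is the parity-plus-nonnegativity argument showing that no odd-size flip set of three or more positions can beat the single least-reliable position, and that ties in the minimal reliability can be broken arbitrarily without affecting optimality. I expect this to be the only real content of the argument, with everything else following mechanically from the correlation reformulation of ML decoding.
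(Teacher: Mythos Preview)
Your argument is correct. For the repetition code you reduce ML to comparing $\sum_i y_i$ with $-\sum_i y_i$, and for the parity-check code you correctly rewrite the correlation as $\sum_i|y_i|-2\sum_{j\in S}|y_j|$ and minimize over flip sets $S$ of the appropriate parity; the observation that any odd-size $S$ satisfies $\sum_{j\in S}|y_j|\ge\min_j|y_j|$ by nonnegativity is exactly what is needed.

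As for comparison with the paper: there is nothing to compare. The paper does not prove this lemma at all; it merely introduces it with ``we recall the well known ML decoding of repetition and parity-check codes'' and states the result. Your write-up therefore supplies what the paper omits. If anything, your proof is more detailed than necessary for a result the authors consider folklore, but it is entirely sound.
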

\subsection{Variants and Lists}
\begin{table}
	\caption{Decoding Variants for Simulation}\label{tab:decvar}
\begin{center}
	\begin{tabular}{c|c|c}
		Starting with & Variant & Number of Variants \\
		\hline
		$\mathcal C_3$  & V$(4, \mathbf y_i \Join \mathbf y_j)$ & 3 \\ 
		\hline
		$\mathcal C_1, \mathcal C_2$  & V$(\mathbf y_i \Join \mathbf y_j)$ & 6 \\ 
		\hline
		$\mathcal C_0$  & V$(\mathbf y_i, \mathbf y_j)$ & 6 \\ 
	\end{tabular}
\end{center}
\end{table}
In Sec. \ref{decvariants} we have introduced decoding variants
which are repeated in Table \ref{tab:decvar}.
We will simulate the performance  of different variants and their combinations
without and with list decoding in the first decoding step.
For this we use 
 the $\mathcal R(2,5)=(32,16,8)$ RM code.
This code is a double Plotkin construction of the codes
$\mathcal C_0=(8,7,2)$, $\mathcal C_1=\mathcal C_2=(8,4,4)$, and
$\mathcal C_3=(8,1,8)$ which is a repetition code.
A code word consists of four blocks
$ \mathbf x_0 | \mathbf x_0 \mathbf x_1 | \mathbf x_0 \mathbf x_2 | \mathbf x_0 \mathbf x_1 \mathbf x_2\mathbf x_3$.
Transmitting a code word over an AWGN channel we receive   
$$	\mathbf y = |\mathbf y_0  | \mathbf y_1  | \mathbf y_2 |\mathbf y_3 | =
	|\mathbf x_0 +\mathbf z_0 |\mathbf x_0 \mathbf x_1 + \mathbf z_1|\mathbf x_0  \mathbf x_2 +\mathbf z_2  |\mathbf x_0 \mathbf x_1 \mathbf x_2\mathbf x_3 + \mathbf z_3|
$$ 
where $\mathbf z_i$ is the noise vector.
As decoder $\Delta_1$ and $\Delta_2$ we use the one from Ex. \ref{ex:mlrm13} which is ML and provides also a list
of the best decisions if necessary. 
For the other two codes we use decoders according to Lemma \ref{mlpcrep}.

\subsubsection{Starting Decoding with $\mathcal C_3$}
\begin{figure}[htb]
	\begin{center}
	\includegraphics[width=0.65\textwidth]{cdrei-25.tikz}
	\end{center}
	\caption{WER $\mathcal R(2,5)$ starting with $\mathcal C_3$}\label{fig:r25-cdrei}
\end{figure}
Interpreting the double Plotkin construction as GCC, 
traditionally  $\mathcal C_3$ is decoded first, then $\mathcal C_2$, $\mathcal C_1$, and 
	finally, $\mathcal C_0$ which corresponds to
	variant V$(4, \mathbf y_0 \Join \mathbf y_2)$ in Sec. \ref{sec:startc3}. 
	The performance of V$(4, \mathbf y_0 \Join \mathbf y_2)$ is shown in Fig. \ref{fig:r25-cdrei}
	where the word error rate (WER) is plotted versus the signal-to-noise ratio. 
	The other two variants  V$(4, \mathbf y_0 \Join \mathbf y_1)$ and  V$(4, \mathbf y_1 \Join \mathbf y_2)$ give the same performance as V$(4, \mathbf y_0 \Join \mathbf y_2)$.
The first decoding step of all three variants 
	is the same and done with join-four 
	$\mathbf y_0 \Join \mathbf y_1 \Join \mathbf y_2 \Join \mathbf y_3$ which is a disturbed
	hidden code word of $\mathcal C_3$ and decoded by $\Delta_3$.
Using all three variants
together and deciding for the one with the 
	largest correlation gives hardly any improvement as shown in Fig. \ref{fig:r25-cdrei}. 
	The  improvement even gets smaller for larger signal-to-noise ratios.
	The explanation is that the first decoding step is the same in all three variants and 
dominates the performance.
Using only one variant but a list decoder for $\mathcal C_3$ with $L=2$, the improvement is larger than with the three 
variants without list decoding.
Note that $L=2$ corresponds to a complete list of all possible code words since $\mathcal C_3$ is a repetition code.
We also see in Fig. \ref{fig:r25-cdrei} that using the three variants 
in case of a list decoder $\Lambda_3$ with $L=2$
gives an improvement and in fact it reaches almost ML performance. 
The ML performance is known for this code and so we do not need to use the ML-bound.

\begin{figure}[htb]
	\begin{center}
	\includegraphics[width=0.65\textwidth]{var-komb-25.tikz}
	\end{center}
	\caption{WER $\mathcal R(2,5)$ starting with  $\mathcal C_1$, $\mathcal C_2$}\label{fig:r25-ceizw}
\end{figure}
\subsubsection{Starting Decoding with $\mathcal C_1$,$\mathcal C_2$}\label{sec:c12r25}
We consider first the variants of Sec. \ref{sec:starteizw} without list decoding in the first step.
The simulation results are shown in Fig. \ref{fig:r25-ceizw}.
Using only  variant V$(\mathbf y_0 \Join \mathbf y_1)$ the performance is more than 1 dB worse at a WER of $ 10^{-1}$ than using the
two variants  V$(\mathbf y_0 \Join \mathbf y_1)$ and V$(\mathbf y_0 \Join \mathbf y_2)$ and deciding for the codeword with larger correlation. However, using 
the two variants  V$(\mathbf y_0 \Join \mathbf y_1)$ and V$(\mathbf y_2 \Join \mathbf y_3)$  the performance 
is about $1.5$ dB better at a WER of $ 10^{-1}$ than using only  variant V$(\mathbf y_0 \Join \mathbf y_1)$.
The explanation is that in the case  V$(\mathbf y_0 \Join \mathbf y_1)$ and V$(\mathbf y_0 \Join \mathbf y_2)$ 
both use the block $\mathbf y_0$ for each first decoding step which means they are dependent. 
In the second case V$(\mathbf y_0 \Join \mathbf y_1)$ and V$(\mathbf y_2 \Join \mathbf y_3)$ 
there is no common block for each first decoding step.
Using all six variants  V$(\mathbf y_0 \Join \mathbf y_1)$, V$(\mathbf y_2 \Join \mathbf y_3)$,
V$(\mathbf y_0 \Join \mathbf y_2)$, V$(\mathbf y_1 \Join \mathbf y_3)$, V$(\mathbf y_1 \Join \mathbf y_2)$,
and V$(\mathbf y_0 \Join \mathbf y_3)$, 
for decoding gives allmost
	ML performance and is about $0.5$ dB better at a WER of $ 10^{-1}$ than variants  V$(\mathbf y_0 \Join \mathbf y_1)$ and 
	V$(\mathbf y_2 \Join \mathbf y_3)$.
We notice that the performance improvement when adding additional variants
is much larger here than in Fig. \ref{fig:r25-cdrei} when we start decoding $\mathcal C_3$.
	The explanation is that according to Eq. \eqref{birthapprox} 
	the probability 
that errors are cancelled is very large for the join-four operation
and according to Lemma \ref{genialparadox}
this error reduction occurs at least in one of the join-two 
of the six variants we start the decoding with.
The second decoding in each variant is $\mathcal C_3$, however, with join-two  and not join-four as in V$(4, \mathbf y_0 \Join \mathbf y_2)$.
Therefore, the decoding is correct with a much larger probability than for join-four.
For the third decoder we have the add-join operation which according to Ex. \ref{ex:gains} 
reduces the number of errors considerably.
The last decoder operates in a $6$ dB better channel.
So, if the first decoder finds the correct solution the three following are also correct with large probability.
Further, it can be observed that the slope of the performance curves gets larger with the number of used variants.
This effect is known exploiting diversity in communications. 
But using one variant only
the performance of variant V$(\mathbf y_0 \Join \mathbf y_1)$ is more than 1 dB worse at a WER of $ 10^{-1}$ than  V$(4, \mathbf y_0 \Join \mathbf y_2)$ from Fig.~\ref{fig:r25-cdrei} which follows from the fact that $\mathcal C_1$ has minimum distance $4$, while $\mathcal C_3$ has $8$. 

In Fig. \ref{fig:r25-ceizw} we also show simulation results of a further aspect of the decoding variants.
In Sec. \ref{sec:starteizw} an alternative interpretation was described
using $\mathbf y_0 \Join \mathbf y_1 | \mathbf y_2 \Join \mathbf y_3$
	which is  a Plotkin construction $ \mathbf x_1 | \mathbf x_1 \mathbf x_3$
	and in our example a $\mathcal R(1,4)=(16,5,8)$ code which can be ML decoded according to Lemma \ref{mldecr1m}.
	After this, with a add-join operation $\mathcal C_2$ is decoded and then after a add-four 
	operation $\mathcal C_0$.  
	Using this decoder  V$(\mathbf y_0 \Join \mathbf y_1)|$V$(\mathbf y_2 \Join \mathbf y_3)$  in the first step the performance is suprisingly slightly worse than
	the case V$(\mathbf y_0 \Join \mathbf y_1)$ and V$(\mathbf y_2 \Join \mathbf y_3)$ as shown in Fig. \ref{fig:r25-ceizw}. 
	The simulation using  a combination of the three variants V$(\mathbf y_0 \Join \mathbf y_1)$, V$(\mathbf y_2 \Join \mathbf y_3)$, and V$(\mathbf y_0 \Join \mathbf y_1)|$V$(\mathbf y_2 \Join \mathbf y_3)$ gives a visible improvement
	(named Combination in Fig. \ref{fig:r25-ceizw}).
	This means that there are cases when V$(\mathbf y_0 \Join \mathbf y_1)$ and V$(\mathbf y_2 \Join \mathbf y_3)$ 
	cannot decode but 
	V$(\mathbf y_0 \Join \mathbf y_1)|$V$(\mathbf y_2 \Join \mathbf y_3)$ decodes correctly and
	vice versa.
	The explanation for this effect is as follows.
	The ML decoding calculates the sum of the absolute values of the correlation of the code words of the code $\mathcal C_1$
	with the two vectors $\mathbf y_0 \Join \mathbf y_1$ and $\mathbf y_2 \Join \mathbf y_3$
	and chooses the maximum. 
	The result is $ \mathbf x_1 | \mathbf x_1 \mathbf x_3$ and with  
	$ \mathbf x_1$ and $ \mathbf x_3$ the remaining two decoding steps are done. 
	The other two variants decide the two correlations independently.
	The first variant calculates the correlation of the code words of the code $\mathcal C_1$
	with the vector $\mathbf y_0 \Join \mathbf y_1$ and chooses the maximum as decison for $ \mathbf x_1$. 
	Then, in the second decoding step, it calculates a join-two 
	$\mathbf y_2 \Join \mathbf y_3 \mathbf x_1$ in order to decode $ \mathbf x_3$.
	The remaining two decoding steps are identical to the above case.
	The second variant calculates the correlation of the code words of the code $\mathcal C_1$
	with the vector $\mathbf y_2 \Join \mathbf y_3$ and chooses the maximum as decison for 
	$ \mathbf x_1 \mathbf x_3$ 
	and then calculates a join-two $\mathbf y_0 \Join \mathbf y_1 \mathbf x_1 \mathbf x_3$ in order to decode
	$ \mathbf x_3$.
	Therefore, it can happen that the correlation with $\mathbf y_0 \Join \mathbf y_1$
	gives the correct result but the one with $\mathbf y_2 \Join \mathbf y_3$ the wrong one
	and vice versa. Then, the sum of both absolute values of this correlation 
	might give the wrong decision in the ML decoding.
	The opposite case is that both separate correlations give the wrong result
	but the sum the correct one.

\begin{figure}[htb]
	\begin{center}
	\includegraphics[width=0.65\textwidth]{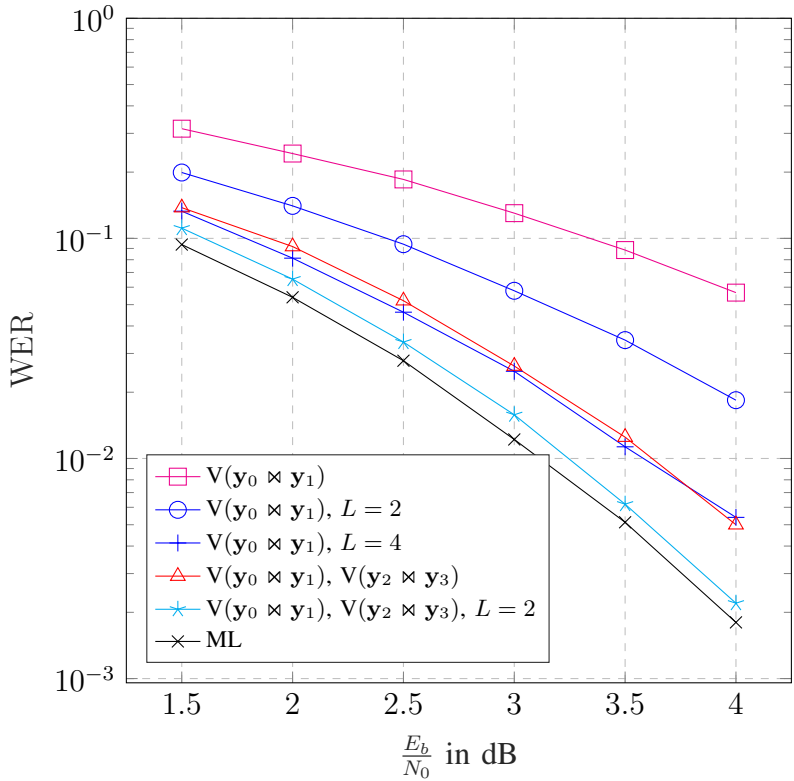}
	\end{center}
	\caption{WER $\mathcal R(2,5)$ starting with  $\mathcal C_1$, $\mathcal C_2$ with list decoding}\label{fig:r25-list}
\end{figure}

Now we consider the variants from Sec. \ref{sec:starteizw} with list decoding in the first step.
The simulation results are shown in Fig. \ref{fig:r25-list}. Variant V$(\mathbf y_0 \Join \mathbf y_1)$ 
with list $ L = 2$ is about $1$ dB better at a WER of $ 10^{-1}$ than without list. 
However, variants V$(\mathbf y_0 \Join \mathbf y_1)$  and V$(\mathbf y_2 \Join \mathbf y_3)$  
are more than $1.5$ dB better at a WER of $ 10^{-1}$  compared to variant V$(\mathbf y_0 \Join \mathbf y_1)$. 
This shows that in case of two decoders two variants give a better performance than one variant with
 list size $ L = 2$.  The explanation is that one variant does not use the variance within the received blocks
 but rather starts with a list decoder for a single join-two $\mathbf y_0 \Join \mathbf y_1$. 
 The probability that this list does not contain the correct code word must therefore be larger
 than the probability that both join-two, $\mathbf y_0 \Join \mathbf y_1$ and $\mathbf y_2 \Join \mathbf y_3$,
 are not decoded correctly. 
  Using two variants V$(\mathbf y_0 \Join \mathbf y_1)$ and V$(\mathbf y_2 \Join \mathbf y_3)$ with a list $ L = 2$ gives another significant improvement.
This should be compared to  V$(\mathbf y_0 \Join \mathbf y_1)$ with list $ L = 4$ since both use $4$ decoders.
However, as seen in Fig. \ref{fig:r25-list} the variant V$(\mathbf y_0 \Join \mathbf y_1)$ with list $ L = 4$ 
has very similar performance to variants V$(\mathbf y_0 \Join \mathbf y_1)$
and V$(\mathbf y_2 \Join \mathbf y_3)$ which are only two decoders.
Even more, this two curves are crossing each other at $3.75$ dB which can be explained 
by the fact that the two variants use the variability of the channel since 
both use different blocks and therefore, the performance curve has a steeper slope. 
Adding more variants with different list sizes brings the performance closer to the ML curve but since the use of
two variants with list size $ L = 2$ is already close to ML we omit these curves. 

\begin{figure}[htb]
	\begin{center}
	\includegraphics[width=0.65\textwidth]{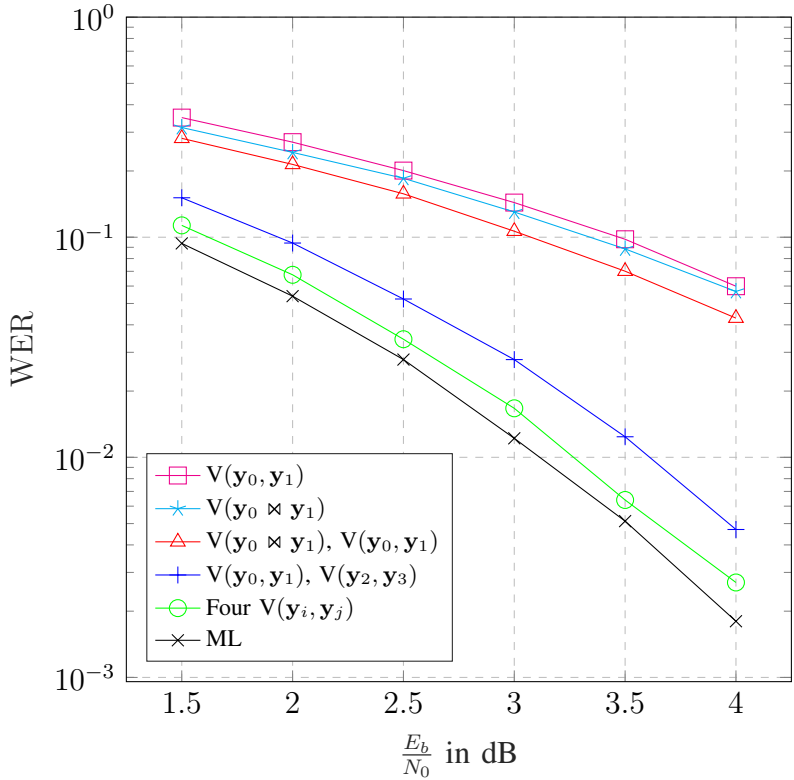}
	\end{center}
	\caption{WER $\mathcal R(2,5)$ starting with  $\mathcal C_0$}\label{fig:r25-cnull}
\end{figure}
\subsubsection{Starting Decoding with $\mathcal C_0$}
For this variants it is necessary that $\mathcal C_1 \subset \mathcal C_0$,
$\mathcal C_2 \subset \mathcal C_0$, and
$\mathcal C_3 \subset \mathcal C_0$.
 The variants of Sec. \ref{sec:startnull} are  related to the ones of Sec. \ref{sec:starteizw} as follows.
In V$(\mathbf y_0 \Join \mathbf y_1)$ we use $\mathbf y_0 \Join \mathbf y_1$ and
then a decoder $\Delta_1$ for $\mathcal C_1$.
In V$(\mathbf y_0, \mathbf y_1)$  we use list decoders $\Lambda_0$ for both, 
$\mathbf y_0$ and $\mathbf y_1$ and then an indicator $\Gamma_1$ for $\mathcal C_1$
in order to select proper code words from the two lists.  
While the join-two operation  in V$(\mathbf y_0 \Join \mathbf y_1)$ 
might cancel errors the errors are unchanged in V$(\mathbf y_0, \mathbf y_1)$.
On the other hand, the reliability values of the channel are changed in V$(\mathbf y_0 \Join \mathbf y_1)$ but not in 
V$(\mathbf y_0, \mathbf y_1)$.
The simulation results are shown in Fig. \ref{fig:r25-cnull}.
It can be observed that V$(\mathbf y_0, \mathbf y_1)$ has a slightly worse performance than 
V$(\mathbf y_0 \Join \mathbf y_1)$ which is due to the cancellation of errors by the join-two. 
However, both variants correct different errors which can be seen by the small improvement when using both variants
and choosing the decision with larger correlation. 
This means that some errors can be corrected by V$(\mathbf y_0, \mathbf y_1)$ but not by 
V$(\mathbf y_0 \Join \mathbf y_1)$  and vice versa. 
Using two variants V$(\mathbf y_0, \mathbf y_1)$  and 
V$(\mathbf y_2, \mathbf y_3)$  gives also here an improvement of more than $1.5$dB
at a WER of $10^{-1}$
as for variants V$(\mathbf y_0 \Join \mathbf y_1)$  and V$(\mathbf y_2 \Join \mathbf y_3)$. 
The results from  Fig. \ref{fig:r25-cnull} and Fig. \ref{fig:r25-ceizw}
are very similar. 
The performance of four variants  
V$(\mathbf y_0, \mathbf y_1)$, V$(\mathbf y_2, \mathbf y_3)$, V$(\mathbf y_0, \mathbf y_2)$,
and V$(\mathbf y_1, \mathbf y_3)$
is already close to ML decoding. 
The performance of all six variants 
is almost identical to
the six variants  V$(\mathbf y_i \Join \mathbf y_j)$ and therefore omitted. 
Since the results are very similar to those starting with $\mathcal C_1$, $\mathcal C_2$we 
will not consider the variants starting decoding with 
$\mathcal C_0$ in the following simulation results.

\begin{figure}[htb]
	\begin{center}
	\includegraphics[width=0.65\textwidth]{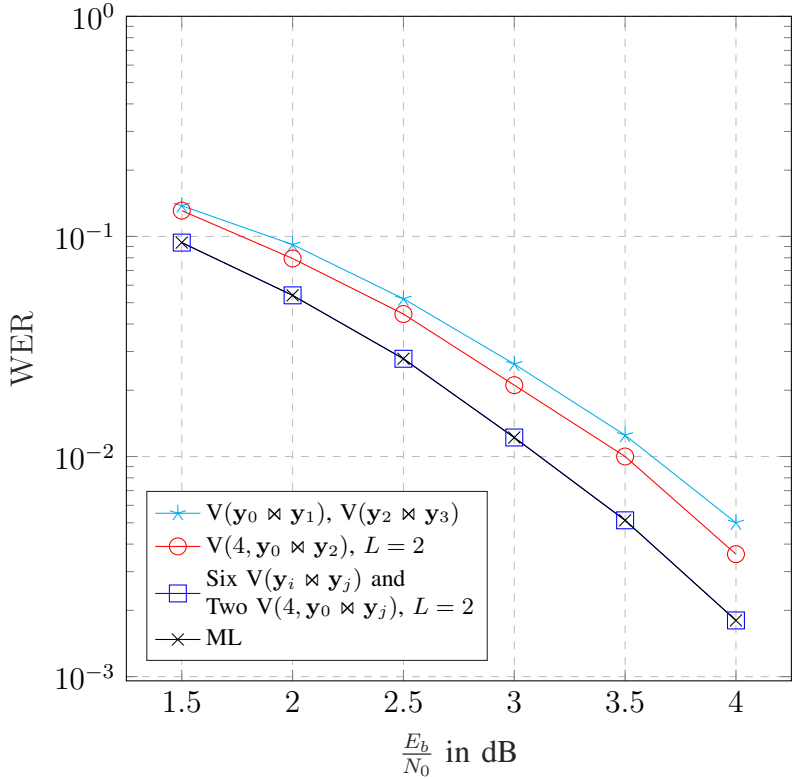}
	\end{center}
	\caption{WER $\mathcal R(2,5)$ Mixed Decoding Variants}\label{fig:r25-mixed}
\end{figure}

\subsubsection{Starting Decoding with $\mathcal C_1$, $\mathcal C_2$, $\mathcal C_3$}\label{sec:r25c123}
Here, we will consider mixed variants which start decoding 
with $\mathcal C_1$, $\mathcal C_2$ and $\mathcal C_3$.
In Fig. \ref{fig:r25-mixed} the performance curves of  
V$(\mathbf y_0 \Join \mathbf y_1)$ and V$(\mathbf y_2 \Join \mathbf y_3)$ 
is repeated for comparison.
It can be seen that the performance V$(4, \mathbf y_0 \Join \mathbf y_2), L=2$
is slightly better than the repeated curve.
Using all six variants V$(\mathbf y_i \Join \mathbf y_j)$
and two variants $\mathrm V (4, \mathbf y_0 \Join \mathbf y_1)$ and 
$\mathrm V (4, \mathbf y_0 \Join \mathbf y_2)$
with list size $L=2$ each of the performance curves in
 Fig.~\ref{fig:r25-mixed} is identical to ML performance.
 We will use this as the decoder 
in Sec.~\ref{sec:recurplot} for recursive decoding.

\subsubsection{Decoding Complexity}\label{sec:complexity}
In the decoding we use three operations, namely, correlation, 
join and add, and sorting or finding the maximum or minimum.
The correlation has as basic operation the addition
of real values and a sign operation. The join and add operations
consist of sign operations, comparisons, and additions.
The sorting or the search of the maximum or minimum consists of comparisons.
The complexity of all used join and add operations 
can be given in general when the four blocks have length $n$ each which 
means the constructed code has length $4 n$.
The values are given in Table  \ref{table:compjoinadd}.
We count the sign operations, the comparisons, and the additions separately.
If we ignore the sign operations and assume that a comparison has the same complexity as an addition
we denote this number as \^a\^c operations in the table.
\begin{table}[ht]
	\caption{Complexity of Join and Add Operations\label{table:compjoinadd}}
\begin{center}
\begin{tabular}{r|c|c|c|c}
	operation & sign & comp & add & \^a\^c \\ 
\hline
\hline
	join-two & $n$ & $n$ & & $n$\\[.5ex]
\hline
	join-four & $3n$ & $3n$ & & $3n$\\[.5ex]
\hline
	add-two & $n$ &  & $n$ & $n$\\[.5ex]
\hline
	add-four & $3n$ &  & $3n$ & $3n$\\[.5ex]
\hline
	join-add & $2n$ & $2n$  & $n$ & $3n$\\[.5ex]
\hline
	add-join & $n$ & $n$  & $2n$ & $3n$\\
\end{tabular}
\end{center}
\end{table}

The complexity of the decoders can not be given in general since 
it depends on the specific decoding algorithm used.
For the decoders used in this section we can count the operations as follows.  
The decoder of $\mathcal C_3$ needs $7$ additions and $1$ sign operation which means $7$ \^a\^c operations.
The decoding of $\mathcal C_1$ and $\mathcal C_2$ is done according to Ex. \ref{ex:mlrm13} with $28$ additions,
$32$ sign operations, and  $3$ comparisons to find the maximum of the 
correlations which corresponds to $31$ \^a\^c operations.
Finally, the decoding of $\mathcal C_0$ needs $8$ sign operations 
to check the weight and $7$ comparisons
to find the most unreliable position in case the weight was odd which is $7$ \^a\^c operations. 
With these complexities we can now derive the ones of different variants.

The decoding starting with  $\mathcal C_3$ consists
of the following join and add operations. One join-four,
one join-add, one add-join, and one add-four, thus,
according to Table \ref{table:compjoinadd},
$9n$ sign operations, $6n$ comparisons, and $6n$ additions
or $12n$ \^a\^c operations. Since here $n=8$ the total complexity is $96$ \^a\^c operations.
In addition, we need to decode one time $\mathcal C_3$, $\mathcal C_2$ and $\mathcal C_1$, and $\mathcal C_0$
which has the  complexity $7+31+31+7 = 76$ \^a\^c operations. 
For the measure to compare different variants  the correlation $\Phi(\hat{\mathbf x}, \mathbf y)$
is calculated using Lemma \ref{simplexdoplo} with $7$ additions and $8$ sign operations
since we use add-four which means $7$ \^a\^c operations.
Therefore, this variant has an overall complexity of $96+ 76+ 7 = 180$ \^a\^c operations.

Any variant from Sec. \ref{sec:starteizw} without list
needs two join-two, one add-join, and one add-four, thus,
according to Table \ref{table:compjoinadd},
$2n$, $3n$, and $3n$, alltogether $8n$ \^a\^c operations. Since here $n=8$ the complexity is $64$ \^a\^c operations.
The complexity for the decoding of the four codes and the final correlation
is the same as above  with $83$ \^a\^c operations. 
Therefore, the overall complexity for this variant is $147$ \^a\^c operations
which is $33$ \^a\^c operations less than for the variant above.
Note that the correlation of the final decision is already included.
\begin{table}[ht]
	\caption{Complexity of Decoding Variants for $\mathcal R (2,5)$\label{table:compdecvar}}
\begin{center}
\begin{tabular}{c|c}
	Variant & Number of \^a\^c Operations  \\ 
\hline
\hline
V$(4, \mathbf y_i \Join  \mathbf y_j)$	 & $180$ \\[.5ex]
\hline
V$( \mathbf y_i \Join  \mathbf y_j)$	 & $147$ \\[.5ex]
\hline
Six V$( \mathbf y_i \Join \mathbf y_j)$	 & $887$ 
\end{tabular}
\end{center}
\end{table}
If we use a decoder which uses all six variants from Sec. \ref{sec:starteizw} without list
we need $6 \cdot 147 + 5 = 887$ \^a\^c operations, 
assuming that the decoders are independent parallel implementations.
The examples for the complexity are summarized in Table \ref{table:compdecvar}.
Since the complexity is that small we do not analyze a possible
reduction by calculating some join and add operations 
only once and subsequently using it for all six variants.

Note that the decoding complexity for a list decoder depends also on the specific algorithm 
and so it cannot be given in general. 
However, when we decode a code by correlation 
with all code words we have a set of correlation values.
For a decoder we have to find the maximum correlation and for a list decoder we have to find the 
$L$ largest correlation values by sorting.
Here, the list decoding of $\mathcal C_1$ needs  $28$ \^a\^c operations to calculate the  $4$ 
correlation values.
Instead of the $3$ comparisions to find the maximum we need  $3 + 2 + 1 = 6$ comparisons  
in order to sort the correlation values.
For each code word of the list only three remaining codes have to be decoded.
Therefore, the complexity of a variant using listsize $L$  
in the first decoding step is less than $L$ times the complexity of this variant without list.
\begin{figure}[htb]
	\begin{center}
	\includegraphics[width=0.65\textwidth]{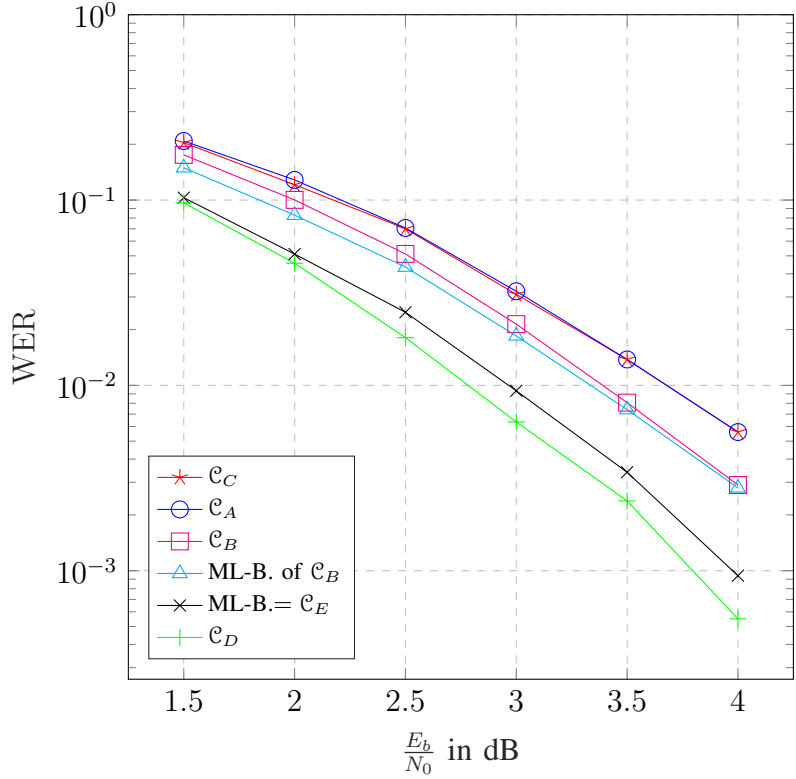}
	\end{center}
	\caption{$(64,32,\geq 8)$ WER for  different Codes}\label{fig:64}
\end{figure}
\subsection{Modified Code Constructions}
The main scope here is to show that the recursive Plotkin
construction is not limited to RM codes but can create a huge variety of codes.
In order to show some of these possibilities 
we consider five codes $\mathcal C(64,32,d)$ where the first three are based on RM codes 
but are not RM codes themselves and the other two use BCH codes as component codes.
However, all five codes have a subcode property such that hidden code words can be used for decoding them.
The first code $\mathcal C_A$ is 
the double Plotkin construction with the RM codes $\mathcal C_0=(16,15,2)$,  $\mathcal C_1=(16,11,4)$, $\mathcal C_2=(16,5,8)$, and
$\mathcal C_3=(16,1,16)$. Note that this is not a RM code. 
The second construction $\mathcal C_B$ is $\mathcal C_0=(16,11,4)$,  $\mathcal C_1 = \mathcal C_2=(16,8,4)$, and
$\mathcal C_3=(16,5,8)$.
For the $(16,8,4)$ code we use two possibilities.
One possibility concatenates two code words of the simplex code of length $8$, namely $\mathcal C_1 = \mathcal C_2 = |(8,4,4)|(8,4,4)| $
and the second possibility uses a double Plotkin construction with
$\mathcal C_0'=(4,3,2)$,  $\mathcal C'_1=(4,2,2)$, $\mathcal C_2'=(4,2,2)$, and
$\mathcal C'_3=(4,1,4)$. For the $(4,2,2)$ code we use the linear subcode of $(4,3,2)$ with the code words
$0000, 1100, 0011, 1111$.
Both possibilities preserve the subcode structure for the decoding with hidden code words.
The third construction $\mathcal C_C$
is  $\mathcal C_0=(16,11,4)$,  $\mathcal C_1=(16,11,4)$, $\mathcal C_2=(16,5,8)$, and
$\mathcal C_3=(16,5,8)$.
The  construction $\mathcal C_D$
uses the code $\mathcal C_0=(16,13,2)$ and the extended BCH codes  $\mathcal C_1=\mathcal C_2=(16,7,6)$, and
$\mathcal C_3=(16,5,8)$.
For the definition of BCH codes we refer to \cite[Ch. 4, pp. 81--94]{Boss-eng} or \cite{sloane}.
For the code $\mathcal C_0=(16,13,2)= |(6,5,2)|(5,4,2)|(5,4,2)|$ we use 
the concatenation of three parity-check codes.
Note that using only decoding variants from Sec. \ref{sec:starteizw} 
and from Sec. \ref{sec:startc3} no subcode property for the code $\mathcal C_0$ is necessary.
Finally, the construction $\mathcal C_E$
 uses the codes $\mathcal C_0=(16,15,2)$,  $\mathcal C_1=(16,7,6)$, and
$\mathcal C_2=\mathcal C_3=(16,5,8)$. Note that here we have 
$\mathcal C_3 =  \mathcal C_2 \subset \mathcal C_1 \subset \mathcal C_0$.
For convenience, the codes are listed in Table \ref{tab:codes64}.
\begin{table}
	\caption{Half-Rate Codes of Length $64$}\label{tab:codes64}
	\begin{center}
\begin{tabular}{r|cccc}
	Code &$\mathcal C_0$ &$\mathcal C_1$ & $\mathcal C_2$ & $\mathcal C_3$ \\
	\hline
	\hline
	$\mathcal C_A$ & $(16,15,2)$ &  $(16,11,4)$ &  $(16,5,8)$ &  $(16,1,16)$ \\ 
	\hline
	$\mathcal C_B$ & $(16,11,4)$ &  $(16,8,4)$ &  $(16,8,4)$ &  $(16,5,8)$ \\ 
	\hline
	$\mathcal C_C$ & $(16,11,4)$ &  $(16,11,4)$ &  $(16,5,8)$ &  $(16,5,8)$ \\ 
	\hline
	$\mathcal C_D$ & $(16,13,2)$ &  $(16,7,6)$ &  $(16,7,6)$ &  $(16,5,8)$ \\ 
	\hline
	$\mathcal C_E$ & $(16,15,2)$ &  $(16,7,6)$ &  $(16,5,8)$ &  $(16,5,8)$ \\ 
\end{tabular}
\end{center}
\end{table}

In Fig. \ref{fig:64} the related simulation results are shown.
For the decoding of all codes we use all six variants from Sec. \ref{sec:starteizw} with list size $L=8$
and two variants from Sec. \ref{sec:startc3} also with list size $L=8$.
The code $\mathcal C_A$ has a nearly identical performance as 
$\mathcal C_C$ and 
both show a performance very close to their ML bounds (the ML performance is not known), respectivly, which means that the
performance is very close to ML performance (we omit the curves of these bounds).
The difference of the codes $\mathcal C_A$ and $\mathcal C_C$ is only 
in the codes $\mathcal C_0$ and $\mathcal C_3$.  
The explanation that both have the same performance can only be that
neither a weaker  $\mathcal C_0$ nor a weaker $\mathcal C_3$
influences the performance. Recall that $\mathcal C_0$ is decoded in a six dB better channel and 
$\mathcal C_3$ is decoded after a join-two and not after a join-four
in six of the used variants.
Since both codes have according to Table \ref{tab:codes64} the same code $\mathcal C_1$ 
and the same $\mathcal C_2$ this fact must dominate the
performance.
The code $\mathcal C_B$ has the best performance of the three RM based codes.
It is the version with the two simplex codes of length $8$.
The version with the double Plotkin construction with the linear subcodes $(4,2,2)$
has a slightly worse performance.
As can be seen in  Fig. \ref{fig:64},  the decoding performance 
for the code $\mathcal C_B$ is getting closer to the ML-bound
for better channels and is already identical at $4$ dB.
The code $\mathcal C_B$  
has the worse $\mathcal C_3$ from $\mathcal C_C$
and the better $\mathcal C_0$  from $\mathcal C_C$
and $\mathcal C_1 = \mathcal C_2$.
The explanation for the improved performance is that the two concatenated simplex codes in 
$\mathcal C_B$ have minimum distance $4$ which is the same as the extended Hamming code $\mathcal C_1$
in the codes $\mathcal C_A$ and $\mathcal C_C$. 
However, more errors can be corrected depending on their distribution.
Assume that the concatenated code word contains four errors.
If in each of the two codes there are two errors the probability that they are corrected is very high
which does not hold for the extended Hamming code.
In contrast, when four errors are in one of the code words the decoding might be not correct
as for the extended Hamming code.

As can be seen in Fig. \ref{fig:64}  the codes $\mathcal C_D$  and $\mathcal C_E$ 
based on BCH codes
have a better performance compared to the other three based on RM codes.
The explanation is that the extended
BCH code $\mathcal C(16,7,6)$ exists
between the extended BCH codes $\mathcal C(16,5,8)$ and $\mathcal C(16,11,4)$
while for the other three codes we must use a code with smaller minimum distance.
An old result states that any RM code is equivalent to a very particular extended BCH code 
\cite[Ch. 13, Th. 11, p. 383]{sloane}.
The inverse direction is not true and in general
there exist many more BCH codes than RM codes of a certain length with different
dimensions and distances which are nested which means they  posses a subcode structure.
The code $\mathcal C_D$ has a better performance than  $\mathcal C_E$. 
For the latter the ML bound is nearly identical and therefore we show only one curve in Fig. \ref{fig:64}.
The difference between the codes $\mathcal C_D$ and $\mathcal C_E$ are the codes 
$\mathcal C_0$ and $\mathcal C_2$ while the other two are identical.
The only possible explanation for the  better performance can be
the different codes $\mathcal C_0$ as follows.
Since we use list decoders with list size $L=8$
for $\mathcal C_1$ and $\mathcal C_2$ 
the probability 
that the transmitted code word is in the list
should be the same for the codes $\mathcal C(16,5,8)$ and $\mathcal C(16,7,6)$.
Because if the latter would be better then  $\mathcal C_E$ should have the better performance.
So only the difference in $\mathcal C_0$ remains as the explanation of the performance
difference. 
Note that the fact that both $\mathcal C_0$ have minimum distance  $2$
does not mean that the ML decoding performance is the same since one has $2^{13}$
and the other $2^{15}$ code words in the same space.

The codes $\mathcal C_A$ and $\mathcal C_B$ are both based on RM codes.
In the following we will generalize these 
constructions for half-rate codes based on RM codes.
Half-rate RM codes $\mathcal R (r,m)$ only exist for $m$ odd which means 
	$\mathcal R (r=\nu, m=2 \nu +1)$, $\nu = 3, 4,  \ldots$.
        Thus, the length is  $n=2^{2 \nu +1}$,
	the dimension is $k=2^{2 \nu}$, and the minimum distance is $d=2^{\nu +1}$.
	The codes $\mathcal C_A$ and $\mathcal C_B$ give  general double Plotkin constructions for 
	half-rate codes of length $n=2^{2 \nu }$ based on RM codes as shown in the following Theorems.
	Recall that the dual code of an RM code $\mathcal C = \mathcal R (r, m)$
	is an RM code $\mathcal C^\perp = \mathcal R (m - r -1, m)$ and 
	$k + k^\perp = n = 2^m$.

\begin{theorem}[Half-Rate Codes of Length $2^{2 \nu}$ Based on RM Codes I]\label{rmratehalf}
	Let $\mathcal C_0  = \mathcal R (2\nu - \ell- 1, 2\nu - 2)$ and 
	$\mathcal C_3 = \mathcal C_0^\perp = \mathcal R (\ell , 2\nu -2)$
	be dual RM codes and let also
	 $\mathcal C_1 = \mathcal R (2\nu - \ell - 4, 2\nu - 2)$ and 
	$\mathcal C_2 = \mathcal C_1^\perp = \mathcal R (\ell + 1, 2\nu -2)$
	be dual RM codes, 
	where $\ell = 0, 1, \ldots , \nu - 3$.
	The codes obtained by a double Plotkin construction with these RM codes are half-rate codes  
	with the parameters $n= 2^{2\nu}, k = 2^{2\nu-1}$, and $d= \mathrm{min} \{d_3, 2 d_2, 2 d_1, 4 d_0 \}$. 
	All codes have the property $\mathcal C_3 \subset \mathcal C_2 \subseteq \mathcal C_1 \subset \mathcal C_0 $. 
\end{theorem}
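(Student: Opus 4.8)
The plan is to obtain every assertion from two applications of Theorem~\ref{uuplusvrm} together with the duality facts for RM codes recalled just above the statement. The code in question is, by construction, the Plotkin construction of the two codes $\mathcal C'=\{\,|\mathbf u_0|\mathbf u_0+\mathbf u_1|\,\}$ and $\mathcal C''=\{\,|\mathbf u_2|\mathbf u_2+\mathbf u_3|\,\}$. Applying Theorem~\ref{uuplusvrm} to $\mathcal C_0,\mathcal C_1$ shows that $\mathcal C'$ has length $2\cdot 2^{2\nu-2}=2^{2\nu-1}$, dimension $k_0+k_1$, and minimum distance $\min\{2d_0,d_1\}$; likewise $\mathcal C''$ has the same length, dimension $k_2+k_3$, and minimum distance $\min\{2d_2,d_3\}$. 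A second application of Theorem~\ref{uuplusvrm} to $\mathcal C'$ and $\mathcal C''$ then yields length $2\cdot 2^{2\nu-1}=2^{2\nu}$, dimension $k=k_0+k_1+k_2+k_3$, and
\[
 d=\min\bigl\{\,2\min\{2d_0,d_1\},\ \min\{2d_2,d_3\}\,\bigr\}=\min\{4d_0,\,2d_1,\,2d_2,\,d_3\},
\]
which is exactly the claimed minimum distance.

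For the dimension I would not compute the $k_i$ individually but exploit duality. Since $\mathcal C_3=\mathcal C_0^\perp$ and both are codes of length $2^{2\nu-2}$, the identity $k+k^\perp=n$ gives $k_0+k_3=2^{2\nu-2}$, and in the same way $\mathcal C_2=\mathcal C_1^\perp$ gives $k_1+k_2=2^{2\nu-2}$. Adding, $k=2^{2\nu-1}$, so with $n=2^{2\nu}$ the code has rate exactly one half. It remains only to check that the four component orders lie in the admissible range $\{0,1,\dots,2\nu-2\}$ for $\ell\in\{0,\dots,\nu-3\}$ and $\nu\ge 3$, which is a short direct verification (and which is what also forces the upper bound $\nu-3$ on $\ell$).

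Finally, the chain $\mathcal C_3\subset\mathcal C_2\subseteq\mathcal C_1\subset\mathcal C_0$ would be argued by combining the standard RM nesting property $\mathcal R(r,2\nu-2)\subseteq\mathcal R(s,2\nu-2)$ for $r\le s$ (strict when $r<s$) with duality. The inclusion $\mathcal C_3=\mathcal R(\ell,2\nu-2)\subset\mathcal R(\ell+1,2\nu-2)=\mathcal C_2$ is immediate from $\ell<\ell+1$. The middle inclusion $\mathcal C_2\subseteq\mathcal C_1$ requires $\ell+1\le 2\nu-\ell-4$, that is $2\ell\le 2\nu-5$, which for integer $\ell$ is precisely the hypothesis $\ell\le\nu-3$; this is the one step where the restriction on $\ell$ is genuinely used, and it is the main (rather mild) obstacle. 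The last inclusion needs no separate order comparison: dualising $\mathcal C_0^\perp=\mathcal C_3\subset\mathcal C_2=\mathcal C_1^\perp$ and using that taking duals reverses strict inclusions gives $\mathcal C_1=(\mathcal C_1^\perp)^\perp\subset(\mathcal C_0^\perp)^\perp=\mathcal C_0$. Concatenating the three inclusions finishes the proof.
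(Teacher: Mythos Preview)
Your proof is correct and covers all the claims. The dimension and nesting arguments are essentially identical to the paper's, with one cosmetic twist: for $\mathcal C_1\subset\mathcal C_0$ the paper simply compares orders ($2\nu-\ell-4<2\nu-\ell-1$), whereas you obtain it by dualising $\mathcal C_3\subset\mathcal C_2$. Both are immediate.

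The only real methodological difference is in the minimum-distance part. You derive $d=\min\{4d_0,2d_1,2d_2,d_3\}$ by two nested applications of Theorem~\ref{uuplusvrm}, which is clean and gives equality in one line since that theorem already asserts equality. The paper instead reruns the decoding argument from the proof of Theorem~\ref{uuplusvrm} directly at the level of the four blocks: add all four blocks to recover $\mathbf u_3$ (distance $d_3$), then use two noisy copies each of $\mathbf u_2$ and $\mathbf u_1$ (giving $2d_2$, $2d_1$), and finally four copies of $\mathbf u_0$ (giving $4d_0$). Your route is more economical; the paper's route has the advantage of making the connection to the four decoding steps and the add/join operations used throughout the paper explicit, which is thematically consistent with the decoding-variant framework but not logically necessary for the theorem itself.
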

\begin{proof}
The length is $4 \cdot 2^{2 \nu - 2} = 2^{2 \nu }$ because the length of the four used codes 
	is $ 2^{2 \nu - 2}$. 
Independent of the choice of $\ell$ the codes $\mathcal C_0$ and $\mathcal C_3$ are dual to each other 
	which means that the sum of their dimensions equals the length which is  
$ 2^{2 \nu - 2}$.
Also, the codes $\mathcal C_1$ and $\mathcal C_2$ are dual to each other 
	and the sum of their dimensions is again  
$ 2^{2 \nu - 2}$.
	Therefore, the dimension of the double Plotkin construction is $ 2^{2 \nu - 1}$.

For the minimum distance we follow the arguments of the proof of Th. \ref{uuplusvrm}.
Adding the four received blocks we get 
	$\mathbf u_3 + \mathbf e_0 +\mathbf e_1 +\mathbf e_2 +\mathbf e_3 $
	which can be correctly decoded if less than $ d_3/2$ errors have occured.
	Knowing $\mathbf u_3$ we can get two different vectors 
	$\mathbf u_2 + \mathbf e_0 +\mathbf e_2$ and
	$\mathbf u_2 + \mathbf e_1 +\mathbf e_3$ which can be corrected if less than  $ d_2$
	errors have occured.
	Similarly, we can get two  vectors 
	$\mathbf u_1 + \mathbf e_0 +\mathbf e_1$ and
	$\mathbf u_1 + \mathbf e_2 +\mathbf e_3$ which can be corrected if less than  $ d_1$
	errors have occured.
Finally, knowing $\mathbf u_1$, $\mathbf u_2$, and $\mathbf u_3$ we can get four vectors
$\mathbf u_0 + \mathbf e_0$, $\mathbf u_0 + \mathbf e_1$,  $\mathbf u_0 + \mathbf e_2$, and 
$\mathbf u_0 + \mathbf e_3$ which can be corrected if less than  $ 2 d_0$
	errors have occured.
	Therefore, the minimum distance is the minimum of $\{d_3, 2 d_2, 2 d_1, 4 d_0 \}$.

To prove the subcode structure we use the property of RM codes, namely, 
	$\mathcal C_3= \mathcal R (\ell, 2\nu -2) \subset \mathcal C_2= \mathcal R (\ell + 1, 2\nu -2)$,
	$\mathcal C_2= \mathcal R (\ell + 1, 2\nu -2) \subseteq \mathcal C_1= \mathcal R (2\nu - \ell - 4 , 2\nu -2)$
	since $\ell \leq \nu-3$,
	and
		$\mathcal C_1= \mathcal R (2\nu - \ell - 4, 2\nu -2) \subset \mathcal C_0= \mathcal R (2\nu - \ell - 1 , 2\nu -2)$.
\end{proof}

\begin{theorem}[Half-Rate Codes of Length $2^{2 \nu}$ Based on RM Codes II]\label{rmratehalfII}
	Let $\mathcal C_0= \mathcal R (\nu-1 + \ell, 2\nu - 2)$ and 
	$\mathcal C_3 = \mathcal C_1^\perp = \mathcal R (\nu -2 - \ell, 2\nu -2)$
	be the dual RM codes where $\ell = 0,1, \ldots \nu -2$. 
	Let $\mathcal C_1=\mathcal C_2$ be the concatenation of the half-rate RM code
	$|\mathcal R (\nu -2, 2\nu-3)|\mathcal R (\nu -2, 2\nu - 3)|$.
	The codes obtained by a double Plotkin construction with these RM codes are half-rate codes  
	with the parameters $n= 2^{2\nu}, k = 2^{2\nu-1}$, and $d= \mathrm{min} \{d_3, 2 d_2 = 2 d_1, 4 d_0 \}$. 
	The codes have the property $\mathcal C_3 \subset \mathcal C_2 = \mathcal C_1 \subset \mathcal C_0 $. 
\end{theorem}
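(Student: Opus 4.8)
The plan is to follow the proof of Theorem~\ref{rmratehalf} almost verbatim, since the two statements have the same shape; the only genuinely new point is that here $\mathcal{C}_1=\mathcal{C}_2$ is presented as a juxtaposition of two RM codes rather than as a single RM code. First I would settle the length: $\mathcal{C}_0$ and $\mathcal{C}_3$ are RM codes of length $2^{2\nu-2}$ by Definition~\ref{rmcodesrm}, and $\mathcal{C}_1=\mathcal{C}_2=|\mathcal{R}(\nu-2,2\nu-3)|\mathcal{R}(\nu-2,2\nu-3)|$ has length $2\cdot 2^{2\nu-3}=2^{2\nu-2}$, so the double Plotkin construction has length $4\cdot 2^{2\nu-2}=2^{2\nu}$. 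For the dimension, $\mathcal{C}_0$ and $\mathcal{C}_3$ being dual RM codes of that length gives $k_0+k_3=2^{2\nu-2}$; and since $2\nu-3=2(\nu-2)+1$, the code $\mathcal{R}(\nu-2,2\nu-3)$ is one of the half-rate RM codes, of dimension $2^{2\nu-4}$, so $k_1=k_2=2^{2\nu-3}$ and $k_1+k_2=2^{2\nu-2}$ as well. Hence $k=k_0+k_1+k_2+k_3=2^{2\nu-1}$.

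For the minimum distance I would use that a double Plotkin construction is a single Plotkin construction iterated once: with $\mathcal{D}_0=\{|\mathbf{u}_0|\mathbf{u}_0+\mathbf{u}_1|:\mathbf{u}_0\in\mathcal{C}_0,\ \mathbf{u}_1\in\mathcal{C}_1\}$ and $\mathcal{D}_1=\{|\mathbf{u}_2|\mathbf{u}_2+\mathbf{u}_3|:\mathbf{u}_2\in\mathcal{C}_2,\ \mathbf{u}_3\in\mathcal{C}_3\}$, both of length $2^{2\nu-1}$, the code in question is the Plotkin construction of $\mathcal{D}_0$ and $\mathcal{D}_1$. Theorem~\ref{uuplusvrm} applied to $(\mathcal{C}_0,\mathcal{C}_1)$ gives $d(\mathcal{D}_0)=\min\{2d_0,d_1\}$, applied to $(\mathcal{C}_2,\mathcal{C}_3)$ gives $d(\mathcal{D}_1)=\min\{2d_2,d_3\}$, and applied once more to $(\mathcal{D}_0,\mathcal{D}_1)$ gives $d=\min\{2d(\mathcal{D}_0),d(\mathcal{D}_1)\}=\min\{4d_0,2d_1,2d_2,d_3\}$; since $\mathcal{C}_1=\mathcal{C}_2$ forces $d_1=d_2$, this is $\min\{d_3,\,2d_1=2d_2,\,4d_0\}$. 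Alternatively one can run the error-cancellation decoder argument of Theorems~\ref{uuplusvrm} and~\ref{rmratehalf}: add the four received blocks to peel off $\mathbf{u}_3$, then obtain two noisy copies of $\mathbf{u}_2$, two of $\mathbf{u}_1$, and four of $\mathbf{u}_0$, decoding correctly under fewer than $d_3/2$, $d_2$, $d_1$, and $2d_0$ errors respectively.

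The substantive part is the chain $\mathcal{C}_3\subset\mathcal{C}_2=\mathcal{C}_1\subset\mathcal{C}_0$, the middle equality being the definition. For the two inclusions I would use the recursive description $\mathcal{R}(r,m)=\{|\mathbf{a}|\mathbf{a}+\mathbf{b}|:\mathbf{a}\in\mathcal{R}(r,m-1),\ \mathbf{b}\in\mathcal{R}(r-1,m-1)\}$ together with the monotonicity $\mathcal{R}(r,m')\subseteq\mathcal{R}(r',m')$ for $r\le r'$. I would first observe that the juxtaposition $\mathcal{C}_1=|\mathcal{R}(\nu-2,2\nu-3)|\mathcal{R}(\nu-2,2\nu-3)|$ coincides with the Plotkin construction whose two component codes are both $\mathcal{R}(\nu-2,2\nu-3)$, since $|\mathbf{a}|\mathbf{b}|=|\mathbf{a}|\mathbf{a}+(\mathbf{a}+\mathbf{b})|$ and $\mathbf{a}+\mathbf{b}$ again ranges over $\mathcal{R}(\nu-2,2\nu-3)$. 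Then $\mathcal{C}_3=\mathcal{R}(\nu-2-\ell,2\nu-2)$, written as $|\mathbf{w}_0|\mathbf{w}_0+\mathbf{w}_1|$ with $\mathbf{w}_0\in\mathcal{R}(\nu-2-\ell,2\nu-3)$ and $\mathbf{w}_1\in\mathcal{R}(\nu-3-\ell,2\nu-3)$, lies in $\mathcal{C}_1$ because $\nu-2-\ell\le\nu-2$ and $\nu-3-\ell\le\nu-2$; and $\mathcal{C}_1$, written as $|\mathbf{u}_0|\mathbf{u}_0+\mathbf{u}_1|$ with $\mathbf{u}_0,\mathbf{u}_1\in\mathcal{R}(\nu-2,2\nu-3)$, lies in $\mathcal{C}_0=\mathcal{R}(\nu-1+\ell,2\nu-2)=\{|\mathbf{v}_0|\mathbf{v}_0+\mathbf{v}_1|:\mathbf{v}_0\in\mathcal{R}(\nu-1+\ell,2\nu-3),\ \mathbf{v}_1\in\mathcal{R}(\nu-2+\ell,2\nu-3)\}$ because $\nu-2\le\nu-2+\ell\le\nu-1+\ell$ for $\ell\ge0$. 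Strictness follows from a dimension count: $\mathcal{C}_0=\mathcal{R}(\nu-1+\ell,2\nu-2)$ has order at least $(2\nu-2)/2$ and hence dimension exceeding half its length, so $k_0>2^{2\nu-3}=k_1$, and then $k_3=2^{2\nu-2}-k_0<2^{2\nu-3}=k_1$, so both inclusions are proper.

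The main obstacle I expect is the index bookkeeping in the last paragraph: one must keep the order and length parameters in the nested recursive RM decompositions aligned, correctly reinterpret the juxtaposition $\mathcal{C}_1$ as a Plotkin construction, and verify that the degenerate boundary cases still obey all order inequalities — in particular $\ell=\nu-2$, where $\mathcal{C}_3$ degenerates to a repetition code and $\mathbf{w}_1$ is forced to be the zero vector, and $\ell=0$, where $\mathcal{C}_0$ has order exactly $(2\nu-2)/2$ so that the strict-dimension argument must be invoked with care. Nothing here is deep, but it is the only place where an off-by-one could quietly invalidate the statement.
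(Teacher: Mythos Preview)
Your proof is correct and follows essentially the same route as the paper: compute length and dimension from duality and the half-rate property, obtain the minimum distance by two nested applications of Theorem~\ref{uuplusvrm} (the paper instead cites the equivalent decoder argument from the proof of Theorem~\ref{rmratehalf}), and verify the subcode chain by rewriting each code as a Plotkin construction over $\mathcal{R}(\cdot,2\nu-3)$ and comparing orders. Your explicit reinterpretation of the juxtaposition $|\mathbf{a}|\mathbf{b}|$ as the Plotkin construction with both components $\mathcal{R}(\nu-2,2\nu-3)$ and your dimension-based strictness check are tidier than the paper's sketch, but the overall argument is the same.
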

\begin{proof}
The length is $4 \cdot 2^{2 \nu - 2} = 2^{2 \nu }$ because the length of the four used codes 
	is $ 2^{2 \nu - 2}$. 
Independent of the choice of $\ell$ the codes $\mathcal C_0$ and $\mathcal C_3$ are dual to each other 
	which means that the sum of their dimensions equals the length which is  
$ 2^{2 \nu - 2}$.
	For the half-rate code the dimension is $ 2^{2 \nu - 4}$ and because two times two code words are concatenated
	the sum of the dimensions is $ 2^{2 \nu - 2}$.
	Therefore, the dimension of the double Plotkin construction is $ 2^{2 \nu - 1}$.
For the minimum distance the same arguments hold as in  the proof of Th. \ref{rmratehalf}.
Note, that $d_1 = d_2$ is the minimum distance of the half-rate code $\mathcal R (\nu -2, 2\nu-3)$.

To prove the subcode structure we desribe  $\mathcal C_3= \mathcal R (\nu -2 - \ell, 2\nu -2)$
	as Plotkin construction 
$|\mathbf v_0 |\mathbf v_0 + \mathbf v_1|$ where $\mathbf v_0 \in  \mathcal R (\nu -2 - \ell, 2\nu-3)$ 
and $\mathbf v_1 \in \mathcal R (\nu -3 - \ell, 2\nu - 3)$. 
The code $\mathcal C_1 = \mathcal C_2$ 
consists of the code words $|\mathbf u_0|\mathbf u_0 |$ where $\mathbf u_0 \in  \mathcal R (\nu -2 , 2\nu-3)$.
	Since $\mathcal R (\nu -2 - \ell, 2\nu - 3) \subset \mathcal R (\nu -2, 2\nu-3)$ it follows that 
 $\mathcal C_3 \subset \mathcal C_2 | \mathcal C_2$.
	Now we desribe  $\mathcal C_0= \mathcal R (\nu -1 + \ell, 2\nu -2)$
	as Plotkin construction 
$|\mathbf v_0 |\mathbf v_0 + \mathbf v_1|$ where $\mathbf v_0 \in  \mathcal R (\nu -1 + \ell, 2\nu-3)$ and 
	$\mathbf v_1 \in \mathcal R (\nu -2+ \ell, 2\nu - 3)$. 
Since $ \mathcal R (\nu -2, 2\nu-3) \subset \mathcal R (\nu -1 + \ell, 2\nu - 3)$ it follows that 
 $\mathcal C_1 \subset \mathcal C_0$.
\end{proof}

For the half-rate codes according to Th. \ref{rmratehalf} and Th. \ref{rmratehalfII}
all $15$ decoding variants using different hidden code words for starting the decoding
can be applied. Also, recursive decoding is possible which will be described in next section.

\section{Performance of Recursive Plotkin Constructions}\label{sec:recurplot}
The main question is if the statistical properties of the join and add operations
are also valid if we apply the double Plotkin construction twice.
To show this we start with a half-rate code 
and afterwards we simulate the performance of a low and a high rate RM code.
\subsection{Half-Rate Codes}
We consider the half-rate RM code $\mathcal R(3,7) = (128,64,16)$  from Ex. \ref{ex:r37}
which is a double Plotkin construction with the four codes   
	$\mathcal C_0 = \mathcal R(3,5) = (32,26,4)$, 
$\mathcal C_1 = \mathcal C_2 =  \mathcal R(2,5) = (32,16,8)$, 
and $\mathcal C_3 = \mathcal R(1,5) = (32,6,16)$.
For decoding we use the variants of Sec.
\ref{sec:startc3} and Sec. \ref{sec:starteizw} in which we have to decode
hidden code words according to 
the codes $\mathcal C_1 = \mathcal C_2$ and   
$\mathcal C_3$.
For the decoding of the hidden code words of $\mathcal C_1 = \mathcal C_2$ we consider 
the code word again as a double Plotkin construction
with the codes $(8,7,2)$, $(8,4,4)$, and $(8,1,8)$ according to Ex. \ref{ex:recursive}. 
We use 
 all six variants 
 V$(\mathbf y_0 \Join \mathbf y_1)$, V$(\mathbf y_0 \Join \mathbf y_2)$ 
 up to  V$(\mathbf y_2 \Join \mathbf y_3)$ from Sec. \ref{sec:c12r25}
 which start with the decoding of the $(8,4,4)$ code.
 Additionally, we use two variants  
  V$(4, \mathbf y_0 \Join \mathbf y_2)$ and   V$(4, \mathbf y_0 \Join \mathbf y_1)$ with $L=2$ 
  from Sec. \ref{sec:r25c123}
  which starts with the $(8,1,8)$ repetition code.
 According to Fig. \ref{fig:r25-mixed} this decoder has ML performance.

  A list decoder for $\mathcal C_1 = \mathcal C_2$ uses the codes $(16,5,8)$  
  and $(16,11,4)$ of the Plotkin construction and creates a list for the $(16,5,8)$ code
  which needs $8$ correlations using Lemma \ref{mldecr1m}.
  With any element of this list the codes  
  $\mathcal C'_2=(8,4,4)$ and $\mathcal C'_3=(8,1,8)$ of the double Plotkin construction
  are known. Then, an add-join operation can be done
  which is used to calculate a list with four entries for the $\mathcal C'_1=(8,4,4)$.
  For each element of this list $\mathcal C'_0=(8,7,2)$ is calculated by an add-four operation.
  We have then $32$ code words of the code $\mathcal C_1 = \mathcal C_2$ and choose the 
  $L$ best which is determined by their correlation.

The decoding of $\mathcal C_3(32,6,16)$ is done with an ML decoder according  to Lemma \ref{mldecr1m}
and the $16$ correlation values can also be used to calculate a list of the best decoding decisions.
The extended Hamming code $\mathcal C_0(32,26,4)$ is considered as a double Plotkin construction
with the codes $(8,8,1)$, $(8,7,2)$, and $(8,4,4)$
and decoded with six variants from Sec. \ref{sec:c12r25}.

%\begin{figure}[htb]
%	\begin{center}
%	\includegraphics[width=0.65\textwidth]{wer-cdrei-37.tikz}
%	\end{center}
%	\caption{$R(3,7)$ WER for Variants decoding $\mathcal C_3$ first}\label{fig:c3ar37}
%\end{figure}

\begin{figure}[htb]
	\begin{center}
	\includegraphics[width=0.65\textwidth]{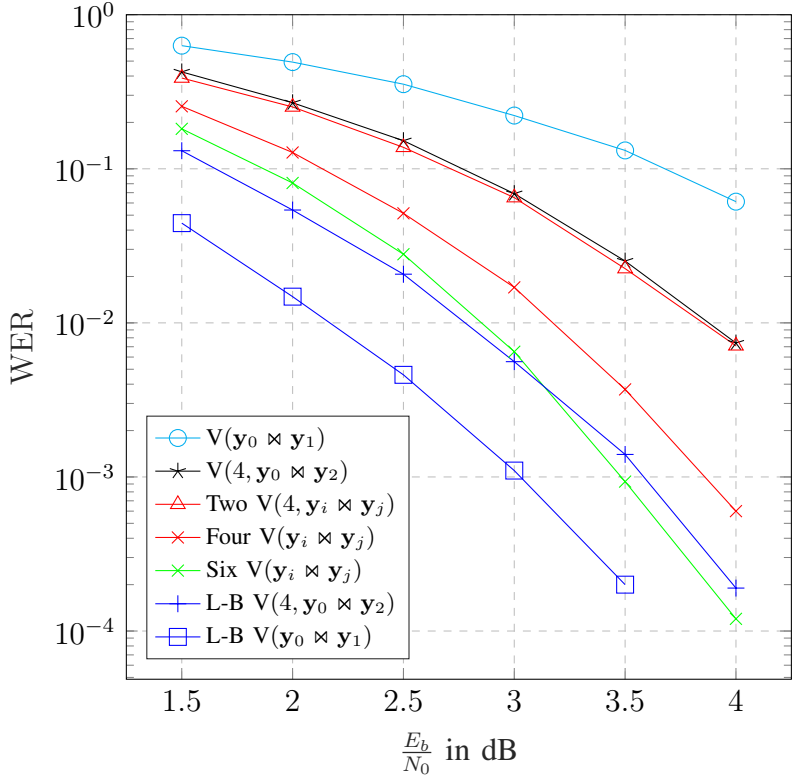}
	\end{center}
	\caption{$\mathcal R(3,7)$ WER for Variants Decoding $\mathcal C_1$, $\mathcal C_2$, or $\mathcal C_3$ First}\label{fig:ar37}
\end{figure}

%\subsubsection{Variants and Lists}
In Fig. \ref{fig:ar37} the simulation results of recursive decoding are shown 
without list decoding in the first step.
Also, here the variant V$(\mathbf y_0 \Join \mathbf y_1)$ which decodes $\mathcal C_1$
in the first step has about one dB worse performance at a WER of $10^{-1}$
than the variant V$(4, \mathbf y_0 \Join \mathbf y_2)$  which decodes $\mathcal C_3$ first.
Using additionally the variants V$(4, \mathbf y_0 \Join \mathbf y_1)$
and deciding for the decision with the larger correlation
gives almost no improvement. 
Adding the third variant  V$(4, \mathbf y_1 \Join \mathbf y_2)$ would
give no visible improvement of the performance.
This was already observed in the decoding of
the $\mathcal R (2,5)$ code and is due to the fact that these variants decode in the first step the same
join-four combination which dominates the performance. 
Taking two variants V$(\mathbf y_0 \Join \mathbf y_1)$ and V$(\mathbf y_2 \Join \mathbf y_3)$
from Sec. \ref{sec:c12r25} shows a performance very close to the two variants 
V$(4, \mathbf y_0 \Join \mathbf y_2)$  and V$(4, \mathbf y_0 \Join \mathbf y_1)$ 
(the curve is not plotted in Fig. \ref{fig:ar37}).
Therefore, the performance difference of one variant V$(\mathbf y_0 \Join \mathbf y_1)$  
and two variants V$(\mathbf y_0 \Join \mathbf y_1)$ and V$(\mathbf y_2 \Join \mathbf y_3)$
and deciding for the result with larger correlation 
is  significant and amounts to about $1$ dB  at a WER of $10^{-1}$. 
As can be seen in Fig. \ref{fig:ar37}
using four variants, 
additionally V$(\mathbf y_0 \Join \mathbf y_2)$ and V$(\mathbf y_1 \Join \mathbf y_3)$, 
the performance gain is about $1.5$ dB compared to one variant only.
The gain when using six variants is almost $2$ dB at a WER of $5 \cdot 10^{-2}$ compared to one variant.
An observation in Fig. \ref{fig:ar37}
is that the performance curves show a steeper slope the more variants are used.
These results confirm the considerations according to Eq. \eqref{birthapprox}  and Lemma \ref{genialparadox}
which predict that within the six different hidden code words
there is at least one which can be decoded correctly with high probability.
When the decoding of the first step is correct the remaining three decoding steps
benefit from error cancellation as shown in Sec. \ref{sec:cancel}.
Using a list decoder in the first decoding step increases the probability that
the correct code word  of the first step is in the list.
In order to estimate the possible gain when in the first decoding step 
a list decoder is used, we introduce a list decoding bound.
\begin{lemma}[List Decoding Bound, L-Bound]\label{lbound}
The L-Bound is the performance of a variant 
	V$(4, \mathbf y_i \Join \mathbf y_j)$ or V$(\mathbf y_i \Join \mathbf y_j)$
assuming that the code word of the first decoding step is known. The remaining three code words
	are decoded according the used variant.
The decoding of a variant with a list decoder in the first decoding step 
	cannot perform  better than the L-Bound. 
\end{lemma}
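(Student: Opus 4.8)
The plan is to prove the stronger set-theoretic statement that, at any fixed signal-to-noise ratio, the set of received vectors $\mathbf{y}$ for which the list-based variant decodes correctly is contained in the set of $\mathbf{y}$ for which the genie-aided decoder of Lemma~\ref{lbound} (the L-Bound) decodes correctly; the asserted inequality of word error rates then follows at once by passing to probabilities. Intuitively, the two decoders differ only in that the L-Bound is handed the correct first-step hidden code word instead of having to locate it in a list and then select among the resulting completions by correlation, while both perform the \emph{same} three subsequent decoding steps; so the genie cannot do worse. To make this precise, fix one of the variants V$(4,\mathbf{y}_i\Join\mathbf{y}_j)$ or V$(\mathbf{y}_i\Join\mathbf{y}_j)$, let $\mathbf{x}_t$ be the transmitted code word, and let $\mathbf{h}^\star$ denote the correct value of the hidden code word that the first decoding step is responsible for (for instance $\mathbf{x}_3$ for the V$(4,\cdot)$ variants, and $\mathbf{x}_1$ or a product such as $\mathbf{x}_1\mathbf{x}_2\mathbf{x}_3$ for the others, as listed in Sec.~\ref{sec:startc3} and Sec.~\ref{sec:starteizw}).

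First I would record the precise behaviour of the list variant: its first step replaces $\Delta$ by $\Lambda$ and returns candidates $\hat{\mathbf{h}}_1,\dots,\hat{\mathbf{h}}_L$; for each candidate the remaining three decoding steps are executed exactly as prescribed by the variant, yielding complete code words $\hat{\mathbf{x}}^{(1)},\dots,\hat{\mathbf{x}}^{(L)}$; and the output is the $\hat{\mathbf{x}}^{(i)}$ of largest correlation $\Phi(\hat{\mathbf{x}}^{(i)},\mathbf{y})$. By definition the L-Bound runs those same three remaining steps exactly once, started from $\mathbf{h}^\star$; call the resulting code word $\hat{\mathbf{x}}_L$.

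The decisive step, and the one requiring care, is the structural observation that $\hat{\mathbf{x}}^{(i)}=\mathbf{x}_t$ can hold \emph{only if} $\hat{\mathbf{h}}_i=\mathbf{h}^\star$. Indeed, by Lemma~\ref{subcodedouble} the hidden code word treated in the first step is a fixed sum of the four blocks of the assembled code word, and none of the three subsequent steps ever re-decodes it; hence the first-step candidate, or each component $\hat{\mathbf{x}}_\ell$ read off from it without further decoding, is carried over verbatim into $\hat{\mathbf{x}}^{(i)}$. Consequently, if $\hat{\mathbf{x}}^{(i)}$ coincided with $\mathbf{x}_t$, the corresponding block combination would force $\hat{\mathbf{h}}_i=\mathbf{h}^\star$; in particular at most one candidate — the correct one, if present — can produce $\mathbf{x}_t$. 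I would verify this against the explicit variant descriptions one by one; the verification is routine, since in every listed case the first-step quantity ($\hat{\mathbf{x}}_3$, $\hat{\mathbf{x}}_1$, $\hat{\mathbf{x}}_1\hat{\mathbf{x}}_2$, or $\hat{\mathbf{x}}_1\hat{\mathbf{x}}_2\hat{\mathbf{x}}_3$) is used unchanged when assembling $\hat{\mathbf{x}}$. This bookkeeping across all variants, rather than any analytic difficulty, is the main obstacle.

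With this the argument closes immediately. Suppose the list variant is correct, so its selected $\hat{\mathbf{x}}^{(i)}$ equals $\mathbf{x}_t$. By the structural observation $\hat{\mathbf{h}}_i=\mathbf{h}^\star$, so $\mathbf{h}^\star$ lies in the first-step list; and because the remaining three steps are identical in the two decoders, the code word assembled from $\mathbf{h}^\star$ is exactly $\hat{\mathbf{x}}_L$. Hence $\hat{\mathbf{x}}_L=\mathbf{x}_t$, i.e.\ the L-Bound is also correct, which gives the desired containment and therefore that the word error rate of the list variant is at least that of the L-Bound. The containment may be strict: even when $\mathbf{h}^\star$ is in the list, the correlation maximisation can prefer some $\hat{\mathbf{x}}^{(i)}\neq\mathbf{x}_t$ of larger correlation, so the list variant can be strictly worse than the L-Bound — which is precisely why it is a bound rather than an achievable performance.
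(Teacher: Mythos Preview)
Your argument is correct and follows the same core idea as the paper's two-sentence proof: overall correct decoding is impossible unless the true first-step hidden code word is available, so the genie-aided variant can only do at least as well. Your version is considerably more careful---you make the structural observation explicit and handle the correlation-based selection among list candidates, which the paper leaves implicit---but the approach is the same.
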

\begin{proof}
	If the transmitted code word is not in the list
of the decoder of the first decoding step, correct decoding is impossible.
However, correct decoding might be possible if the transmitted code word of the first decoding step is known.
\end{proof}

It can be seen in Fig. \ref{fig:ar37} that a gain of more than one dB at a WER of $10^{-1}$ 
is possible according to the L.Bound from Lemma \ref{lbound}. Therefore, for variant V$(4, \mathbf y_0 \Join \mathbf y_2)$ with list decoding the performance cannot be better.
For the list decoding of V$(\mathbf y_0 \Join \mathbf y_1)$ a gain of more than $2.5$ dB at a WER of $10^{-1}$ 
is possible.
Note that there is a significant difference of more than  one dB  of the gain for the L-Bound
between the variants from Sec. \ref{sec:startc3} starting with  $\mathcal C_3$   and those from 
\ref{sec:starteizw}  starting with  $\mathcal C_1 = \mathcal C_2$.
This can be explained as follows. 
If the first decoding step for $\mathcal C_3$ decodes correctly the second decoding step uses
the join-add operation for decoding  $\mathcal C_2$,
whereas if the first decoding step for $\mathcal C_1$ decodes correctly the 
second decoding step uses the join-two operation for decoding  $\mathcal C_3$
which is designed for a join-four operation.  
Therefore, the probability that the second decoding step decodes correctly is larger 
in case when we start with a correct code word of $\mathcal C_1$.
The third and the fourth decoding step are in principal the same for both cases.
The add-join operation is used for decoding $\mathcal C_1$ or $\mathcal C_2$
and the add-four operation  
 provides a channel with $6$ dB gain for decoding $\mathcal C_0$.
The crossing of the performance curve of six variants and the L-Bound for variant V$(4, \mathbf y_0 \Join \mathbf y_2)$
in Fig. \ref{fig:ar37} supports the fact that the more variants are used the steeper the curve gets.

\begin{figure}[htb]
	\begin{center}
	\includegraphics[width=0.65\textwidth]{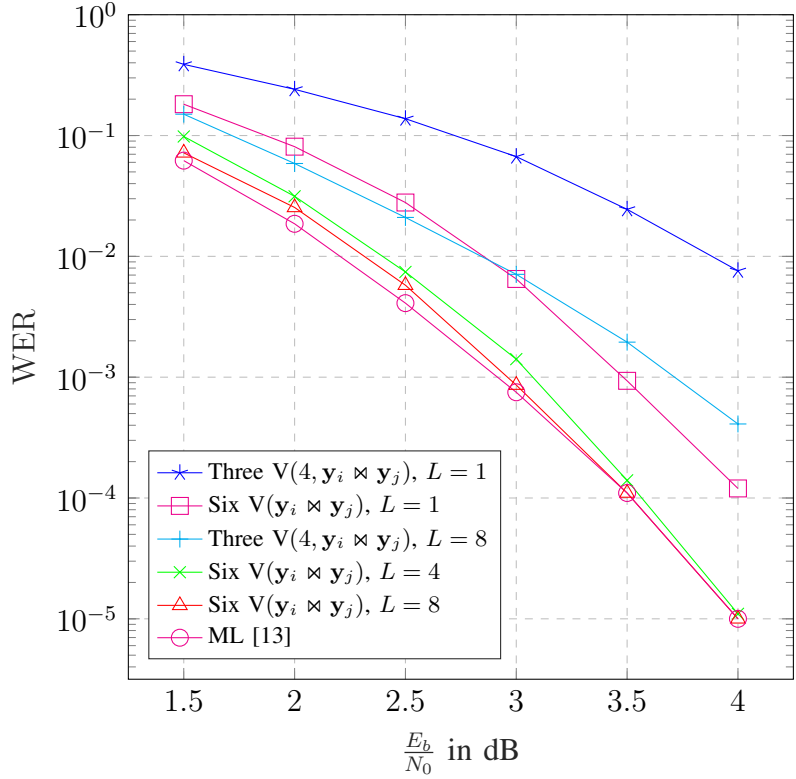}
	\end{center}
	\caption{$\mathcal R(3,7)$ WER for List Decoding in the First Step}\label{fig:br37}
\end{figure}

In  Fig. \ref{fig:br37} the simulation results for list decoding in the first decoding step are shown.
It can be seen that using list size $L=8$ and the three variants V$(4, \mathbf y_i \Join \mathbf y_j)$
a gain of a little more than one dB at a WER of $10^{-2}$ is achieved compared to $L=1$.
This fits well to the prediction by the L-Bound (Lemma \ref{lbound}) in Fig. \ref{fig:ar37}. 
The performance of the six variants V$(\mathbf y_i \Join \mathbf y_j)$
without list ($L=1$) has a larger slope and is crossing the curve of V$(4, \mathbf y_i \Join \mathbf y_j)$
with $L=8$ at about $3$ dB. 
Note that the six variants consist of six decoders while the other three variants with $L=8$ need $24$ decoders.
The explanation is that the probability that within the six different join-two
is at least one case which can be decoded correctly is large.
In addition, if the first decoding is correct the L-Bound in Fig. \ref{fig:ar37}
predicts that the remaining steps are decoded correctly. 
The curves of six variants with list sizes  $L=4$ and $L=8$ are approaching the ML performance.
It can be observed that the curves are getting closer to the ML performance
for a better channel. Both curves are nearly identical to ML at $4$ dB and the $L=8$
case already at $3.5$ dB which shows that
the performance difference between  $L=4$ and  $L=8$ is vanishing for better channels. 
We could combine the six variants V$(\mathbf y_i \Join \mathbf y_j)$ with  $L=8$ 
and three variants V$(4, \mathbf y_i \Join \mathbf y_j)$ with $L=8$ 
which would result in a performance curve even closer to the ML curve but would also increase the complexity. 

%\subsubsection{Decoding Complexity}
The complexity of the decoding follows straight forwardly from 
the complexities derived in Sec. \ref{sec:complexity}
and in the following we give some examples.
The decoding starting with  $\mathcal C_3$  needs the
 operations one join-four,
one join-add, one add-join, and one add-four, which
according to Table \ref{table:compjoinadd}  needs
$9n$ sign operations, $6n$ comparisons, and $6n$ additions
or $12n$ \^a\^c operations. Since $n=32$ the complexity is $384$ \^a\^c operations.
In addition, we need to decode one time $\mathcal C_3$ which according to Lemma \ref{mldoplr1m}
takes $256$ \^a\^c operations.
Then, the decoding of $\mathcal C_1$ and $\mathcal C_2$
using six variants takes $887$ \^a\^c operations each as derived in Sec. \ref{sec:complexity}.
We assume that $\mathcal C_0$ takes also $887$ \^a\^c operations
even though it takes much less.
Thus, the complexity for the decoders is $256 + 3 \cdot 887 = 2917$ \^a\^c operations.
All together we need $384 + 2917 = 3301$ \^a\^c operations.
Any variant from Sec. \ref{sec:starteizw} without list
needs two join-two, one add-join, and one add-four, thus
 $8n$ \^a\^c operations. Since $n=32$ the complexity is $256$ \^a\^c operations.
For the decoders the complexity is the same as before and therefore,
 we need $256 + 2917 =  3173$ \^a\^c operations which is a little less than the variant above.

\begin{figure}[htb]
	\begin{center}
	\includegraphics[width=0.65\textwidth]{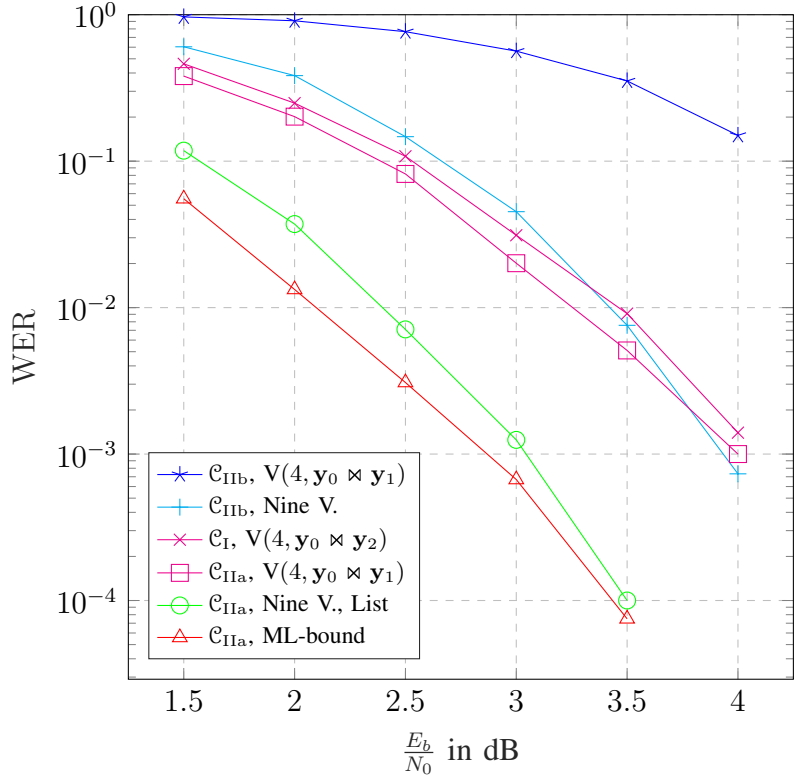}
	\end{center}
	\caption{WER for Three Codes $\mathcal C (256,128, 16)$}\label{fig:256}
\end{figure}

In the following example we will simulate 
 half-rate codes which are not RM codes, namely,
three $\mathcal C (256, 128, 16)$ codes which are
constructed according to Th. \ref{rmratehalf} with $\nu =4$ and $\ell = 1$
and Th. \ref{rmratehalfII} with $\nu =4$ and $\ell =0, 1$.  
For the first case with $\ell =1$ we obtain the code $\mathcal C_{\mathrm{I}} (256, 128, 16)$ using 
a double Plotkin construction with the four codes
$\mathcal C_0 = \mathcal R(4,6) = (64,57,4)$, 
$\mathcal C_1 = \mathcal R(3,6) = (64,42,8)$, 
$\mathcal C_2 = \mathcal R(2,6) = (64,22,16)$, 
and $\mathcal C_3 = \mathcal R(1,6) = (64,7,32)$.
For the second case with $\ell = 0$ we get
the code $\mathcal C_{\mathrm{IIa}} (256, 128, 16)$ 
$\mathcal C_0 = \mathcal R(4,6) = (64,57,4)$, 
$\mathcal C_1 = \mathcal C_2 =  |\mathcal R(2,5)|\mathcal R(2,5)|$  
where  $\mathcal R(2,5) = (32,16,8)$, 
and $\mathcal C_3 = \mathcal R(1,6) = (64,7,32)$.
For $\ell =1$ the code $\mathcal C_{\mathrm{IIb}} (256, 128, 16)$ uses
$\mathcal C_0 = \mathcal R(3,6) = (64,42,8)$  
and $\mathcal C_3 = \mathcal R(2,6) = (64,22,16)$.
The codes
$\mathcal C_1 = \mathcal C_2 $ are the same as for  $\mathcal C_{\mathrm{IIa}}$. 
Note that $\ell =2$ would lead to a minimum distance of $8$ and is not considered. 

For the decoding of the codes  $\mathcal R(2,6)$, $\mathcal R(3,6)$,
and $\mathcal R(4,6)$ we use double Plotkin constructions
with $\mathcal R(r,4)$ codes.
The list decoding for the code $ |\mathcal R(2,5)|\mathcal R(2,5)|$
is done as follows. Each half is list decoded
seperately by a list decoder for $\mathcal R(2,5)$ which was described at the begin of this section. 
The best code word of the list for the $ |\mathcal R(2,5)|\mathcal R(2,5)|$ code
is the best of the first list  combined with the best of the second list.
The second best is the combination
of the best of the first list with the second best of the second list.
The third best is the combination
of the second best of the first list with the best of the second list and so on
until listsize $L$.
Note that this list decoder does not give a list with code words where the 
code words are sorted according to their probabilities.

In Fig. \ref{fig:256} the related simulation results are shown. 
Using only variant  V$(4, \mathbf y_0 \Join \mathbf y_1)$ without list the performance 
of code $\mathcal C_{\mathrm{IIb}}$ is almost two dB worse than for code $\mathcal C_{\mathrm{IIa}}$ at a WER of $10^{-1}$.
The explanation is that the join-four creates an average error rate which
is about three times larger than the one in the channel.
The code $\mathcal C_{3}$ of $\mathcal C_{\mathrm{IIa}}$ has minimum distance
$32$ while the one for
$\mathcal C_{\mathrm{IIb}}$ has only $16$ and
if the first decoding step is wrong the following three steps are also wrong with 
very large probability.
The variant V$(4, \mathbf y_0 \Join \mathbf y_2)$ without list 
for code $\mathcal C_{\mathrm{I}}$ has a slightly worse performance than $\mathcal C_{\mathrm{IIa}}$.
The explanation is that the code $\mathcal C_{1}$ in $\mathcal C_{\mathrm{I}}$ has only minimum distance $8$
while the concatenated code in $\mathcal C_{\mathrm{IIa}}$ has distance $8$ in both halfs.
Using all nine variants without list decoding 
the performance improvement is about $1.5$ dB for $\mathcal C_{\mathrm{IIa}}$.
The improvement  using nine variants for decoding of $\mathcal C_{\mathrm{IIb}}$ 
is only  $0.5$ dB  at a WER of $10^{-1}$ and we do not plot this curve.
 Also, here it can be observed that the slope of the curves with more variants is steeper.
 Even the code with the worst performance is crossing with nine variants the curves
of $\mathcal C_{\mathrm{I}}$  and $\mathcal C_{\mathrm{IIa}}$ with one variant only 
 which leads to the crossing of the curves of V$(4, \mathbf y_0 \Join \mathbf y_1)$ 
for  $\mathcal C_{\mathrm{I}}$ and the one for nine variants for $\mathcal C_{\mathrm{IIb}}$.
Using all nine variants for $\mathcal C_{\mathrm{IIa}}$ with list decoding $L=16$ aproaches the ML bound
which means the performance is close to the ML performance.
The performance improvement using list decoding is larger for code $\mathcal C_{\mathrm{IIa}}$ 
than for $\mathcal C_{\mathrm{I}}$. The explanation is again the difference in the $\mathcal C_{1}$ 
of the two constructions.

\begin{figure}[htb]
	\begin{center}
	\includegraphics[width=0.65\textwidth]{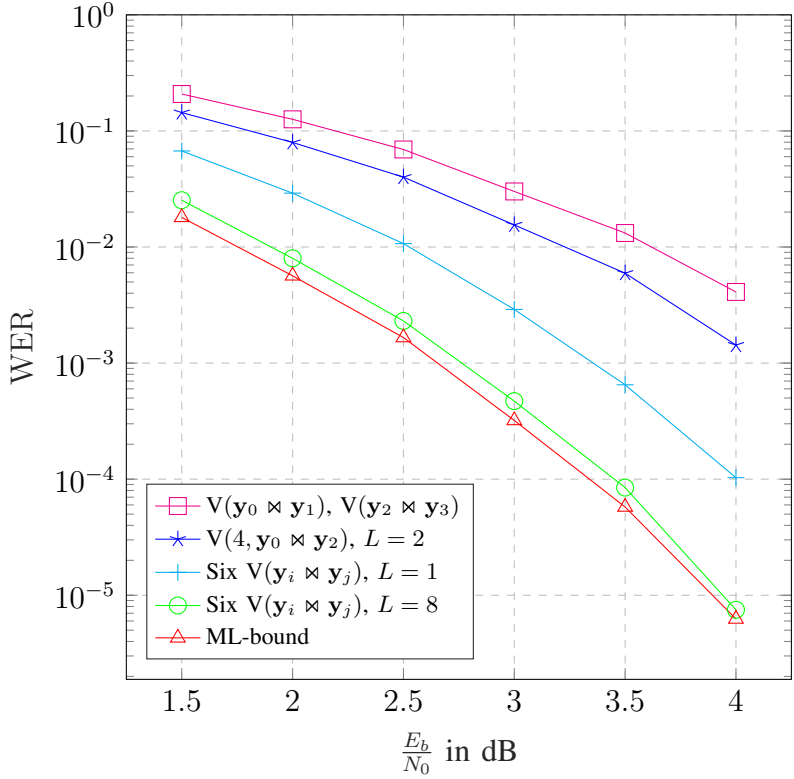}
	\end{center}
	\caption{$R(2,7)$ WER for Different Variants}\label{fig:low27}
\end{figure}

\subsection{Low-Rate Code}
In this section we will show that the novel decoding works well for low-rate codes.
For a low-rate code
we simulate the performance of the $\mathcal R(2,7) = (128,29,32)$ code 
which is a double Plotkin construction with the four codes   
	$\mathcal C_0 = \mathcal R(2,5) = (32,16,8)$, 
$\mathcal C_1 = \mathcal C_2 =  \mathcal R(1,5) = (32,6,16)$, 
and $\mathcal C_3 = \mathcal R(0,5) = (32,1,32)$.
For decoding we use the variants of Sec.
\ref{sec:startc3} and Sec. \ref{sec:starteizw} in which we have to decode
hidden code words according to 
the codes  $\mathcal C_1 = \mathcal C_2$ and   
$\mathcal C_3$.
For the decoding of the hidden code words of $\mathcal C_1 = \mathcal C_2(32,6,16)$ we use 
 an ML decoder according to Lemma \ref{mldecr1m}.
 The $16$ correlation values can also be used to calculate a list of the $L$ best decisions.
The code $\mathcal C_0$ is considered  as a double Plotkin construction
with the codes $(8,7,2)$, $(8,4,4)$, and $(8,1,8)$. 
For this decoding we use 
 all six variants 
 V$(\mathbf y_0 \Join \mathbf y_1)$, V$(\mathbf y_0 \Join \mathbf y_2)$ 
 up to  V$(\mathbf y_2 \Join \mathbf y_3)$ from Sec. \ref{sec:c12r25}.
The decoding of the repetition code $\mathcal C_3$ is done by correlation.
Fig. \ref{fig:low27} shows the simulation results. 
The performance of V$(4, \mathbf y_0 \Join \mathbf y_2)$ with $L=2$
is better than two variants V$(\mathbf y_0 \Join \mathbf y_1)$ and V$(\mathbf y_2 \Join \mathbf y_3)$.
Six variants V$(\mathbf y_i \Join \mathbf y_j)$ have a gain of more than 
 one dB at a WER of $10^{-2}$ compared to two variants.
These six variants with list size $L=8$ in the first decoding step
have a gain of more than half a dB at a 
WER of $10^{-2}$ compared to list size $L=1$ and 
show a performance close to the ML-Bound.
Again, the performance curve for six variants and listsize $L=8$ 
could be improved by
adding the three variants starting with $\mathcal C_3$ and listsize $L=2$.

\begin{figure}[htb]
	\begin{center}
	\includegraphics[width=0.65\textwidth]{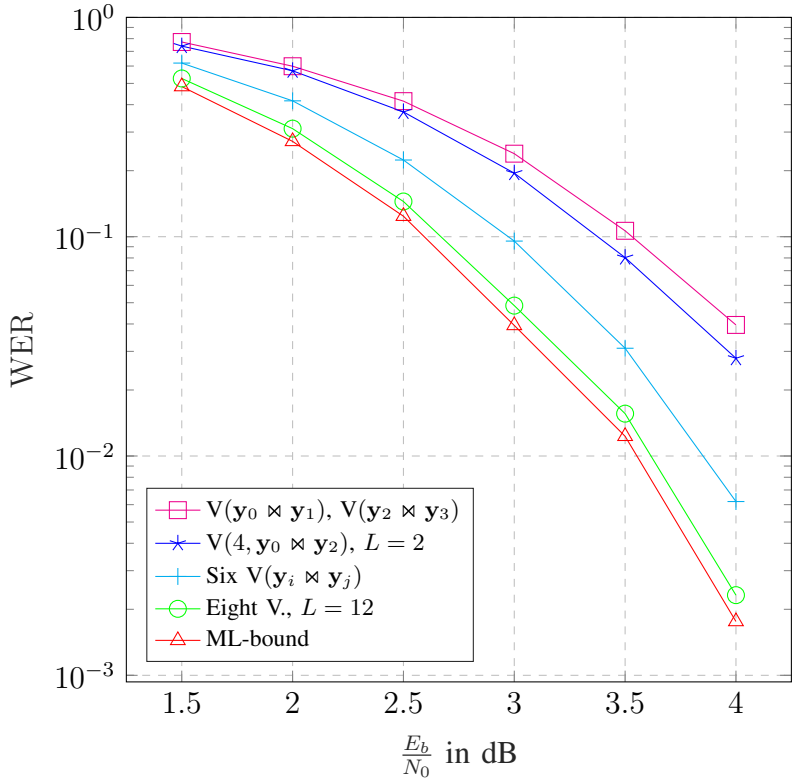}
	\end{center}
	\caption{$R(4,7)$ WER for Different Variants}\label{fig:high47}
\end{figure}

\subsection{High-Rate Code}
For a high-rate code
we simulate the performance of the $\mathcal R(4,7) = (128,99,8)$ code 
which is a double Plotkin construction with the four codes   
	$\mathcal C_0 = \mathcal R(4,5) = (32,31,2)$, 
$\mathcal C_1 = \mathcal C_2 =  \mathcal R(3,5) = (32,26,4)$, 
and $\mathcal C_3 = \mathcal R(2,5) = (32,16,8)$.
For decoding we use the variants of Sec.
\ref{sec:startc3} and Sec. \ref{sec:starteizw} in which we have to decode
hidden code words according to 
the codes  $\mathcal C_1 = \mathcal C_2$ and   
$\mathcal C_3$.
The decoder for $\mathcal C_3$ is the same as in the low-rate case 
and a list decoder is based on a list decoder for the simplex code
$\mathcal R(1,4) = (16,5,8)$.
For the decoding of the hidden code words of $\mathcal C_1 = \mathcal C_2(32,26,4)$ we consider 
it as a double Plotkin construction
with the codes $(8,8,1)$, $(8,7,2)$, and $(8,4,4)$. 
A list decoder uses lists for the $(8,7,2)$ code in the variants
 V$(\mathbf y_0 \Join \mathbf y_1)$ and V$(\mathbf y_2 \Join \mathbf y_3)$ and
 for each of the list elements the remaining three decoding steps are applied to calculate a code word
 of the $(32,26,4)$ code.
The code $\mathcal C_0$ is a parity-check code.
The simulation results are shown in Fig. \ref{fig:high47}.
The performance of  V$(4, \mathbf y_0 \Join \mathbf y_2)$ with $L=2$
has a better performance than 
the two variants V$(\mathbf y_0 \Join \mathbf y_1)$ and V$(\mathbf y_2 \Join \mathbf y_3)$. 
Using six variants has about $0.5$ dB better performance  at a WER of $10^{-1}$ 
compared to two variants.
The performance improvements using additional variants for decoding are smaller
than for low-rate codes. 
Because of the relatively small minimum distance the possible performance gain of ML decoding 
compared to a simple decoder using two variants is less than one dB 
while this gain was almost two dB in case of the low-rate code.
Using the six variants and a listsize $L=12$ for the first decoding step
and additionally two variants V$(4, \mathbf y_0 \Join \mathbf y_2)$ V$(4, \mathbf y_0 \Join \mathbf y_1)$,
also with a listsize $L=12$, gives the curve (named Eight V. $L=12$) in  Fig. \ref{fig:high47}
which is close to the ML-Bound.

\section{Conclusions}
We have described a novel decoding strategy of recursive Plotkin constructions which uses variants
which start the decoding with different hidden code words. 
The final decoding decision selects the best out of this variants.
Despite the fact that a single variant showed a performance that is much worse than  classical decoding,
using several variants the performance approches the ML performance.
Since the decoding variants are independent of each other
a tradeoff between performance and complexity is possible by using more or less variants for decoding.
The usage of list decoders in the first decoding step improves also the performance.
We explained this effect by showing that the remaining decoding steps 
benefit from error cancellation and simulated the L-Bounds which confirmed the explanations. 
The statistical properties which are exploited by the novel decoding strategy
holds as well for the recursive application of the decoding variants
as well as for codes of different rates.
Note that the three codes from Sec. \ref{sec:recurplot}
used in a double Plotkin construction would give the RM code $\mathcal R (4,9) = (512, 256, 32)$.
We also showed that constructions which are not RM codes can be decoded with a performance close to ML.

An old result states that any RM code is equivalent to an very particular extended BCH code 
\cite[Ch. 13, Th. 11, p. 383]{sloane}.
The inverse direction is not true and in general
there exist many more BCH codes than RM codes of a certain length with different
dimensions and distances which are nested which means they  posses a subcode structure.
In \cite{Freudenberger} results are given based on BCH codes with a Plotkin construction
but not a double Plotkin construction with decoding variants.
We have only given two examples 
that the use of double Plotkin construction with BCH codes improves the decoding performance
and also the ML performance which means that the constructed codes are better. 
An extensive study with double Plotkin constructions and BCH codes remains an open problem.

A main open question which was not considered is when fixing the complexity
which combination of variants gives the best performance. This should be studied especially
for possible applications.
Even though the decoding complexity is already very low,
another open question is if the exchange of information between the variants can 
reduce the complexity without loss in performance.
For example, the first two decoding steps of V$(\mathbf y_0 \Join \mathbf y_1)$ 
estimate $\mathbf x_1$ and $\mathbf x_3$ and the first decoding step of variant 
 V$(\mathbf y_2 \Join \mathbf y_3)$ estimates  $\mathbf x_1\mathbf x_3$
 and the results may fit or are a contradiction. However, both outcomes
 deliver information. 
 There are many more of such connections of the variants.
 How to use this information and if it is beneficial is an open problem.
Another open question is concerning the list decoding.
It is not studied if the list decoding used could be improved
in the sense that it is not proved if the lists contains 
all most reliable code words. 
However, this is not valid for list decoding of the simplex codes which cannot be improved.
Clearly, the performance cannot be improved if it is close to the ML performance already 
but maybe the decoding complexity can be reduced.

\section*{Acknowledgements}
The author thanks Sebastian Bitzer
for numerous valuable discussions.
The simulations have been done with the help of SageMath \cite{sage}.
\vspace{.2cm}


\begin{thebibliography}{1}


%\bibitem{Shannon1948}
%C. E. Shannon, ``A mathematical theory of communication,''
%\emph{Bell Syst.\ Tech.\ J.}, vol.\ 27, pt.~I, pp.~379--423, 1948;
%     pt.~II, pp.~623--656, 1948.
\bibitem{Muller}
 D.E. Muller,
"Application of boolean algebra in switching circuit design and to error detection", 
IEEE Trans. on Computers, vol. 3, pp. 6-12, 1954. 
 
\bibitem{Reed}
 I.S. Reed, "A class of multiple-error-correcting codes and the decoding scheme"
, 
IEEE Trans. on Inf. Theory, vol. 4, pp. 38-49, 1954. 

\bibitem{Plotkin}
 M. Plotkin, "Binary codes with specific minimum distances"
, 
IEEE Trans. on Inf. Theory, vol. 6, pp. 445-450, 1960. 


\bibitem{BZ}
 E.L. Blokh, V.V. Zyablov, "Coding of generalized cascade codes", 
Problemy Peredachi Informatsii, vol. 10, no. 2,  pp. 45-50, 1974. 


\bibitem{Litsyn}
 S. Litsyn, E. Nemirovski, O. Shekhovtsov, L. Mikhailovskaya,
		"The fast decoding of first order Reed--Muller codes in the Gaussian Channel",
Problems of Control and Information Theory, vol. 14, pp. 189-201, 1985. 

\bibitem{Karyakin}
 Y.D. Karyakin, "Fast Correlation Decoding of Reed--Muller Codes",
Problemy Peredachi Informatsii, vol. 23,  pp. 40-49, 1987. 

\bibitem{Boss95}
 G. Schnabl, M. Bossert, Soft decision decoding of RM codes as generalized multiple concatenated codes
, 
IEEE Trans. on Inf. Theory, vol. 41, pp. 304-308, 1995. 


\bibitem{sloane}
 F.J. MacWilliams, N.J.A. Sloane,
The Theory of Error Correcting Codes, North-Holland, 1977. 


\bibitem{Boss-eng}
 M. Bossert,
Channel Coding for Telecommunications, John Wiley and Sons Ltd., 1999. 

\bibitem{DBoss}
 M. Bossert,
		Kanalcodierung, Teubner, Germany, 1998 (German version of \cite{Boss-eng}). 


\bibitem{Stolte}
N. Stolte, "Recursive codes with the Plotkin construction and their decoding", PhD Dissertation, 
		TH Darmstadt, Germany, 2002.



\bibitem{Dumer}
I. Dumer and K. Shabunov, "Soft-decision decoding of Reed--Muller codes: Recursive lists", IEEE Trans. Inf. Theory, vol. 52, no. 3, pp. 1260-1266, Mar. 2006.

\bibitem{Geiselhart}	 
  M. Geiselhart, A. Elkelesh, M. Ebada, S. Cammerer, S. ten Brink,
  "Automorphism Ensemble Decoding of Reed–Muller Codes", 
  IEEE Trans. on Com.,
   vol.69, no.10, pp. 6424-6438, 2021.

\bibitem{Freudenberger}
D. Bailon, M. Bossert, J.-P. Thiers, J. Freudenberger, "Concatenated codes
based on the Plotkin construction and their soft-input decoding", IEEE Trans. on
Com., vol. 70, no. 5, pp. 2939-2950, 2022.


\bibitem{Kamenev}
M. Kamenev, "Recursive Decoding of Reed--Muller Codes Starting With the Higher-Rate Costituent Codes",
arXiv:2101.11328v3 [cs.IT], 2022.

\bibitem{sage}
The Sage Developers, "SageMath, the Sage Mathematics Software System", 2018

\end{thebibliography}
\end{document}